\newtheorem{theorem}{Theorem}
\newtheorem{lemma}{Lemma}
\newtheorem{proposition}{Proposition}
\newcommand{\bLozenge}{\mathbin{\blacklozenge}}%
\newcommand{\Tr} {\operatorname{Tr}}
\definecolor{BLACK}{gray}{0}
 \definecolor{WHITE}{gray}{1}
 \definecolor{RED}{rgb}{1,0,0}
 \definecolor{GREEN}{rgb}{0,.4,0}
 \definecolor{BLUE}{rgb}{0,0,1}
 \definecolor{CYAN}{cmyk}{1,0,0,0}
 \definecolor{MAGENTA}{cmyk}{0,1,0,0}
 \definecolor{YELLOW}{cmyk}{0,.3,1,0}}
 \definecolor{light-gray}{gray}{0.90}
\begin{document}

\title{Experimental single-setting quantum state tomography} 

\author{Roman Stricker}
\affiliation{Institut f\"ur Experimentalphysik, Universit\"at Innsbruck, 6020 Innsbruck, Austria}
\author{Michael Meth}
\affiliation{Institut f\"ur Experimentalphysik, Universit\"at Innsbruck, 6020 Innsbruck, Austria}
\author{Lukas Postler}
\affiliation{Institut f\"ur Experimentalphysik, Universit\"at Innsbruck, 6020 Innsbruck, Austria}
\author{Claire Edmunds}
\affiliation{Institut f\"ur Experimentalphysik, Universit\"at Innsbruck, 6020 Innsbruck, Austria}
\author{Chris Ferrie}
\affiliation{Centre for Quantum Software and Information, University of Technology Sydney, Ultimo, NSW 2007, Australia}
\author{Rainer Blatt}
\affiliation{Institut f\"ur Experimentalphysik, Universit\"at Innsbruck, 6020 Innsbruck, Austria}
\affiliation{Institut f\"{u}r Quantenoptik und Quanteninformation, \"{O}sterreichische Akademie der Wissenschaften, 6020 Innsbruck, Austria}
\affiliation{Alpine Quantum Technologies GmbH, 6020 Innsbruck, Austria}
\author{Philipp Schindler}
\affiliation{Institut f\"ur Experimentalphysik, Universit\"at Innsbruck, 6020 Innsbruck, Austria}
\author{Thomas Monz}
\affiliation{Institut f\"ur Experimentalphysik, Universit\"at Innsbruck, 6020 Innsbruck, Austria}
\affiliation{Alpine Quantum Technologies GmbH, 6020 Innsbruck, Austria}
\author{Richard Kueng}
\affiliation{Institute for Integrated Circuits, Johannes Kepler University Linz, 4040 Linz, Austria}
\author{Martin Ringbauer}
\affiliation{Institut f\"ur Experimentalphysik, Universit\"at Innsbruck, 6020 Innsbruck, Austria}

\begin{abstract}
Quantum computers solve ever more complex tasks using steadily growing system sizes. Characterizing these quantum systems is vital, yet becoming increasingly challenging. The gold-standard is \emph{quantum state tomography} (QST), capable of fully reconstructing a quantum state without prior knowledge. Measurement and classical computing costs, however, increase exponentially in the system size --- a bottleneck given the scale of existing and near-term quantum devices. 
Here, we demonstrate a scalable and practical QST approach that uses a single measurement setting, namely symmetric informationally complete (SIC) positive operator-valued measures (POVM). We implement these nonorthogonal measurements on an ion trap device by utilizing more energy levels in each ion - without ancilla qubits. More precisely, we locally map the SIC POVM to orthogonal states embedded in a higher-dimensional system, which we read out using repeated in-sequence detections, providing full tomographic information in every shot.
Combining this \emph{SIC tomography} with the recently developed \emph{randomized measurement toolbox} (`classical shadows') proves to be a powerful combination. SIC tomography alleviates the need for choosing measurement settings at random (`derandomization'), while classical shadows enable the estimation of arbitrary polynomial functions of the density matrix orders of magnitudes faster than standard methods. The latter enables in-depth entanglement studies, which we experimentally showcase on a 5-qubit absolutely maximally entangled (AME) state. 
Moreover, the fact that the full tomography information is available in every shot enables online QST in real time. We demonstrate this on an 8-qubit entangled state, as well as for fast state identification. All in all, these features single out SIC-based classical shadow estimation as a highly scalable and convenient tool for quantum state characterization. 
\end{abstract}

\maketitle

Quantum systems are prepared in laboratories and in engineered devices such that their state delicately encodes quantum information essential for achieving goals in both science and technology. Any small adjustments, changes in the environment, or active control all change the state. Yet, an accurate mathematical description of the state is a necessary component for most higher-level tasks. A crucial requirement for ensuring the performance of quantum devices is thus having methods for accurately determining the quantum states that have been prepared. The gold-standard approach for this fundamental task is quantum state tomography (QST) or simply tomography \cite{paris2004quantum}. QST enables the full reconstruction of the system's quantum state from an exponential number of measurements. Often, however, we are not even interested in the full quantum state, but rather certain features, like entanglement across a particular bipartition. Yet, to access such non-linear functions, one would like to have the reconstructed quantum state.

Formally, QST methods use an \emph{informationally complete} set of measurements to reconstruct the complete description of the quantum state. The optimal measurement for collecting the necessary tomography data has long been known to be the so-called SIC POVMs~\cite{Renes2004SIC, Scott2006}. SIC POVMs are constructed from the minimal number of $d^2$ measurements for a $d$-dimensional system, which are arranged in a way that maximizes the pairwise distance in Hilbert space. SIC POVMs are known to exist for several low-dimensional systems~\cite{Scott2010, Fuchs2017The}, and for qubits, take the form of 4 non-orthogonal vectors arranged as a tetrahedron in the Bloch sphere, see Fig.~\ref{fig0:measurement-illustration}(d). While SIC POVMs uniquely offer access to the complete tomographic information in every single experimental run (shot), implementing these measurements in practice is very challenging, requiring purpose-built setups~\cite{Tabia2012,Bian2015}, sequential measurement~\cite{Proietti2019,Bent2015}, or ancilla assisted schemes~\cite{Chen2007,Maniscalco2021}. Hence, tomography remains almost exclusively performed using the simpler, but over-complete Pauli basis, requiring $3^N$ orthogonal measurement \emph{settings}, each with $2^N$ outcomes for an $N$-qubit system. The resulting overhead effectively limits full tomography to system sizes of only a few qubits. 

From a conceptual point of view, the qubit SIC POVM is favorable, as a single experimental shot already contains complete tomographic information. This distinct advantage has also been recognized in recent theoretical work on adaptive tomography for linear cost functions~\cite{Maniscalco2021}, as well as neural network quantum state tomography~\cite{Torlai2018,Carrasquilla2019}.
Experimentally, it is also much cheaper to repeat the same measurement setting many times than to switch settings an exponential number of times as the system size grows with Pauli tomography. So, this feature can have a significant impact in practice. Moreover, since full tomographic information is contained in every shot, the experimenter is free to stop the tomography at any point, e.g.\ when certain quantities of interest have converged. In contrast, other QST approaches would require at least one shot for each measurement setting to collect sufficient information in the first place. This discrepancy is particularly relevant when we are not interested in the full density matrix, but only in certain (non-linear) properties, which often require far fewer shots than the $3^N$ minimum in Pauli tomography~\cite{Huang2020,Paini2019}. Finally, for randomized measurement schemes~\cite{Elben2022}, where ideally a different measurement setting is required in each shot, the SIC approach obviates this requirement completely (`derandomization'), making these schemes even more practical. Hence, in most situations, SIC tomography has the potential to substantially outperform standard methods for tomography or for the direct estimation of state properties.

Here, we describe our realization of SIC POVMs in a trapped ion quantum processor and their use for characterizing unknown quantum states. We put an emphasis on demonstrating the speed and robustness obtained from reducing the number of measurement settings in conjunction with new data processing techniques that come with rigorous accuracy guarantees. With our approach, we are able to comfortably reconstruct the full 8-qubit quantum state encoded in the electronic energy levels of calcium ions in \textit{real-time} using a standard laboratory computer. Moreover, we demonstrate the simultaneous real-time estimation of Renyi entropies across all bipartitions using a sampling-free classical shadow method~\cite{Huang2020}. This enables full entanglement characterization of arbitrary (but close to pure) quantum states with orders of magnitude fewer experimental shots than standard QST methods. 

\begin{figure}[ht]
    \includegraphics[width=0.9\columnwidth]{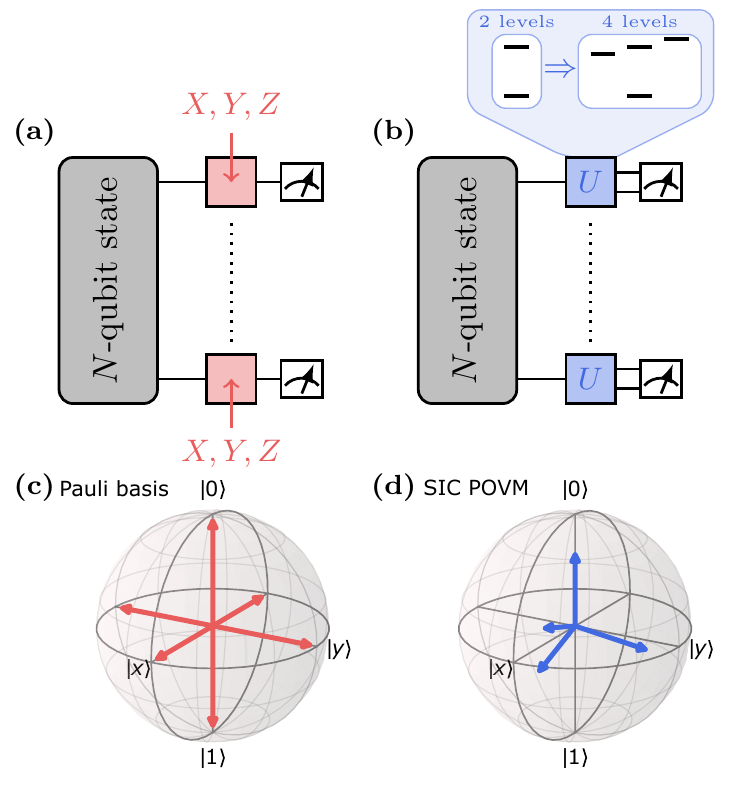}
    \caption{\textbf{Schematic illustration of Pauli and SIC tomography:} \newline
    \emph{Pauli tomography} \textbf{(a)} uses 3 basis measurements per qubit to obtain tomographic information about an unknown $N$-qubit system, see \textbf{(c)}. Each basis measurement is contingent on one of three possible unitary rotations -- red boxes in \textbf{(a)}. This produces a total of $3^N$ different measurement settings that need to be accessed.
    \newline
    \textit{SIC tomography} \textbf{(b)}, on the other hand, uses the same measurement setting for each qubit, see \textbf{(d)}. 
    This non-orthogonal measurement is achieved by isometrically embedding each 2-level system (qubit) into a larger 4-level system (ququart) -- blue boxes in \textbf{(b)} -- and subsequently measuring this larger system. The experimental realization of this embedding within each ion is shown in Fig.~\ref{fig1:TomographySchematic}.
    }
    \label{fig0:measurement-illustration}
\end{figure}

\begin{figure}[ht]
    \centering
    \includegraphics[width=0.9\columnwidth]{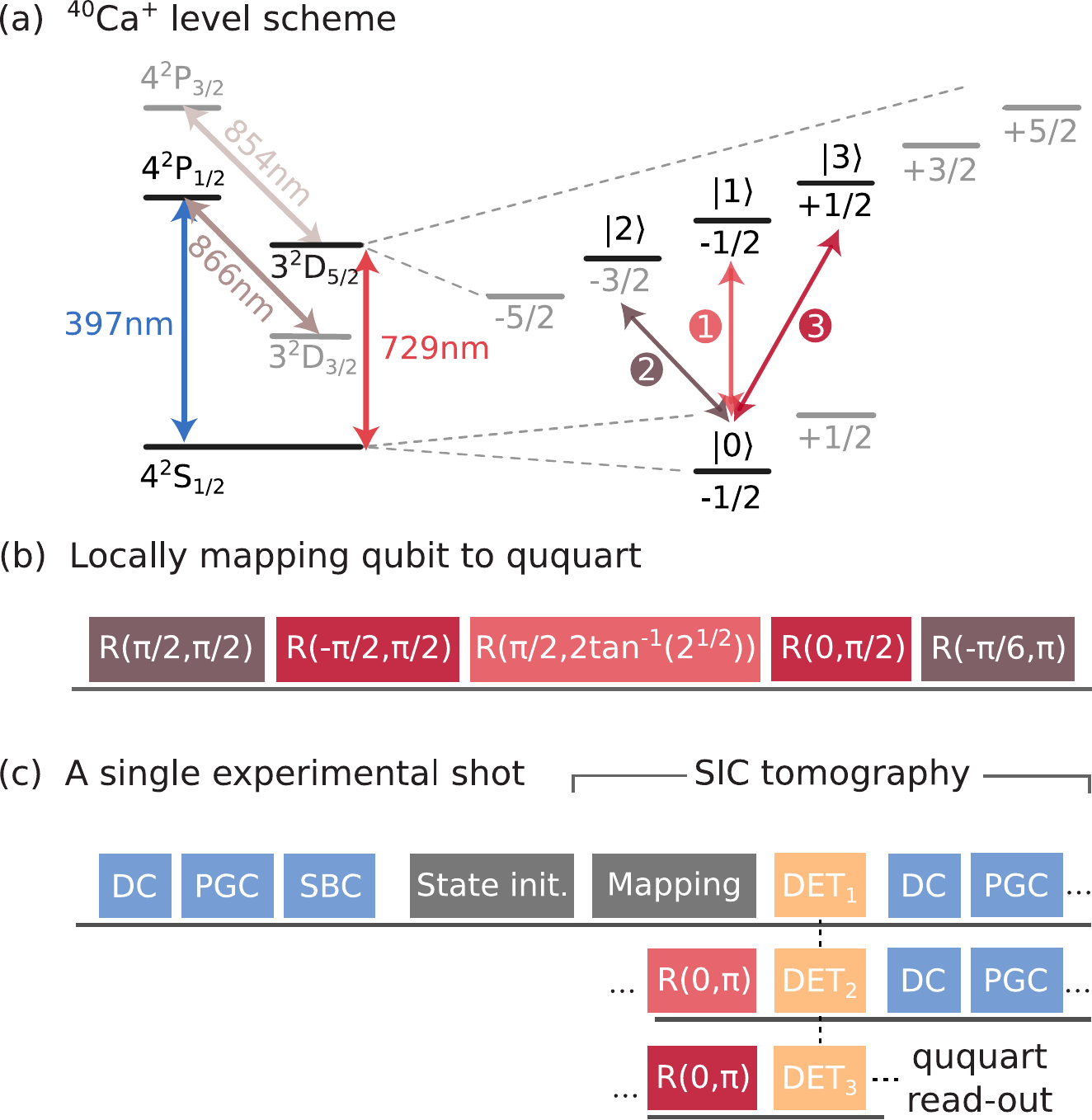}
    \caption{\textbf{Experimental implementation of SIC tomography from Fig.~\ref{fig0:measurement-illustration}(b,d)}. \textbf{(a)} Level scheme of a $^{40}$Ca$^+$ ion representing a qubit or ququart with important transitions marked: (blue) dipole transition for cooling and detection, (red) metastable quadrupole transition for encoding qubits and ququarts within the Zeeman sub-manifold and (brown) additional transitions for repumping. \textbf{(b)} Gate sequence for locally mapping the SIC POVM from Fig.~\ref{fig0:measurement-illustration}(d) on qubit-level to four orthogonal basis states of a ququart denoted in (a). This enables full read-out of the SIC POVM in a single experimental shot by means of a four-outcome projective measurement. \textbf{(c)} Experimental realization of SIC tomography comprised of cooling (DC = doppler cooling, PGC = polarization gradient cooling, SBC = sideband cooling), preparation of the state to be analyzed, mapping from qubit to ququart according to the SIC POVM, and finally the four-outcome projective measurement. For the latter, three sequential fluorescence detections (DET) are required~\cite{Ringbauer2021}, see Appendix~\ref{app:setup} for details.}
    \label{fig1:TomographySchematic}
\end{figure}

\section{SIC Tomography}
QST aims at reconstructing an unknown quantum state $\rho$ from an informationally complete set of measurements, which spans the entire Hilbert-space of the quantum system. The minimal number of measurement outcomes to reconstruct an arbitrary $d$-dimensional quantum state then is $d^2$. An experimenter performs these measurements on many copies of $\rho$ referred to as experimental shots, and attempts to reconstruct $\rho$ from the observed measurement counts. The standard approach to QST of $N$-qubit systems combines tomographic measurements of each individual qubit. The over-complete Pauli basis is a particularly prominent choice, see Fig.~\ref{fig0:measurement-illustration}\textbf{(c)}. Three distinct measurement settings are required to evenly cover the Bloch sphere and obtain tomographic single-qubit information. Extending this to $N$-qubit systems produces $3^N$ distinct measurement settings that need to be explored, see Figure~\ref{fig0:measurement-illustration}(a).

In contrast to (single-qubit) Pauli basis measurements, (single-qubit) SIC POVMs provide access to a complete set of tomographic data from a single experimental shot. No change in measurement settings is required. This desirable feature extends to $N$-qubit measurements: a single measurement setting per qubit suffices to obtain tomographically complete data, see Fig.~\ref{fig0:measurement-illustration}(b). SIC tomography utilizes (tensor products of) the single-qubit SIC POVM depicted in Fig.~\ref{fig0:measurement-illustration}(d) (or a local rotation thereof). Following Naimark's dilation theorem~\cite{Naimark1940}, every POVM can be realized as a projective measurement on a higher-dimensional Hilbert space. Using this result, together with a qudit quantum processor~\cite{Ringbauer2021}, we realize the qubit SIC POVM as a projective measurement on a 4-level system (ququart) employing two more states within each calcium ion. For this purpose, we map each qubit locally to a ququart using the unitary

\begin{equation}
\hat M = 
\begin{pmatrix}
\frac{1}{\sqrt{2}} & 0 & 0 & \frac{1}{\sqrt{2}} \\
\frac{1}{\sqrt{6}} & \frac{1}{\sqrt{3}} & \frac{1}{\sqrt{3}} & -\frac{1}{\sqrt{6}} \\
\frac{1}{\sqrt{6}} & \frac{(-1)^{2/3}}{\sqrt{3}} & \frac{(-1)^{4/3}}{\sqrt{3}} & -\frac{1}{\sqrt{6}} \\
\frac{1}{\sqrt{6}} & \frac{(-1)^{4/3}}{\sqrt{3}} & \frac{(-1)^{2/3}}{\sqrt{3}} & -\frac{1}{\sqrt{6}} \\
\end{pmatrix}
\end{equation}

Here, the four two-dimensional vectors contained in the first two columns represent the measurement vectors of the qubit basis given in Fig.~\ref{fig0:measurement-illustration}(d). The optimized gate sequence for mapping the measurement states from qubit to ququart is shown in Fig.~\ref{fig1:TomographySchematic}(b) and consists of five local rotations for each qubit, with phase gates absorbed into the rotation phases. Therefore, our single-setting SIC tomography implementation remains the very same independent of qubit number.

\section{Reconstruction methods \& classical shadows}
Even with a complete set of measurements, reconstructing $\rho$ is computationally demanding, especially
if one insist on enforcing physicality constraints. Two standard QST methods in the field are linear inversion (lin-inv) and maximum likelihood estimation (MLE) \cite{PhysRevA.55.R1561}, which can be readily applied to both Pauli and SIC data. Lin-inv provides an analytical approach to estimating $\rho$ from a complete set of projectors $\Pi_j$ that span the entire Hilbert space. Access to (approximations of) the associated probabilities $\hat{p}_j\approx\Tr(\Pi_j\rho)$ allows us to reconstruct the underlying state:
\begin{equation}
\label{eq:lin-inv}
\vert\hat{\rho}\rangle\rangle=\sum_j\hat{p}_j\cdot S_p^{-1}\vert\Pi_j\rangle\rangle
\end{equation}
with $\vert\Pi_j\rangle\rangle$ the vectorized projector, obtained by stacking the columns of $\Pi_j$. Further, $S_p^{-1}$ denotes the Moore-Penrose pseudo inverse of the measurement superoperator $S_p=\sum_j\vert\Pi_j\rangle\rangle\langle\langle\Pi_j\vert$~\cite{Wood2015}, and $\hat{p}_j=n_j/N_j$ is the observed frequency of outcome $j$ after averaging over $N_j$ experimental shots. As an unconstrained method, lin-inv version bears the risk of producing unphysical estimators for $\rho$ featuring negative eigenvalues. This is particularly pronounced when few experimental shots are used and is very problematic for estimating non-linear observables. Physical constraints are thus typically introduced through MLE, which, following Ref.~\cite{Wood2015}, can be approximated by a convex optimization problem
\begin{equation}
\label{eq:MLEcvxopt}
    \begin{split}
        \text{minimize}\hspace{1.2cm}&\vert\vert W(S\vert\hat{\rho}\rangle\rangle-\vert f\rangle)\vert\vert_2\\
        \text{subject to}\hspace{1.2cm}&\hat{\rho}\geq0,\hspace{2pt}\Tr(\hat{\rho})=1
    \end{split}
\end{equation}
Here, $S=\sum_j\vert j\rangle\langle\langle\Pi_j\vert$ denotes a change of basis operator, $\ket{f}=\sum_j\frac{n_j}{N_j}\ket{j}$ a column vector of the observed frequencies, and $W$ a diagonal matrix of statistical weights $W$. Optimization is performed under the constraints that the estimator for $\rho$ is positive semidefinite ($\rho \geq 0$) with unit trace ($\Tr (\rho)=1$), i.e.\ it must be a valid quantum state.
The convex optimization in Eq.~\eqref{eq:MLEcvxopt} is computationally more efficient than full MLE and recovers the latter in the limit of large sample sizes. Nonetheless, the computational complexity remains intractable for anything but very small systems. 
Lin-inv is much more efficient by comparison. Nonetheless, inverting the superoperator $S_p$ also becomes more challenging as system size increases. Viewed as a matrix, every tomographically complete $N$-qubit superoperator $S_p$ must have (at least) $4^N$ rows and (at least) $4^N$ columns. 
Performing the inversion row-by-row can offer some relief in terms of memory load, but the exponential number of multiplications remains challenging. Finally, physical constraints ($\rho\geq0$ and $\Tr(\rho)=1$) can be incorporated into lin-inv by truncating negative eigenvalues to obtain the closest quantum state under the Frobenius norm~\cite{Smolin2012,Sugiyama2013,Guta2020} referred to as \emph{projected least squares} (PLS). It should be noted that more principled, yet ever more computationally challenging, approaches exist \cite{Blume_Kohout_2010, Lukens_2020}.

So far, we considered full QST, i.e.\ experimentally extracting a complete description of $\rho$, which is traditionally required for predicting certain properties of complex quantum systems, especially non-linear functions, most prominently purity or entanglement. In large-scale systems, however, predicting such properties becomes very costly independent of the data acquisition (SIC, Pauli) and reconstruction (lin-inv, MLE) method, both in regards to the number of required shots and in regards to the computational power required to analyze the data. 

A promising alternative comes in the form of classical shadows~\cite{Huang2020,Paini2019} as a general-purpose method to construct classical descriptions of quantum states using very few experimental shots.
Consequently, the classical shadows framework allow for the prediction of $L$ different functions of the state with high accuracy, using order $\log (L)$ experimental shots. Importantly, the number of shots is independent of the system size and saturates information-theoretic lower bounds. Moreover, target properties can be selected after the measurements are completed. A big drawback of existing classical shadow methods, however, is that they require a different measurement to be sampled randomly for each shot~\cite{Elben2022}, which is demanding and slows down data acquisition. We show in the following that SIC POVMs naturally alleviate this sampling requirement (`derandomization'). SIC POVMs are thus an ideal choice for unlocking the full potential of the classical shadows framework. This has, in parts, been already pointed out in Ref.~\cite{Maniscalco2021}, which explores adaptive SIC tomography for linear cost functions inspired by VQE. Instead, we are here interested in a general framework for efficiently predicting general linear and nonlinear properties of the quantum state.

Formally, classical shadows provide an alternative approach for a linear-inversion estimator deduced from SIC measurements on an $N$-qubit state $\rho$. Each experimental shot $m$, containing complete tomographic information, can be be assigned to an size-$N$ string $\hat{i}_{m,1},...,\hat{i}_{m,N}\in\lbrace1,2,3,4\rbrace^{\times N}$, where each quartic value keeps track of the SIC POVM outcome observed.
For each shot $m$, an $N$-qubit estimator for the density matrix $\hat{\sigma}_m=\bigotimes_{n=1}^N \left( 3 |\psi_{\hat{i}_{m,n}} \rangle \! \langle \psi_{\hat{i}_{m,n}}| - \mathbb{I} \right)$ is obtained, referred to as a \emph{classical shadow}. A total of $M$ such estimators can be experimentally inferred and accumulated to approximate $\rho$ as
\begin{equation}
   \hat{\rho}=\frac{1}{M}\sum_{m=1}^M\bigotimes_{n=1}^N \left( 3 |\psi_{\hat{i}_{m,n}} \rangle \! \langle \psi_{\hat{i}_{m,n}}| - \mathbb{I} \right) \overset{M \to \infty}{\longrightarrow} \rho
   \label{eq:rho_classical_shadows} ,
\end{equation}
A crucial observation here is that, compared to standard linear inversion in Eq.~\eqref{eq:lin-inv}, the processing of classical shadows is performed in the dimension of the quantum state $2^N\cdot2^N$, which is the minimal possible dimension for full tomography, see Appendix~\ref{app:multi-qubit-systems}. 
Moreover, predicting linear observables using classical shadows is even more efficient as it suffices to reconstruct a subset of $\rho$ solely where operators act on. In Appendix~\ref{app:linearobservables}, we show how we can formalize these considerations to derive a measurement budget for estimating linear observables. Suppose that we are interested in estimating a total of $L \gg 1$ subsystem observables $\mathrm{tr}(O_l \rho)$, where each $O_l$ only acts non-trivially on (at most) $K \leq N$ qubits. Then,
\begin{equation}
M \geq \tfrac{8}{3} 6^K \log (2L/\delta)/\epsilon^2, \label{eq:main-text-observable-measurement-budget}
\end{equation}
measurements suffice to jointly $\epsilon$-approximate all observables with probability (at least) $1-\delta$. 
We emphasize that this is a novel, rigorous a-priori bound based on minimal assumptions. In practice, convergence sets in (much) earlier.
A full derivation and additional context is provided in the Appendix. For now, we merely point out that improvements of order $2^K$ are possible for the exponential scaling in case the observables in question have small Hilbert-Schmidt norm, as is the case for fidelities.
Apart from linear observables, classical shadows also promise to allow for efficient estimation of non-linear functions, see Appendix~\ref{app:nonlinearobservables}. Whereas the full scope of non-linear functions is covered in the Appendix, here we focus on a quadratic estimator in form of the (subsystem) purity
\begin{equation}
    \hat{p}_{(M)} = \binom{M}{2}^{-1} \sum_{m < m'} \mathrm{tr} \left( \hat{\sigma}_m \hat{\sigma}_{m'}\right) ,
\label{eq:main-text-purity-estimator} 
\end{equation}
as purity generally obeys hard convergence properties and also provides a means to measuring entanglement via Rényi-entropies (see below). The latter are given by the negative logarithm of subsystem purities, corresponding to certain bipartitions of the state. Similar to before we can derive a measurement budget, where
\begin{equation}
    M \geq 6 L3^{K}/(\epsilon^2 \delta)
\label{eq:main-text-reduced-purity-measurement-budget}
\end{equation}
measurements allow for $\epsilon$-approximating $L$ subsystem purities of size (at most) $K$ with probability (at least) $1-\delta$. We emphasize that this is again a novel, rigorous a-priori bound based on minimal assumptions, actual convergence sets in much quicker. However, this bound still marks an improvement over the best available results for purity estimation with randomized single-qubit measurements~\cite{elben2020,Neven2021}. The improvement follows from exploiting the geometric structure of SIC POVMs and we refer to Appendix~\ref{app:nonlinearobservables} for additional context and complete proofs.  
These results demonstrate that classical shadows in combination with SIC measurements offer a powerful tool set for measuring entanglement in a scalable fashion~\cite{elben2020,Neven2021}. Whereas our experimental studies will primarily focus on quadratic estimators of Rényi-entropy (Eq.~\eqref{eq:Renyi}), classical shadows can be extended to higher order estimators following the same principles: (i) rewrite a degree-$d$ polynomial as $\mathrm{tr} \left( O \rho^{\otimes d}\right)$, (ii) replace each $\rho$ with an independent classical shadow $\hat{\sigma}_m$ and (iii) average over all different sub-selection of distinct classical shadows. We refer to Refs.~\cite{Huang2020,elben2020,Neven2021} for details. Finally, we remark that classical shadow estimators from Pauli basis measurements can in principle be obtained by randomly sampling over the measurement settings from shot to shot. Although experimentally feasibly, this is highly impractical. Remarkably, the Pauli basis also leads to slower convergence than SIC measurements.

\section{Experimental setup \& SIC implementation}
Experimental results in this work are obtained on a trapped-ion quantum processor based on a linear string of $^{40}$Ca$^{+}$ ions, each encoding a single qubit in the (meta-)stable electronic states $\lbrace S_{1/2}(m=-1/2)=\ket{0}, D_{5/2}(m=-1/2)=\ket{1}\rbrace$~\cite{Schindler2013}. A universal set of quantum gate operations is realized upon coherent laser-ion interaction and comprises of arbitrary local single-qubit rotations together with two-qubit entangling operations, enabling all-to-all connectivity. A binary qubit measurement is implemented by scattering on the dipole transition, where fluorescence is only observed if the ion is in the $\ket{0}$ state, thereby separating the computational basis states $\lbrace S_{1/2}(m=-1/2)=\ket{0}, D_{5/2}(m=-1/2)=\ket{1}\rbrace$; see Fig.~\ref{fig1:TomographySchematic}(a). Equivalent control over the entire S- and D-state Zeeman manifold allows for encoding a higher dimensional quantum decimal digit (qudit) with up to 8 levels in each ion, combined with full fluorescence read-out of the whole qudit space~\cite{Ringbauer2021}, see Appendix~\ref{app:setup}. 

The present work builds on this capability by utilizing up to four levels per ion to implement SIC POVMs. To this extend, two additional levels $D_{5/2}(m=-3/2)=\ket{2}$ and $D_{5/2}(m=+1/2)=\ket{3}$ are taken into account, see Fig.~\ref{fig1:TomographySchematic}(a). Upon applying the mapping sequence depicted in Fig.~\ref{fig1:TomographySchematic}(b), each qubit is locally extended to a ququart where each basis state encodes one SIC-vector. A four-outcome projective measurement is implemented by three sequential fluorescence detections, where before the second detection the population between state $\ket{0}$ and $\ket{1}$ is flipped and likewise before the final detection the states $\ket{0}$ and $\ket{2}$ are flipped. This enables us to evaluate the full ququart state probabilities from three binary outcomes. The entire experimental sequence comprised of cooling, state preparation, mapping the SIC POVM to the ququart and four-outcome read-out is shown in Fig.~\ref{fig1:TomographySchematic}(c) to which we refer as a single experimental shot. We remark that this SIC tomography procedure works independently on each qubit and that such a single experimental shot delivers the complete tomographic information of the $N$-qubit system.

\section{Reconstruction time}
While SIC POVMs can significantly speed up data acquisition, the classical resources needed for reconstructing and storing the quantum state $\rho$ is typically an additional bottleneck in QST (see Eq.~\eqref{eq:lin-inv} and Eq.~\eqref{eq:MLEcvxopt}). In the following, we compare the computational time for reconstructing $\rho$ following various tomography approaches. For the moment, we solely focus on reconstruction time and discuss the convergence properties of the various methods later and in the Appendix~\ref{app:tomo-comparison}. For a system of $N$ qubits, we consider tomography data comprised of $M=100\cdot3^N$ shots. This corresponds to 100 shots for each measurement setting used in Pauli tomography, which, on the trapped ion platform, has proven to be a good trade-off accounting for statistics, systematic drifts, and measurement time. 

\begin{figure}[ht]
    \centering
    \includegraphics[width=\columnwidth]{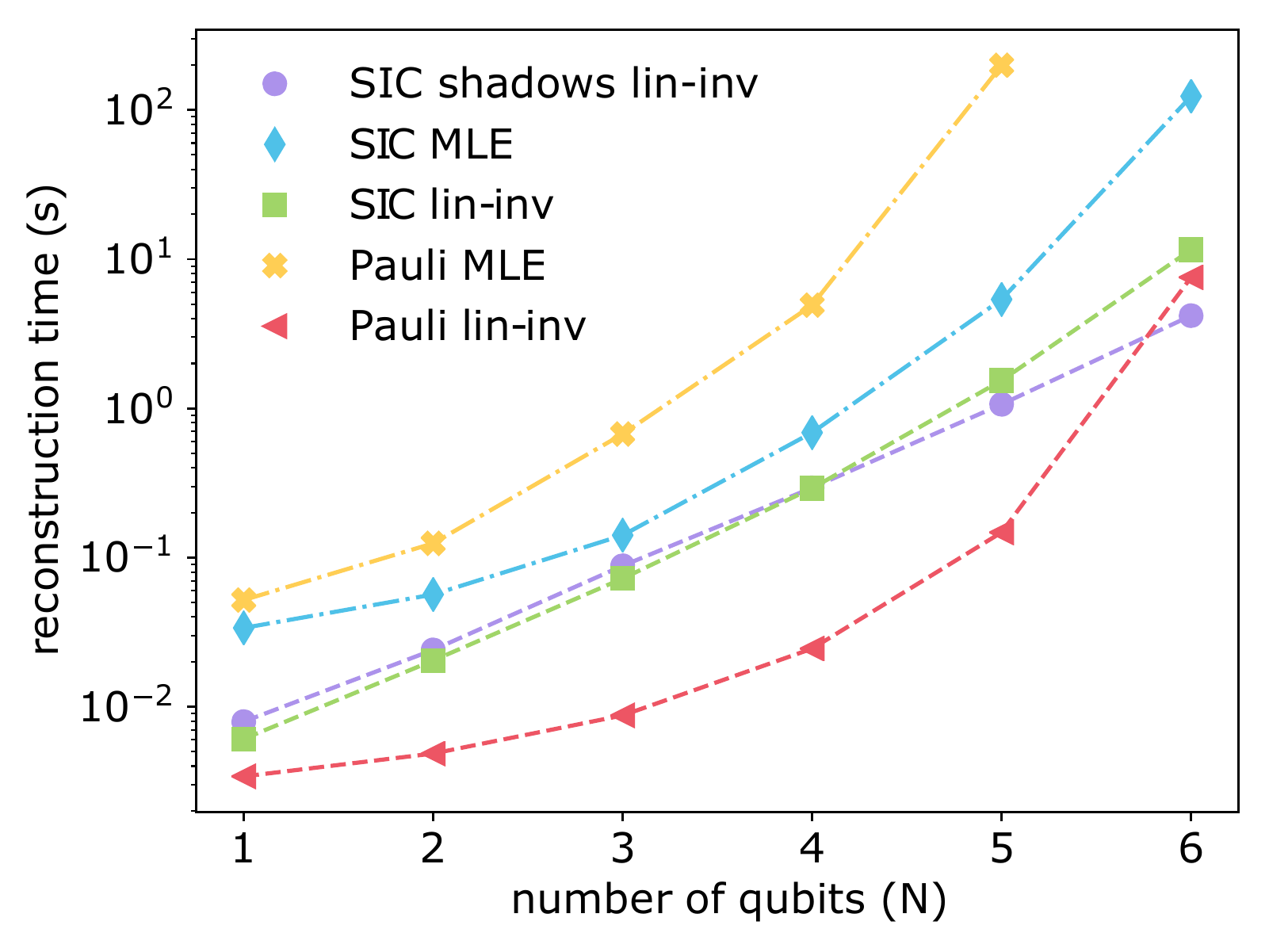}
    \caption{\textbf{Classical run time comparison for tomographic reconstruction methods.} Comparison of the computational run time for state reconstruction using lin-inv (Eq.~\eqref{eq:lin-inv}) or MLE (Eq.~\eqref{eq:MLEcvxopt}) in case of both Pauli (Fig.~\ref{fig0:measurement-illustration}(c)) and SIC tomography (Fig.~\ref{fig0:measurement-illustration}(d)), as well as SIC-based classical shadows (Eq.~\eqref{eq:rho_classical_shadows}) as a function of the system size. For each qubit number $N$, the reconstruction considers $M=100\cdot3^N$ experimental shots (i.e.\ 100 shots per Pauli basis, see text). The analysis is conducted on a standard desktop computer and plotted double logarithmically in the number of experimental shots. We find both MLE methods to require the highest computational resources, with SIC MLE significantly faster as the processed dimension is lower with $4^N\cdot 4^N$ compared to $6^N\cdot 4^N$ for Pauli tomography. Lin-inv with SIC measurements shows an initial time-offset to Pauli tomography arising from handling ququart ($\mathrm{dim}=4^N$) instead of qubit data ($\mathrm{dim}=2^N$), but grows much more slowly with system size. Among the lin-inv methods, we find the classical shadow reconstruction to be the fastest for an increasing number of qubits, as it accumulates each individual shot in dimension $2^N\cdot 2^N$ (see Eq.~\eqref{eq:rho_classical_shadows}), avoiding the costly matrix inversion in dimension $4^N\cdot 4^N$ or $6^N\cdot 4^N$ as required for SIC and Pauli tomography, respectively.} 
    \label{fig2:RunTime}
\end{figure}

\begin{figure*}[ht]
    \centering
    \includegraphics[width=\textwidth]{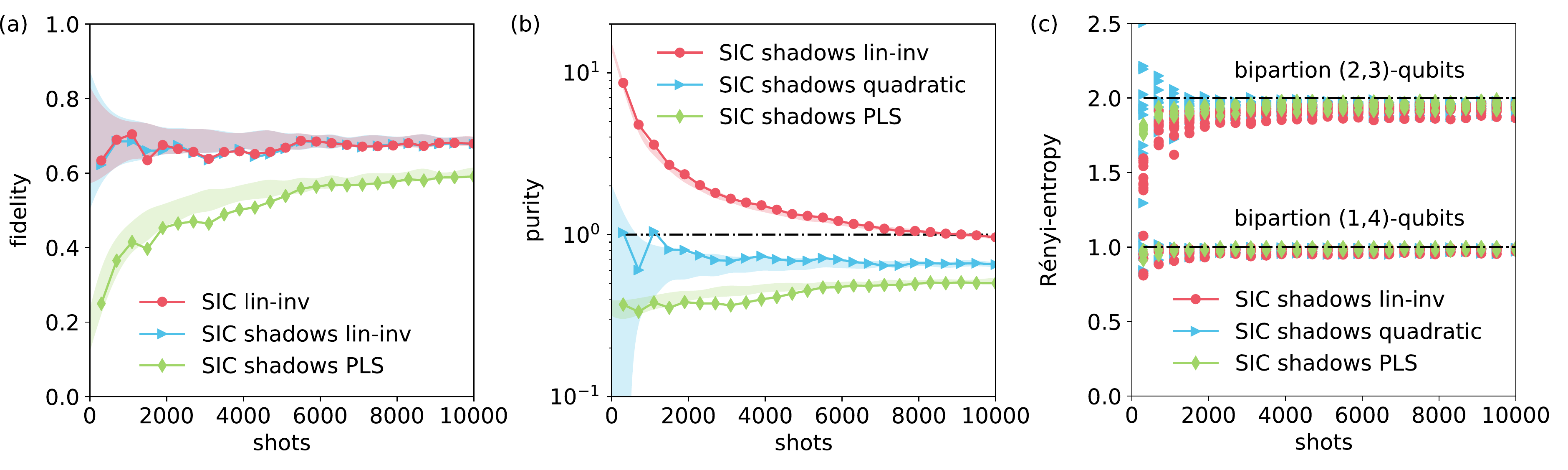}
    \caption{\textbf{SIC-based classical shadow tomography of a 5-qubit AME state.} We compare the convergence of lin-inv (Eq.~\eqref{eq:lin-inv}) and classical shadow tomography using SIC measurements (Eqs.\eqref{eq:rho_classical_shadows}-\eqref{eq:main-text-purity-estimator}). We additionally use PLS following Ref.~\cite{Smolin2012,Sugiyama2013,Guta2020}, explained in text. \textbf{(a)} In terms of convergence, we find both SIC lin-inv from Eq.~\eqref{eq:lin-inv} and SIC shadows lin-inv from Eq.~\eqref{eq:rho_classical_shadows} to overlap which is expected due to their similarity. We remark that in terms of reconstruction SIC shadows lin-inv is computationally more efficient, see Fig.~\ref{fig2:RunTime}. PLS incorporates physical constraints ($\Tr{(\rho)}=1$ and $\rho>0$), but underestimates the fidelity for small number of shots and converges very slowly requiring at least an order of magnitude more shots (indeed slower than MLE, additionally confirmed by experiments in Appendix~\ref{app:tomo-comparison} and numerical simulations in Appendix~\ref{app:simulations}). Shaded regions illustrate 1 standard deviation around the mean of multiple batches of equal shot numbers. \textbf{(b)} When estimating purity, lin-inv shows highly unphysical behaviour under insufficient statistics and not yet converged in the plot. Physical constraints can be corrected by PLS, although at the cost of much slower convergence. On the other hand, classical shadow purity estimators (Eq.~\eqref{eq:main-text-purity-estimator}) display very quick convergence. \textbf{(c)} Estimating Rényi-entropies (Eq.~\eqref{eq:Renyi}) from subset purities across all bipartitions converges very quickly for all methods. The lack of difference between the methods is likely due to the small system sizes, and for increasing bipartition sizes we expect similar behaviour as in (b). Note that bipartitions are denoted by tuples (2,3)-qubits and (1,4)-qubits referring to the number of qubits in each part.}
    \label{fig3:5qAME}
\end{figure*}

Figure~\ref{fig2:RunTime} illustrates the classical reconstruction time versus the number of qubits $N$ for experimental data used throughout this manuscript. Whereas absolute time reflects a laboratory desktop computer, relative scaling between methods remains generally valid. Note that the plot is double-logarithmic in the number of shots $M=100 \cdot 3^N$. While MLE methods always obey physical constraints, solving the convex optimization problems is costly and only feasible for small system sizes. We find MLE with SIC measurements to be more efficient, due to handling matrices of maximum size $4^N\cdot 4^N$, in contrast to $6^N\cdot4^N$ for the over-complete Pauli basis. However, MLE quickly becomes infeasible as the number of qubits increases. SIC lin-inv suffers an initial offset to Pauli lin-inv due to computing ququart instead of qubit state probabilities, which for the MLE approaches was masked in the overhead of convex optimization from Eq.~\eqref{eq:MLEcvxopt}. As the number of qubits increases, SIC makes up for this, as the computations are performed in a smaller dimension. Although computationally much cheaper than MLE, even lin-inv is becoming increasingly costly due to the memory requirements of processing the inverse superoperator $S_p$ from Eq.~\eqref{eq:lin-inv}. Already for 6 qubits this requires \SI{268.4}{MB} and \SI{3100}{MB} for SIC and Pauli measurements, respectively. Scaled up further, this will rapidly exceed the memory of today's computers. Alternatively, inversion of $S_p$ could be done row-by-row to reduce memory load, but this would be more time-consuming than pre-calculating the inverse $S_p^{-1}$ as we have done here. While linear observables under lin-inv are proven to quickly converge (see Appendix~\ref{app:linearobservables}), non-linear functions suffer from the nonphysical properties in the form of negative eigenvalues, see Fig.~\ref{fig3:5qAME}(b). Furthermore, PLS adds negligible computational overhead over lin-inv and is thus neglected in this comparison. PLS does, however affect the convergence and accuracy of the estimators, as shown and discussed further below. 

Finally, we find the best scaling for the SIC-based classical shadows from Eq.~\eqref{eq:rho_classical_shadows}, where data is processed at the dimension of the density matrix, $2^N\cdot2^N$, avoiding matrix inversion or optimization altogether. Instead, individual experimental shots are accumulated, offering convenient updates of $\rho$ for every new set of data as where further below. As a consequence of this individual accounting for every shot, the computational complexity of this method grows linearly with the total number of shots. While this linear overhead leads to slightly worse performance for very small systems, it is more than compensated by the improved exponential scaling with qubit number ($2^N\cdot2^N$ vs.\ at least $4^N\cdot4^N$) for large systems. Hence, the SIC-based classical shadows clearly outperform all other methods for 6 or more qubits. Note, that for large-scale systems the gap between classical shadows and standard lin-inv becomes even bigger as row-by-row lin-inv becomes requisite.

\section{Estimating properties of the state}

Here we shift our attention towards convergence of the different tomography estimators. In particular, we showcase the classical shadows' unique feature of efficiently predicting non-linear properties of even large-scale quantum systems. To this end, we experimentally perform tomography on a 5-qubit AME state~\cite{Enriquez2016}
\begin{equation}
\label{eq:AMEstate}
\begin{split}    
2\sqrt{2}\ket{\Omega_{5,2}} = &\ket{00000}+\ket{00011} \\+&\ket{01100}-\ket{01111}+\ket{11010}\\+&\ket{11001}+\ket{10110}-\ket{10101} .
\end{split}    
\end{equation}
AME states are the most entangled states in the sense that they are maximally entangled in all bipartitions~\cite{Helwig2013}. This makes them interesting for applications in quantum error correction~\cite{Grassl2003}, quantum teleportation, quantum secret-sharing and superdense coding~\cite{Muralidharan2008}. Alas, their general existence remains unknown for all but the smallest systems. 

\begin{figure*}[ht]
    \centering
    \includegraphics[width=\textwidth]{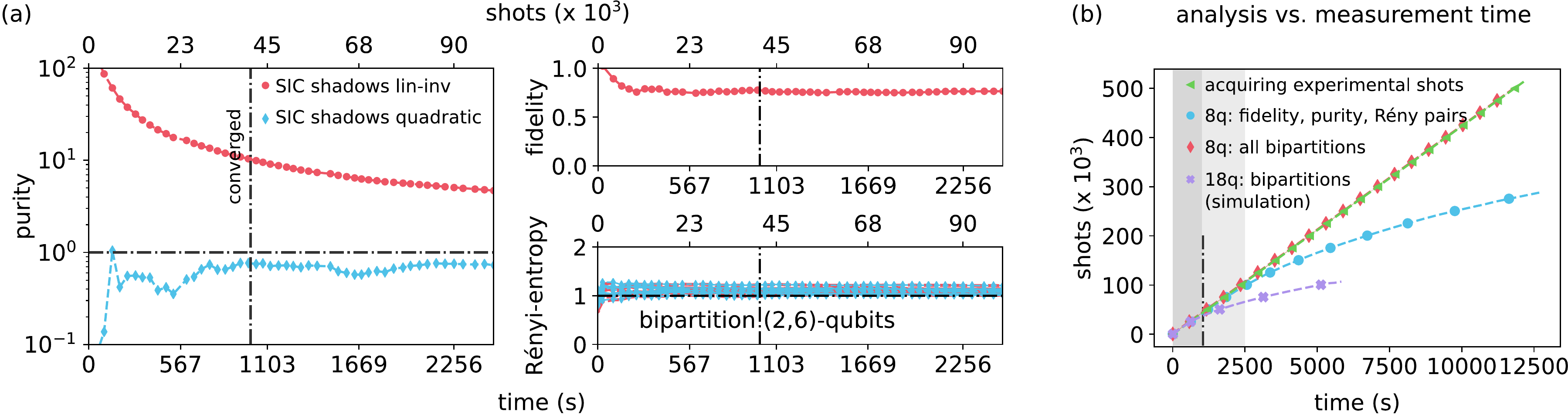}
    \caption{\textbf{Live-update SIC tomography of a maximally-entangled 8-qubit state.} We prepare a locally-rotated GHZ state and probe the state via live-update SIC tomography. The local $\pi/4$ rotation ensures that the state is not aligned with any tomography basis and can serve as a proxy for an arbitrary entangled state. Data analysis is performed in parallel to data acquisition to get the quickest possible feedback. Shot-by-shot reconstruction is realized via classical shadows (Eq.~\eqref{eq:rho_classical_shadows}-\eqref{eq:purity-estimator}). \textbf{(a)} Experimental results on purity (left), fidelity (middle top), and all (2,6)-qubit bipartition Rényi-entropies (middle bottom) obtained live in a time-regime where the data acquisition time dominates. We find the purity from classical shadows to have converged after less than \SI{1000}{s}, with the other measures converging significantly faster. Fidelity converges very fast below \SI{500}{s} and Rényi-entropies saturate almost immediately. Note, that for the latter, the curves for SIC shadows lin-inv and SIC shadows quadratic overlap due to the fast convergence for small subsystem sizes. \textbf{(b)} Comparison of the time for data acquisition ($\blacktriangleleft$) following the expected linear curve along times-curves for data analysis utilizing different methods and metrics covering the entire \SI{12 500}{s} of data taking. When analyzing fidelity, parity and all (2,6)-qubit bipartition Rényi-entropies ($\bullet$) as shown in the experimental results of (a), live update is possible for around 2500s. This is expected, since the shot-by-shot data analysis scales quadratically in total shot number, see Eq.~\eqref{eq:main-text-purity-estimator}. Estimating only Rényi-entropies for all bipartitions ($\bLozenge$), on the other hand, remains feasible for the entire time as it overlaps with data acquisition ($\blacktriangleleft$) following the linear curve. Additionally, we simulate the estimation of all Rényi-entropies of an 18-qubit state ($\boldsymbol{\times}$) by considering the data acquisition overhead ($\blacktriangleleft$), demonstrating that also this in principle remains feasible to perform live.}
    \label{fig4:Live8qGHZrotated}
\end{figure*}

We characterize the 5-qubit AME state from Eq.~\eqref{eq:AMEstate} using SIC tomography data. The results in Fig.~\ref{fig3:5qAME}(a) indicate that both lin-inv methods converge very quickly in fidelity, as is likewise expected for all linear observables, see Appendix~\ref{app:linearobservables}. In terms of convergence SIC lin-inv from Eq.~\ref{eq:lin-inv} and SIC shadows lin-inv from Eq.~\ref{eq:rho_classical_shadows} expectantly overlap due to their similarity, whereas the latter was found to be more efficient in terms of reconstruction, see Fig.~\ref{fig2:RunTime}. In contrast, incorporating physical constraints with PLS drastically slows convergence~\cite{Guta2020}, because truncation of negative eigenvalues produces a bias. We repeatedly confirm this bias by experiments in Appendix~\ref{app:tomo-comparison} as well as numerical simulations in Appendix~\ref{app:simulations}. We further estimate the state's purity, as an example of an archetypal non-linear function of the full state. Here lin-inv shows highly unphysical results under insufficient statistics which not yet converged in the plot, while the classical-shadow purity estimators from Eq.~\eqref{eq:main-text-purity-estimator} converges rapidly after only about 3000 shots, see Fig.~\ref{fig3:5qAME}(b). Finally, most inspired by applications, we probe the state's entanglement by estimating all second-order Rényi-entropies
\begin{equation}
S^{(2)}(\rho_{\mathsf{A}}) = -\log_2\Tr{(\rho_{\mathsf{A}}^2)},
\label{eq:Renyi}
\end{equation}
with the reduced density matrix $\rho_{\mathsf{A}}$ for part $\mathsf{A}$ of a bipartition $(\mathsf{A},\bar{\mathsf{A}})$ together forming $\rho$~\cite{Brydges2019}. In Fig.~\ref{fig3:5qAME}(c), we present results on Rényi-entropies following Eq.~\eqref{eq:main-text-purity-estimator} and particularly cover all bipartitions denoted by tuples (1,4)-qubits and (2,3)-qubits, referring to the number of qubits in each part. Note that, for classical shadow prediction, only the subset qubits in the smaller partition need to be taken into account. This leads to a drastic speed-up of the analysis for larger scale systems. Moreover, all predictions can be analyzed after the data has been acquired. 

For comparison, we also analysed the AME state with Pauli tomography, see Appendix Fig.~\ref{figS2:5qAMEComparison}. Slowest convergence is consistently found for PLS, which is also notably slower than regular MLE. We additionally confirm this by numerical simulations considering only shot noise, i.e.\ statistical noise, see Appendix~\ref{app:simulations}). Given only few experimental shots, SIC tomography outperforms Pauli tomography in case of MLE reconstruction, likely because the SIC POVM provides the optimal information gain per shot. Curiously, however, for large shot numbers, Pauli MLE starts to outperform SIC MLE in terms of convergence. We suspect this to be a result of the over-completeness of the Pauli basis, where very large shot numbers may lead to improved accuracy for each orthogonal direction. Note that all methods converge to the same point, as verified in numerical simulations, see Appendix Fig.~\ref{figS6:TomographyComparison}. 

In conclusion, we find SIC tomography to be preferable over Pauli tomography in regards to both, classical computation time, and convergence speed. At the same time, the underlying classical shadow formalism provides the potential for scaling to large quantum systems. We emphasize that the moderately lower quantitative performance of SIC tomography observed in our data is not inherent to the method, but due to experimental imperfections, i.e.\ the additional overhead in mapping SIC POVMs to ququarts, and the four-outcome read-out, see Appendix~\ref{app:setup}. These technical imperfections can be overcome on future devices.

\section{Live-update tomography}
Since the SIC POVM contains full tomographic information within each shot, it provides a unique way to speed up QST. Combined with classical shadows, which work by accumulating estimates in each shot according to Eq.~\eqref{eq:rho_classical_shadows}, SIC tomography can be performed in real-time (or `online'), i.e.\ a live update is performed for every new set of experimental shots. Apart from reducing time overheads by performing experiment and analysis at the same time, this approach has the big advantage that the experiment can be stopped as soon as all properties of interest are (believed to be) accurately estimated. Based on these ideas, we demonstrate a live reconstruction of a maximally-entangled 8-qubit state. Specifically, we use a GHZ-state with an additional local $\pi/4$ rotation of all qubits as a proxy for a generic maximally-entangled state that is not aligned with any of the tomography bases to provide a fair comparison.

Figure~\ref{fig4:Live8qGHZrotated}(a) presents results on the estimation of fidelity, purity and Rényi-entropies (Eq.~\eqref{eq:Renyi}), the latter for all possible (2,6)-qubit bipartitions. Purity of the full 8-qubit state is found to converge the slowest, after around \SI{1000}{s}. This is still a drastic speedup compared to lin-inv, which, after an order of magnitude longer averaging time still produces unphysical results. Fidelity converges after less than \SI{500}{s} and Rényi-entropies saturate almost immediately on the presented time-scale. We remark that both curves for SIC shadows lin-inv and SIC shadows quadratic overlap as a consequence of the fast converging 2-qubit subsets, see Appendix Fig.~\ref{figS3:Acc8qGHZrotated} on bigger bipartitions. In Fig.~\ref{fig4:Live8qGHZrotated}(a), live updates are tracked up to \SI{2500}{s}, which is the the limit beyond where the analysis starts to take longer than the data acquisition (due to the quadratic scaling in the number of shots for purity estimation, see Eq.~\eqref{eq:main-text-purity-estimator}). We extend the discussion about the time relation between data acquisition and data processing in Fig.~\ref{fig4:Live8qGHZrotated}(b), where we acquire 100 experimental shots in about \SI{2.4}{s} and show the entire \SI{12 500}{s} of data taking. Over this time, we observe that the computational time nicely follows the expected quadratic growth with shot number, relating to the number of approximated $\rho$'s to compare in Eq.~\eqref{eq:main-text-purity-estimator}. Dropping the full-system purity, we find that all bipartition Rényi-entropies can be estimated in real-time throughout the entire time of data taking. On top, simulations suggest that even on an 18-qubit state all bipartition Rényi-entropies can be estimated live for around \SI{1000}{s}.

In practice, we accumulate 100 shots for each classical shadow approximating $\rho$ following Eq.~\eqref{eq:main-text-purity-estimator} which we refer to as \emph{batching}. This post-processing trick enables a trade-off between computational time and convergence speed, which is studied thoroughly through numerical simulations in Appendix Fig.~\ref{figS4:3qClassicalShadows2D}. A suitable batch-size must be decided on a case-by-case basis. From a practical point of view, the experimental noise might also fall into consideration as it affects the targeted accuracy. Thus, batching experimental shots for the analysis comes as a handy tool for reducing analysis time with a limited effect on convergence and thereby extends the window for doing real-time analysis. 

\section{Discussion \& Outlook}
We have demonstrated that real-time SIC tomography enables the prediction of Rényi-entropies in less experimental shots than required for a minimal Pauli tomography implementation (see Fig.~\ref{fig4:Live8qGHZrotated}). Depending on the state and estimator, SIC tomography has the potential to significantly speed up the prediction of many more properties of quantum states. Beyond Rényi-entropies, we challenge SIC tomography in a state identification game partly inspired by Ref.~\cite{Huang2021}. We show that it excels over other state-of-the-art methods (see Appendix Fig.~\ref{figS5a:DecisionGame}). For the challenge in question, SIC tomography required less than 20 shots to correctly identify a randomly chosen 4-qubit linear cluster state among 16 orthogonal state possibilities, clearly outperforming any Pauli QST method. Note that, such a speed-up in favour of SIC tomography will become even more pronounced on up-scaled systems. 

Though SIC POVMs are optimal for tomography \cite{Scott2010}, they are not known to exist in every dimension. Indeed, just resolving this issue would have far-reaching consequences for the foundations of mathematics \cite{Appleby2017}. For a given Hilbert space of dimension $d=2^N$, a SIC constructed in that space is referred to as a \textit{global} SIC, whereas a measurement composed of $N$ two-dimensional SICs is referred to as a \textit{local} SIC. 
If they indeed exist, global SIC tomography would be sample-optimal in the sense that it saturates fundamental lower bounds from information theory~\cite{Guta2020,Haah2017} (joint measurements across many state copies could still yield further improvements~\cite{Wright2016,Haah2017}). Global SIC measurements would, however, be challenging to realize on quantum hardware. Quadratic circuit sizes (in the number of qubits) may be necessary, because the associated SIC states form a 2-design --- a concept closely related to chaos and information scrambling~\cite{Yoshida2017}. And recent works provide lower bounds on the minimum circuit depth required to achieve information scrambling~\cite{Brandao2021}.
We conclude that, although local SICs may be less efficient than global SICs in terms of measurement complexity~\cite{Zhu2012,Guta2020}, they are much cheaper to implement. In addition, they are informationally complete and optimal amongst all possible local measurements~\cite{Zhu2012}. 

To further improve system predictions, one might want to adapt the measurement basis depending on the state to analyse, potentially reducing quantum shot noise and offering a speed-up in convergence. The concept of adaptive tomography follows those means, where the measurement setting is adjusted based on the outcomes of prior measurements~\cite{mahler2013adaptive,huszar2012adaptive,ferrie2014self,hou2016achieving,Maniscalco2021, Kazim2021}. In the most extreme cases, the measurement settings are changed after each shot. Although this has been demonstrated in some settings \cite{PhysRevLett.117.040402, PhysRevA.101.022317, PhysRevLett.126.100402, Qi2017, PhysRevA.93.012103}, it requires both additional classical computation and physical setting adjustments, rendering it potentially very time consuming.

In contrast, the SIC POVM representation from Fig.~\ref{fig0:measurement-illustration}(d) turns out to be very efficient and practical. Moreover, we studied convergence properties of states with different overlap to either SIC POVM or Pauli basis, see Appendix~\ref{app:overlap}. Interestingly, purely local states analyzed by SIC significantly increase convergence when a component along one of the non-orthogonal SIC POVMs vanishes obeying the concept of unambiguous state discrimination~\cite{Dieks1988}. For randomly aligned, or correlated states, however, the effect vanishes making the local rotation of the SIC POVM irrelevant. However, for estimating Pauli observables, the rotation does matter, with the optimal alignment given such that the overlap with the Pauli basis is symmetric, see Appendix Fig.~\ref{figS9:SICrotated}(b). This rotated SIC will be notably useful for VQE applications, as those rely on efficient Pauli observable measurements.

Finally, we emphasize that the combination of SIC POVM measurements with the classical shadow formalism is well-suited for directly estimating higher-order polynomials of an unknown density matrix $\rho$. As discussed in Appendix~\ref{app:entanglement-probing} and following ideas from Refs.~\cite{elben2020,Neven2021}, this opens the door for mixed-state entanglement characterization of large-scale systems in real time, a likely requirement in the development of scalable quantum technology.

\vspace{0.5cm}
\noindent
\textit{\textbf{Note:} In the final stages of this project we became aware of independent and complementary research using SIC POVMs on a superconducting quantum processor~\cite{Fischer2022}}.
\vspace*{5mm}

\noindent
\textbf{Acknowledgments -- } RS, LP, MM, CE, MR, PS, TM and RB gratefully acknowledge funding by the U.S. ARO Grant No. W911NF-21-1-0007. We also acknowledge funding by the Austrian Science Fund (FWF), through the SFB BeyondC (FWF Project No. F7109), by the Austrian Research Promotion Agency (FFG) contract 872766, by the EU H2020-FETFLAG-2018-03 under Grant Agreement no. 820495, and by the Office of the Director of National Intelligence (ODNI), Intelligence Advanced Research Projects Activity (IARPA), via the U.S. ARO Grant No. W911NF-16-1-0070 and the US Air Force Office of Scientific Research (AFOSR) via IOE Grant No. FA9550-19-1-7044 LASCEM. CF was supported by the Australian Department of Industry, Innovation and Science (Grant No. AUSMURI000002). This project has received funding from the European Union’s Horizon 2020 research and innovation programme under the Marie Sk{\l}odowska-Curie grant agreement No 840450. It reflects only the author's view, the EU Agency is not responsible for any use that may be made of the information it contains. TM and RB acknowledge support by the IQI GmbH. \\
\textbf{Appendix -- } is available for this paper.\\
\textbf{Author contributions -- } MR conceived the project. CF and RK derived the theory results. RS, LP, MM, CE, PS, TM and MR performed the experiments. RS analyzed the data. RK (theory), MR and RB (experiment) supervised the project. All authors contributed to writing the manuscript.\\
\textbf{Competing interests -- } The authors declare no competing interests.\\
\textbf{Materials \& Correspondence -- } Requests for materials and correspondence should be addressed to RS (email: roman.stricker@uibk.ac.at).

\bibliographystyle{apsrev4-1new}
\bibliography{bibliography}
\clearpage
 \onecolumngrid
\setcounter{figure}{0}
\setcounter{equation}{0}
\setcounter{table}{0}
\setcounter{section}{0}
\makeatletter 
\renewcommand{\theequation}{A\@arabic\c@equation}
\renewcommand{\thefigure}{A\@arabic\c@figure}
\renewcommand{\thetable}{A\@arabic\c@table}
\renewcommand{\thesection}{A\@Roman\c@section}

\makeatother

\begin{center}
{\bf \large Appendix \\
Experimental single-setting quantum state tomography}
\label{SI}
\end{center}
\medskip 

\section{Experimental toolbox}
\label{app:setup}
Experimental implementations here and in the main text are performed on a trapped-ion quantum computer, which is schematically shown in Fig.~\ref{figS1:Setup}(a). The device operates on a string of $^{40}$Ca$^+$ ions stored in ultra high vacuum using a linear Paul trap. Each ion acts as a qubit encoded in the electronic levels $S_{1/2}(m=-1/2) = \ket{0}$ and $D_{5/2}(m=-1/2) = \ket{1}$ denoting the computational subspace~\cite{Schindler2013}. 

\begin{figure*}[ht]
    \centering
    \includegraphics[width=0.95\textwidth]{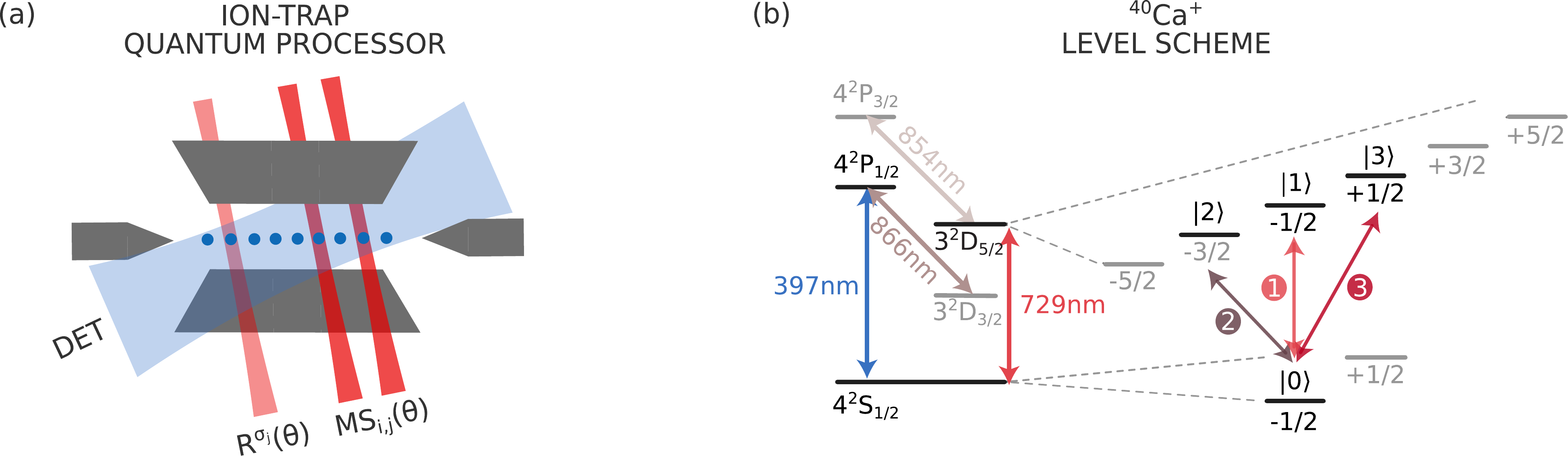}
    \caption{\textbf{Schematic of the trapped-ion quantum processor using $^{40}$Ca$^+$.} \textbf{(a)} Each ion within on the linear string encodes a qubit ($\ket{0}$, $\ket{1}$) or ququart ($\ket{0}$, $\ket{1}$, $\ket{2}$, $\ket{3}$). A universal gate set is realized upon coherent laser-ion interaction using tightly focused beams addressing single ions for local gates (bright red) and pairwise ions for entangling gates (dark red). Alternatively, we can globally address all ions simultaneously. Read-out is performed via collective fluorescence detection (DET), see text for details. \textbf{(b)} $^{40}$Ca$^+$ level-scheme with a dipole transition (\SI{397}{nm}) for cooling and detection, a metastable quadrupole transition (\SI{729}{nm}) for encoding qubits and ququarts within the Zeemann sub-manifold as well as transitions for repumping at \SI{854}{nm} and \SI{866}{nm}. The labeled transitions (1,2,3) provide coherent connection between all ququart states.} 
    \label{figS1:Setup}
\end{figure*}

Quantum state manipulation is realized upon coherent laser-ion interaction. A universal gate-set comprises of addressed single-qubit rotations with an angle $\theta$ around the x- and the y-axis of the form $\mathsf{R}^{\sigma_j}(\theta) = \exp(-i\theta\sigma_j/2)$ with the Pauli operators $\sigma_j = X_j$ or $Y_j$ acting on the $j$-th qubit, together with two-qubit M\o{}lmer-S\o{}renson entangling gate operations $\mathsf{MS}_{i,j}(\theta) = \exp(-i{\theta} X_i X_j /2)$~\cite{MSgate}. Multiple addressed laser beams, coherent among themselves, allow for arbitrary two-qubit connectivity across the entire ion string~\cite{Ringbauer2021}. Optionally, all ions can be addressed simultaneously using a global beam to enable both collective local operations as well as collective entangling operations $\mathsf{MS}(\theta) = \exp(-i{\theta}\sum_{j<\ell} X_j X_\ell /2)$. We choose whatever is more efficient for the underlying experiment. Initial state preparation in $\ket{0}$ is reached after a series of Doppler cooling, polarization-gradient cooling and sideband cooling. Read-out is realized by exciting a dipole transition coupled to the lower qubit level $\ket{0}$ and collecting its scattered photons, from which the computational basis states $\ket{0}$ and $\ket{1}$ can be identified. Thereby, a qubit's state is revealed by accumulating probabilities from multiple experimental runs. The dipole laser collectively covers the entire ion string, which enables a complete read-out in a single measurement round. Additional pump lasers support efficient state preparation as well as cooling and prevent the occupation of unwanted meta-stable states outside the computational subspace $\lbrace\ket{0},\ket{1}\rbrace$. Beyond qubit level, we hold equivalent control over the entire S- and D-state Zeeman manifold with up to 8 levels in each ion, allowing us to encode a higher dimensional quantum decimal digit (qudit), see Fig.~\ref{app:setup}(b). In this work we make use of up to four levels, denoting a ququart via additionally employing $D_{5/2}(m=-3/2)=\ket{2}$ and $D_{5/2}(m=+1/2)=\ket{3}$ alongside both qubit states. Ququart read-out can be performed via three consecutive fluorescence detections, where before the second detection the population between $\ket{0}$ and $\ket{1}$ is switched and before the third and final detection the population between $\ket{0}$ and $\ket{2}$ is switched, see main text Fig.~\ref{fig1:TomographySchematic}(c). Combining this three binary outcomes enables us to evaluate the ququart state probability within a single experimental run. Note that, fluorescence detection in case of measuring a bright state heats up the ion string due to photon scattering. This is counter acted by a sequence of Doppler and polarization-gradient cooling after each individual detection to keep the quality of post measurement bit-flip operations high and thereby suppress detection errors. 

The last paragraph of this experimental setup section is dedicated to technical errors limiting our tomography experiments. Performance on \emph{SIC tomography} is generally found to be moderately lower compared to Pauli tomography, see main text Fig.~\ref{fig3:5qAME}. Evidently, this performance decrease is not inherent to the method, but rather owed to technical errors for two main reasons: 1) The SIC tomography implementation generates an overhead of 5 local pulses per qubit used for mapping qubit to ququart, depicted in Fig.~\ref{fig1:TomographySchematic}(b), as well as two additional bit-flips realizing the four-outcome read-out, see Fig.~\ref{fig1:TomographySchematic}(c). In contrast each Pauli \emph{setting} requires just 1 local pulse per qubit, yet requiring three orthogonal measurements per qubit to extract full tomography information. Our trapped-ion setup has a single-qubit gate fidelity of 0.9994(3) estimated from Randomized benchmarking as well as 2-qubit gate infidelity of roughly 0.98(1) estimated from a decay of fully entangling MS-gates~\cite{Ringbauer2021}. The latter two-qubit gate fidelity might slightly fluctuate from pair to pair. Moreover cross-talk to adjacent ions have an influence. 2) After each insequence-detection the CCD camera demands for a \SI{3}{ms} pause to process the data, before the upcoming sequence continues. However, we utilize this time for re-cooling the ion-string via Doppler and polarization-gradient cooling. During this pause the ions are exposed to amplitude damping due to spontaneous decay from the upper D-states states, having a life-time at about \SI{1}{s}. Accounting for all this throughout measurement taking, we observe a loss of fidelity per qubit between Pauli and SIC tomography of less than \SI{1}{\%}. Thus, extracting complete tomography information in a single experimental run, i.e. shot, comes at the expense of a more complex experiment, that however is of technical nature and can be overcome on future devices with reduced single-qubit error-rates as well as with faster processing CCD-cameras, which nowadays already exist. More importantly, only SIC tomography offers the unique potential to predict non-linear properties in large-scale systems, as pointed out here further below and in the main text. 

\section{Comparing tomography methods}
\label{app:tomo-comparison}
We start off by presenting complementary experimental data covering the 5-qubit AME-state from main text Fig.~\ref{fig3:5qAME}. Whereas in the main text the focus was on scalable approaches, especially SIC-based classical shadows, here we compare these results to Pauli tomography according to Fig.~\ref{fig0:measurement-illustration}(a),(c) using linear inversion and MLE reconstruction following Eq.~\eqref{eq:MLEcvxopt}. Generally, linearly reconstructed density matrices exhibit negative eigenvalues in the case of insufficient statistics, which manifest themselves particularly in unphysical values for non-linear functions of the density matrix, as indicated by purity values above 1. Physical constraints can be imposed on lin-inv, through truncation of negative eigenvalues following Ref.~\cite{Smolin2012,Sugiyama2013,Guta2020}, which we refer to as \emph{projected least squares} (PLS). Importantly, PLS adds only a negligible computational overhead to lin-inv. Consequently, the covered set of tomography approaches is representative in the field of quantum computation and quantum information. Figure~\ref{figS2:5qAMEComparison} depicts results on fidelity, purity, and negativity, with the latter being a common measure of entanglement, albeit one that is challenging to access with experiments.

Note that for 5-qubit Pauli tomography $3^5=243$ settings are required, where for this particular case each setting was repeated a 100 times, leading to the stated maximum shot number of {24\,300}. 100 shots per setting prove to be a good trade-off in the trapped ion platform accounting for both statistics and systematic drifts in the experiment. The moderately lower performance (in terms of the numerical values) of SIC tomography is not inherent to the method, but comes from technical imperfections due to experimental overhead. Particularly, the mapping of the SIC POVM to ququart states, as well as the four-outcome read-out, both essential for single-setting tomography, add experimental complexity, see Appendix~\ref{app:setup}. 

\begin{figure*}[ht]
    \centering
    \includegraphics[width=\textwidth]{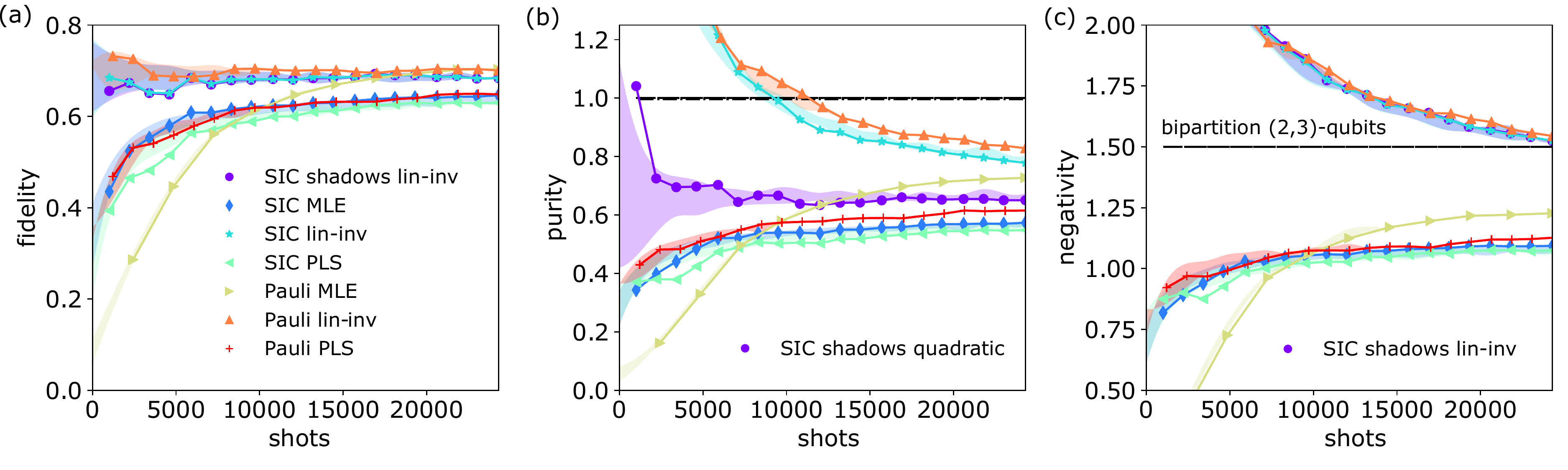}
    \caption{\textbf{Comparison between SIC and Pauli tomography for a standard set of reconstruction methods.} Using the 5-qubit AME-state from main text Fig.~\ref{fig3:5qAME}, we analyze fidelity, purity and negativity from Eq.~\eqref{eq:quantum-negativity}. Due to the increased experimental complexity, we observe that SIC methods generally converge to slightly lower values than Pauli, however this is not inherent to the method, but rather the implementation. \textbf{(a)} Fidelity converges quickest for lin-inv approaches, as expected for general linear functions, see Appendix~\ref{app:linearobservables}. Convergence of PLS, however, is very slow compared to all other methods. SIC MLE at low shot numbers performs better than Pauli MLE as SIC measurements provide the maximum information gain.
    \textbf{(b)} SIC-based classical shadows (``SIC shadows quadratic'') demonstrate the best convergence for purity. MLE methods converge similarly to (a), but lin-inv methods show very slow convergence with highly unphysical results, making their use problematic in practice. \textbf{(c)} As a commonly used entanglement measure, we evaluate the negativity of the reconstructed states, see text for details. As expected, this property converges the slowest, since it is a global property of the density matrix (making this non scalable), with lin-inv again showing highly unphysical results. Bright shaded regions represent 1 standard-deviation around the mean-value when averaging multiple sets at the given shot number.}
    \label{figS2:5qAMEComparison}
\end{figure*}

Figure~\ref{figS2:5qAMEComparison}(a) shows that the fidelity converges very quickly for lin-inv approaches, as is expected for general linear operators, see Appendix~\ref{app:linearobservables}.  Interestingly, Pauli MLE performs worse than all other methods at very few shows, while SIC-based methods perform very well, even for low shot numbers, since this measurement extracts the maximal amount of information for a generic state. 
PLS on the other hand, despite the computational efficiency, converges much slower than the other methods, even including MLE~\cite{Guta2020}. For quadratic measures in Fig.~\ref{figS2:5qAMEComparison}(b) it becomes clear that lin-inv methods produce highly unphysical results which take a long time to converge, limiting the usefulness of these methods in practice. PLS solves this problem, but again shows very slow convergence. Both problems are solved by the SIC-based classical shadow purity estimator from Eq.~\eqref{eq:purity-estimator}, which demonstrates both fast convergence, and accurate (physical) estimates. Finally we study negativity as a commonly used measure of quantum entanglement~\cite{Vidal2001}: 
\begin{equation}
\label{eq:quantum-negativity}
    \mathcal{N}(\rho) = \frac{\vert\vert\rho^{\Gamma_A}\vert\vert_1-1}{2} ,
\end{equation}
where $\rho^{\Gamma_A}$ represents the partial transpose with respect to subsystem $A$ of a bipartition (A,B) together forming $\rho$. The 1-norm in Eq.~\eqref{eq:quantum-negativity} denotes the absolute sum of all negative eigenvalues given by $\rho^{\Gamma_A}$. By construction, the partial transpose of a separable state cannot have negative eigenvalues, such that the negativity vanishes. 
Quantum negativity is an entanglement monotone: if it is positive, then the underlying state must be entangled. The converse, however, need not be true in general~\cite{Horodecki1997}.

As in the main text, we consider the bipartition [2,3] for the 5-qubit system. We find a significantly slower convergence than for Rényi-entropy (see Fig.~\ref{fig3:5qAME}). This is due to the requirement for processing the entire density matrix $\rho$ for quantum negativity as opposed to the classical shadow subsystem purity estimator, which is only evaluated on the smaller partition. The same convergence behaviour is confirmed by numerical simulations discussed in Fig.~\ref{figS6:TomographyComparison} below. 
Classical shadows, on the other hand, allow for tighter classifications of entanglement~\cite{elben2020,Neven2021}. The key idea is to probe the presence of negative eigenvalues in the partial transpose by comparing degree-$d$ polynomials in the underlying density matrices. As $d$ increases, these tests become tighter and eventually recover the negativity condition for entanglement ($\mathcal{N}(\rho) >0$). Classical shadows allow the estimation of all polynomials involved, but the classical post-processing cost becomes less and less favorable as the polynomial degree $d$ increases~\cite{Neven2021}.

We emphasize that from the given set of tomography schemes, SIC-based classical shadow estimators deliver the best results in terms of both convergence as well as practicability. MLE reconstruction typically fails due to lack in computational power and lin-inv neglects physical constraints --- a shortcoming that becomes very pronounced for non-linear observables. Incorporating physical constraints by projection (PLS) remains computationally efficient, but leads to considerably poorer convergence behaviour. Finally, classocal shadows are the only approach for efficiently predicting non-linear functions, such as mixed-state entanglement~\cite{elben2020,Neven2021} of large-scale systems.

\section{Complementary results on rotated 8-qubit GHZ-state}
In the live update discussion of main text Fig.~\ref{fig4:Live8qGHZrotated} all two-qubit bipartitions were evaluated on top of fidelity and purity. Here we analyzed the same data in post-processing to present results on all bipartitions ([1,7]-qubits, [2,6]-qubits, [3,5]-qubits and [4,4]-qubits). This evaluation for all possible pairs was not possible in real-time on a standard desktop computer. We average until 50,000 shots (roughly \SI{1200}{s} of data taking), where the SIC based classical shadow purity estimator (Eq.~\eqref{eq:purity-estimator}) of the 8-qubit state, representing the most demanding property, has converged. Data was taken for a total of \SI{12500}{s}. Note that at around 40,000 shots almost no change is visible in the classical shadow purity as well as the respective standard-deviation. The latter is due to systematic experimental drifts over the course of the long time measurement. Here we batched 100 shots for each approximate $\rho$ following Eq.~\eqref{eq:rho_classical_shadows}, which speeds up the analysis without significant loss in accuracy, see Appendix~\ref{app:batch-size} for a thorough study of batch-sizes considering both analysis time and accuracy. We also observe that the convergence of bipartitions from lin-inv become significantly slower than classical shadows as the subsets get bigger. For PLS individual bipartitions even visually separate, indicating very slow convergence behaviour, confirming similar observations throughout the main text and appendix.

\begin{figure*}[ht]
    \centering
    \includegraphics[width=\textwidth]{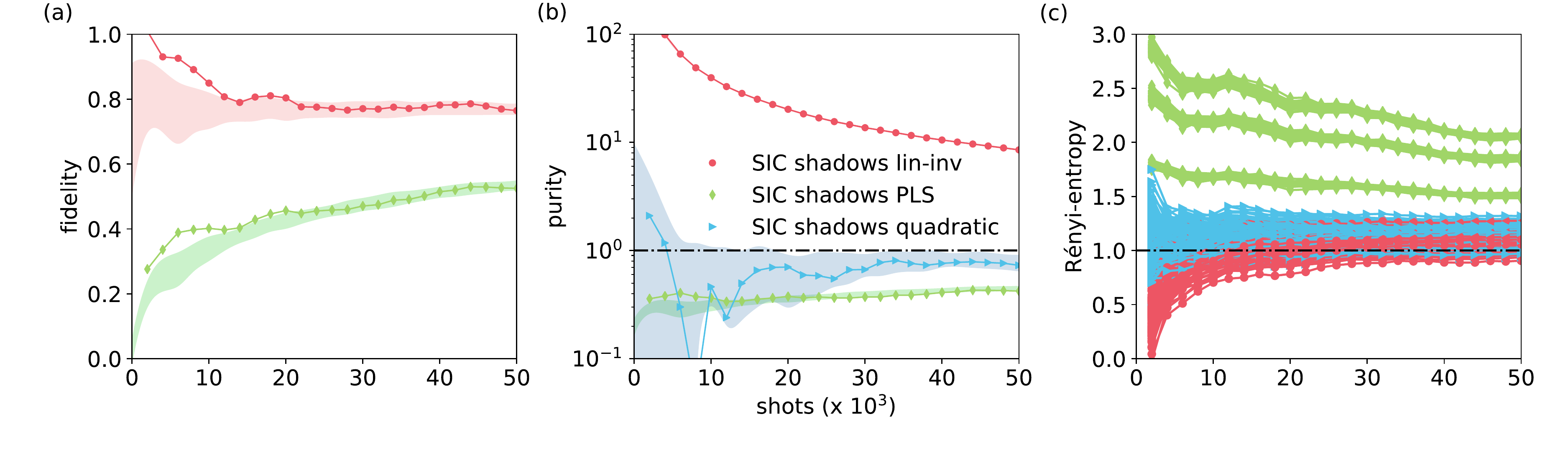}
    \caption{\textbf{Detailed post-processing analysis of the rotated 8-qubit GHZ-state from main text Fig.~\ref{fig4:Live8qGHZrotated}}. Results now cover all bipartitions, which was not feasible in real time on a desktop computer. Shaded regions represent 1 standard-deviation around the mean-value when averaging over multiple sets of a total of \SI{12500}{s} of data. \textbf{(a)} We find fidelity from lin-inv to converge before 20,000 shots, i.e.\ \SI{500}{s} of data-taking. \textbf{(b)} Classical shadow purity as converged well by 40,000 shots, or about \SI{1000}{s} of data-taking, which in the post-processing analysis here is also confirmed by the saturation of the standard-deviation towards more shots. This remaining fluctuations arise from systematic drifts in the experiment over the course of the measurements. \textbf{(c)} SIC-based classical shadows converge even quicker than the time it takes to measure just one shot per Pauli basis. On the other hand, lin-inv and PLS are significantly slower, with PLS even showing a distinct separation between the individual bipartitions ((1,7)-qubits, (2,6)-qubits), (3,5)-qubits, (4,4)-qubits).}
    \label{figS3:Acc8qGHZrotated}
\end{figure*}

Pauli tomography was neglected here as all bipartition Rényi-entropies from SIC-based classical shadows have converged faster than the time it would take to just obtain a single measurement per Pauli setting. Moreover, there is no efficient way of adding physical constraints to the reconstructed state $\rho$ as MLE for an 8-qubit state is unfeasible on standard desktop computers, and PLS delivers significantly worse convergence. The findings here agree with those of the 5-qubit AME-state, previously discussed. SIC-based classical shadow estimators demonstrated to be most suitable for predicting non-linear functions towards larger system-sizes. Apart from the exemplified quadratic measures utilized in purity and Rényi-entropy, classical shadow estimators support higher polynomial functions following the same principles~\cite{elben2020}.

\section{Classical shadows convergences and practicability}
\label{app:batch-size}
In the live update studies from main text Fig.~\ref{fig4:Live8qGHZrotated}(b), we demonstrated real-time analysis of ongoing SIC tomography experiments, until all properties of interest were accurately estimated. How long the analysis can be performed in real time, however, depends on the size of the subsystems to be analyzed, as well as the type and number of functions to be estimated in parallel. Whereas time consumption for linear observables by means of SIC classical shadows remains constant over the course of data acquisition --- individual experimental shots are simply processed and accumulated according to Eq.~\eqref{eq:shadow-convergence} --- non-linear functions generally require higher order products of all combinations from the given set of shots, see Eq.~\eqref{eq:purity-estimator}. The resulting scaling is governed by the maximum polynomial order of the function in question minus one. For quadratic functions, in particular, the computation time for every new shot grows linearly with the number of already accumulated shots $M$. Hence, over the course of the data-acquisition, this can eventually become computationally demanding, especially for large problem sizes where a large number of shots must be accumulated. Thus, to perform non-linear function analysis in real-time, one either keeps subsystems (and by that shot requirements) relatively small, or uses a so-called batching approach, where multiple shots are bundled to estimate a more accurate $\rho$ and thereby reduce the number of costly higher order product combinations. In this section, we discuss the potential, as well as limits of increasing the batch-size in contrast to comparing single shots. To develop a quantitative statement we perform simulations on a 3-qubit linear cluster state considering only statistical quantum projection noise. Along those lines, simulated tomography data is sampled from a multinomial distribution considering different sample sizes to mimic experimental shots. Each noisy set of tomography data is then reconstructed by means of SIC-based classical shadows, in particular, focusing on the quadratic purity estimator from Eq.~\eqref{eq:purity-estimator}. Since we are only interested in changes in convergence behaviour, the choice of state does not affect the qualitative statement of these numerical simulations. We compare 100 different shot numbers for 100 different batch-sizes in a practical regime for 3-qubit states. On the one extreme we compare all combinations of shadows obtained from a single shot each, which is known to be statistically optimal (see also the discussions around Eq.~\eqref{eq:purity-estimator}). On the other extreme we compare just two shadows, each linearly more accurate, since they are obtained from averaging half the shots. In between we can trade-off the quality of the individual estimators versus the number of comparisons between estimators. By additionally accounting for analysis time a sweet-spot can be determined on a case-by-case basis. 

We analyze the convergence behaviour of these different strategies in Fig.~\ref{figS4:3qClassicalShadows2D} by means of the standard deviation of the classical shadow purity estimator, when repeating every point in the 2d-grid 100 times. We plot ``shots'' against ``batch-size'' using a logarithmic color coding. Darker regions refer to a more accurate purity estimate for the underlying state. A region at a specific color always shows a certain almost vertical extension, indicating that batching has a very small effect on accuracy. At the same time, however, it offers to speed up the data analysis significantly. 
We note that the observed pattern qualitatively remains the same for bigger system sizes and accordingly more shots. Especially for bigger subsystems, where a lot of statistics is required, batching has the potential to significantly speed up analysis. We made use of this method in main text Fig.~\ref{fig4:Live8qGHZrotated}, where 100 experimental shots were used for each classical shadow. This turned out to be sufficient to estimate all relevant target properties in real-time.  
\begin{figure*}[ht]
    \centering
    \includegraphics[width=0.8\textwidth]{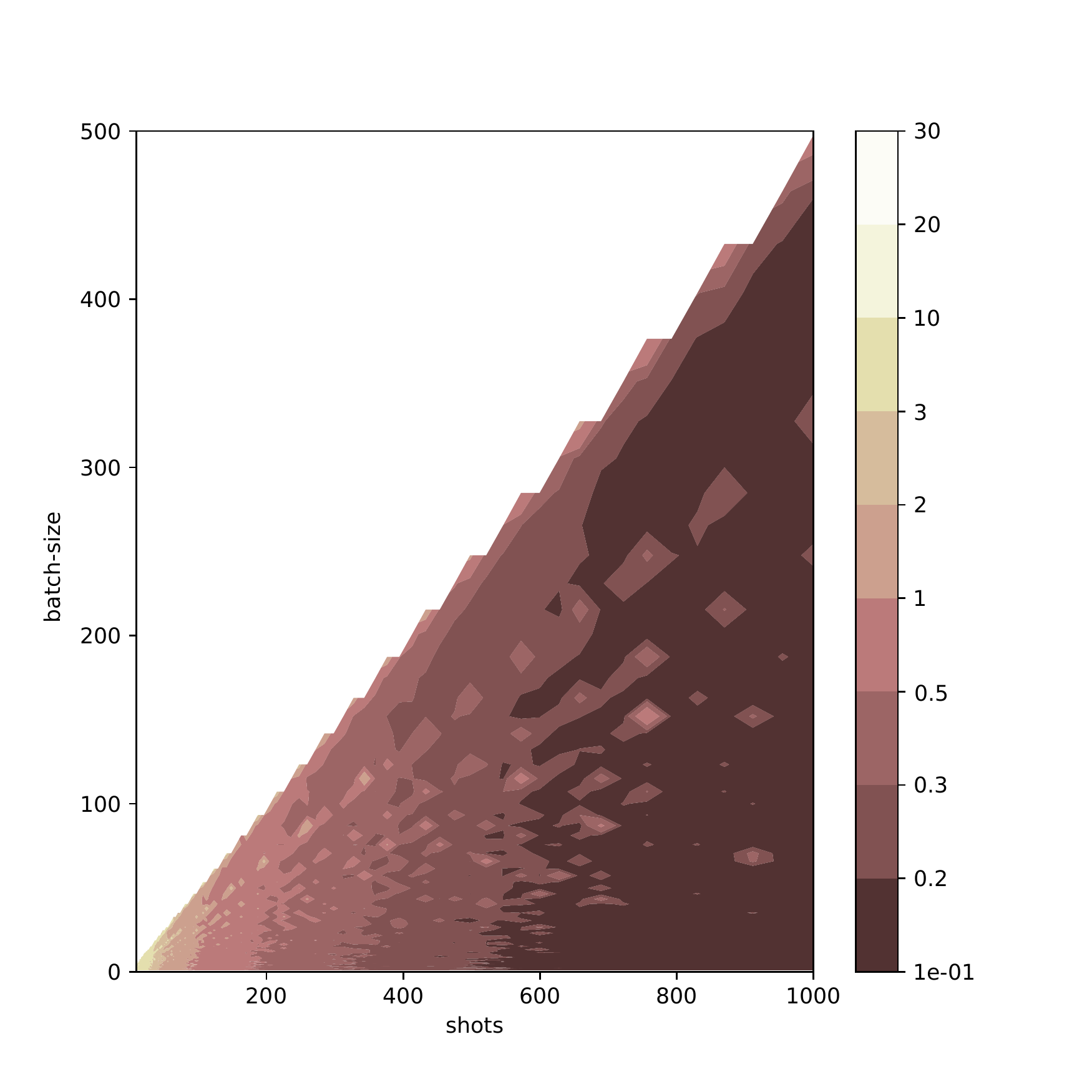}
    \caption{\textbf{Numerical simulations of batching SIC based classical shadow purities.} We plot the purity's standard deviation from 100 repetitions of each point in the 2d grid using a logarithmic color scheme. A 3-qubit linear cluster state is utilized in these numerical simulations, although the particular state and qubit number does not effect the qualitative picture. Regions of similar accuracy are found to have a certain almost vertical extension, indicating that bunching does not significantly degrade accuracy, while greatly reducing computation complexity. The ideal batch size must be evaluated on a case-by-case basis, weighing convergence against analysis time.}
    \label{figS4:3qClassicalShadows2D}
\end{figure*}

\section{Fast state identification with classical shadows}
\label{app:quantumgame}
We now consider a quantum game, where a quiz master targets us experimenters, having access to a quantum computer with a state to prepare and a question about a certain property. Importantly, the question is only revealed after performing the experiment. Such a setup is partly inspired by recent works on quantum-enhanced learning~\cite{Huang2021}. 
Here, we follow a game where the goal is to prepare a random target state from a fixed set of 16 states and then pinpoint this target state in as few experimental shots as possible. Fig.~\ref{figS5a:DecisionGame} depicts the results, where SIC tomography allows us to receive a reward in less than 20 shots. In stark contrast, the minimum number of shots for the same task using Pauli tomography is $3^4=81$. The figure of merit for this game is the estimated distance between the states as the minimum difference in fidelity between the target state and all others. A reward is obtained as soon as that minimum distance remains positive.

\begin{figure}[ht]
    \centering
    \includegraphics[width=0.4\columnwidth]{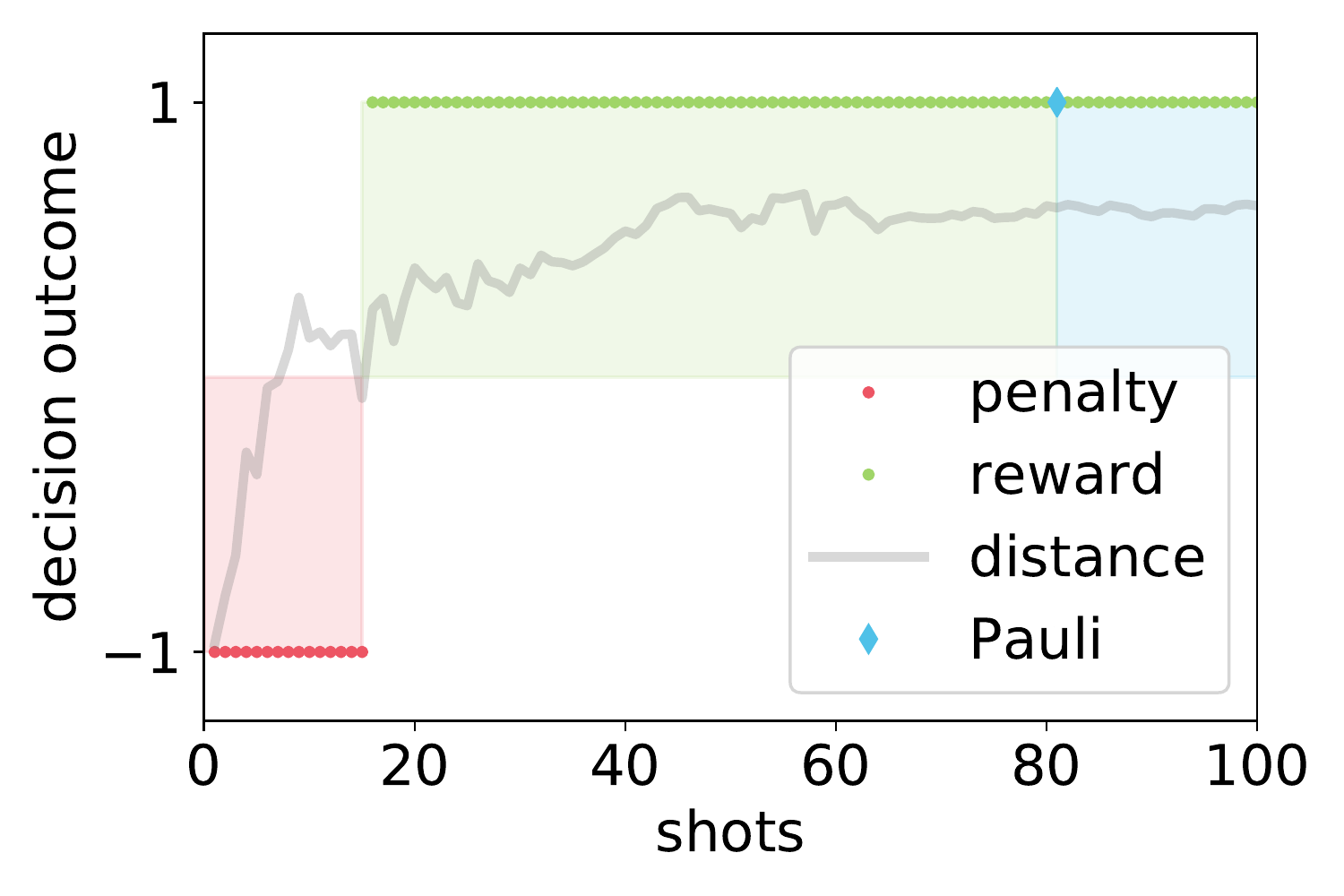}
    \caption{\textbf{Decision game.} A quiz master constructs a quantum game where she provides us experimenters, having access to a quantum computer, with a circuit to prepare a state and perform measurements. After the measurement stage, we must report a certain property. Importantly, the property is only revealed after the experiment has been completed, preventing us from simply measuring the property in question. The aim is to win this challenge with as few experimental shots as possible. \newline In the present case a single state from a fixed set was randomly chosen and implemented, which we then had to identify in as few shots as possible. SIC tomography delivered a reward in less than 20 shots, whereas the smallest Pauli tomography instance requires $3^4=81$ shots. Distance refers to the minimum fidelity difference between the target state and all others, see Fig.~\ref{figS5b:Ortogonal4qCluster}(b).}
    \label{figS5a:DecisionGame}
\end{figure}

Specifically, in the decision game of Fig.~\ref{figS5a:DecisionGame}, we randomly prepare 1 out of 16 orthogonal 4-qubit linear cluster states. These states correspond to all combinations of input states $\ket{\pm}$ on the 4 qubits. The target state is then identified by comparing the the linear-inversion fidelities between the prepared state and all 16 possible targets. Note that, fidelity is a good option for state identifications as linear observables generally converge quickest under lin-inv, as we showed experimentally (see Fig.~\ref{fig3:5qAME} and  Fig.~\ref{fig4:Live8qGHZrotated}) and also formally derived in Appendix~\ref{app:linearobservables}. Figure~\ref{figS5b:Ortogonal4qCluster}(a) depicts fidelities with respect to all 16 states for SIC and Pauli tomography against the number of experimental shots. As expected, only one of the curves is close to fidelity 1 indicating the target state, whereas all others approach 0 within experimental uncertainties. The distinguishing performance is even better visualized by plotting the difference between the target state fidelity and all others in  Fig.~\ref{figS5b:Ortogonal4qCluster}(b). The minimum of these distances (i.e.\ the worst case) is also shown in the background of Fig.~\ref{figS5a:DecisionGame} and used as our state distinguishability criteria. Surprisingly, less than 20 experimental shots are required to pin point the state, clearly undercutting the minimum for Pauli tomography. This argument can in principle be extended to larger systems, where we have seen that properties such as linear observables or Rényi-entropies (see Fig.~\ref{fig4:Live8qGHZrotated}) converge much faster than the $3^N$ experimental shots required for the smallest instance of Pauli tomography. Fast state identification represents another fruitful example where SIC-based classical shadows not only appear more practical but also impart quicker convergence for certain tasks than Pauli tomography. The difference in performance between the tomography approaches again originates from SIC's bigger experimental overhead, see Appendix~\ref{app:setup}.

\begin{figure*}[ht]
    \centering
    \includegraphics[width=0.8\textwidth]{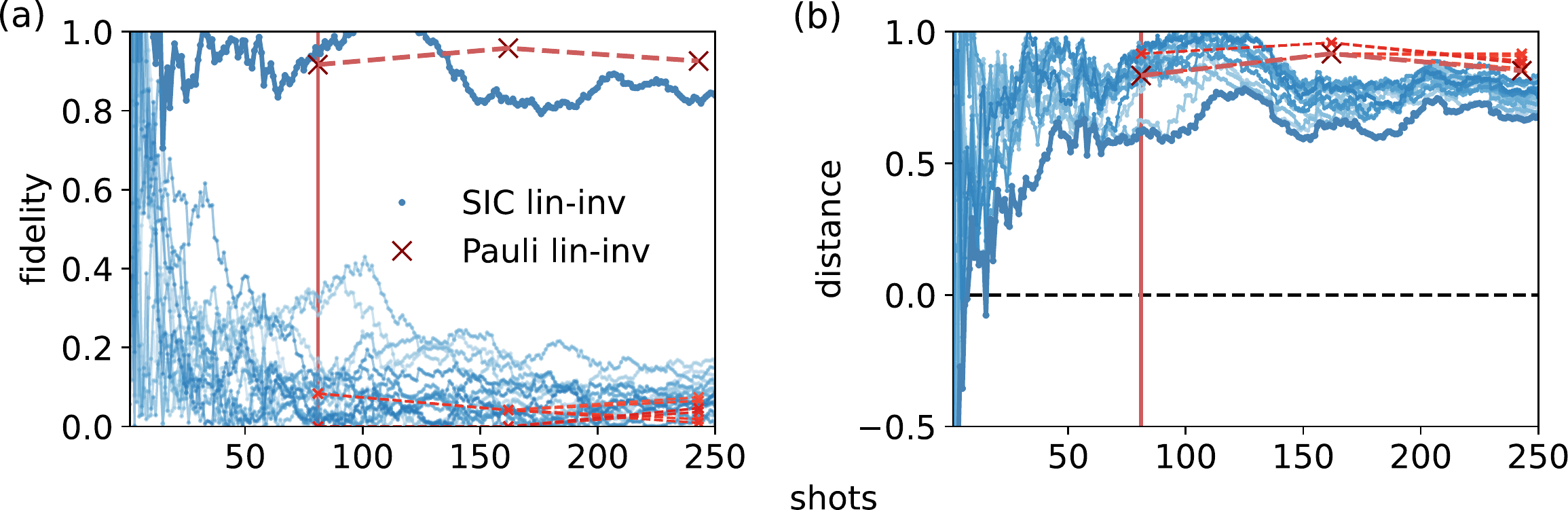}
    \caption{\textbf{Fast state identification from SIC tomography.} Complementary analysis of the quantum game from Fig.~\ref{figS5a:DecisionGame}. \textbf{(a)} Estimated fidelity with each of the 16 possible states. Via SIC tomography, we manage to identify the state with less than 20 shots, clearly undercutting the smallest Pauli implementation using $3^4=81$ shots. \textbf{(b)} For illustration purposes, we plot the fidelity difference between the generated state and all possible states as individual curves, which provides a distance measure. The minimum of these differences was used as a state distinguishability criteria to gain a reward.}
    \label{figS5b:Ortogonal4qCluster}
\end{figure*}

\section{Numerical simulations comparing tomography approaches}
\label{app:simulations}
This section aims to support previous experimental findings through numerical simulations of convergence under statistical quantum shot noise. To this extent, simulated tomography data is generated by sampling from a multinomial distribution, where the sample size is given by the number of shots. For the sake of comparability, we perform these numerical convergence simulations using use the 5-qubit AME-state from main text Fig.~\ref{fig3:5qAME} and Eq.~\eqref{eq:AMEstate}. We study both SIC and Pauli tomography and cover all reconstruction methods as experimentally studied in Appendix~\ref{app:tomo-comparison}, i.e.\ lin-inv, PLS, MLE, and, for quadratic functions also SIC-based classical shadows. Numerical results are presented in Fig.~\ref{figS6:TomographyComparison} on a double-logarithmic scale, covering infidelity (to better illustrate convergence), and trace-distance as another commonly used property for state distinguishability
\begin{equation}
\label{eq:trace-distance}
    T(\rho,\sigma) =\frac{1}{2}\Tr\big [\sqrt{[(\rho-\sigma)^\dagger(\rho-\sigma)}\big ] .
\end{equation}
Lin-inv approaches are again found to converge significantly faster than all other methods in terms of infidelity, while the trace-distance shows the opposite behaviour due to being much more complicated to estimate. This is indicative of the unphysical nature of lin-inv estimates.
In contrast, MLE approaches respects those physical boundaries, which result in valid values on all estimators at the cost of a slower convergence. Here we also confirm previous experimental observations (see Fig.~\ref{figS2:5qAMEComparison}) that SIC MLE performs better for small shot numbers than Pauli MLE. This is likely due to the fact that SIC POVMs provide the optimal information gain in each shot. Curiously, however, for larger numbers of shots, Pauli MLE eventually converges faster. This might be due to the overcompleteness of the Pauli basis but remains to be fully understood. While not specifically presented in this manuscript, we found the very same behaviour for other states having different overlap with Pauli basis and SIC POVM, hence this effect does not seem to be due to the choice of state. PLS again produces the slowest converges as we have already seen across our experimental studies. 

Overall, quantum negativity exhibits the slowest convergence, which confirms what we experimentally observed in Fig.~\ref{figS3:Acc8qGHZrotated}. We note that the negativity calculation always requires the full density matrix independent of the bipartition. This is in stark contrast to R`enyi-entropy based measures and general non-linear functions supported on a subset of the system, which can be estimated efficiently using SIC-based classical shadows. The latter again show the fastest convergence for purity according to Eq.~\eqref{eq:purity-estimator}, which will generally be true for classical shadows by construction. To keep simulations efficient the batch-size was chosen as a constant fraction of the total number of shots, which has a negligible effect on convergence, see Appendix~\ref{app:batch-size}. Finally, these numerical simulations could reproduce all features and findings from the experimental studies, thus indicating that there are no principal limitations to our experimental implementation of both SIC and Pauli tomography. Moreover, all reconstruction methods converge to the same values indicating no principle draw back across the various methods.

\begin{figure*}[ht]
    \centering
    \includegraphics[width=0.8\textwidth]{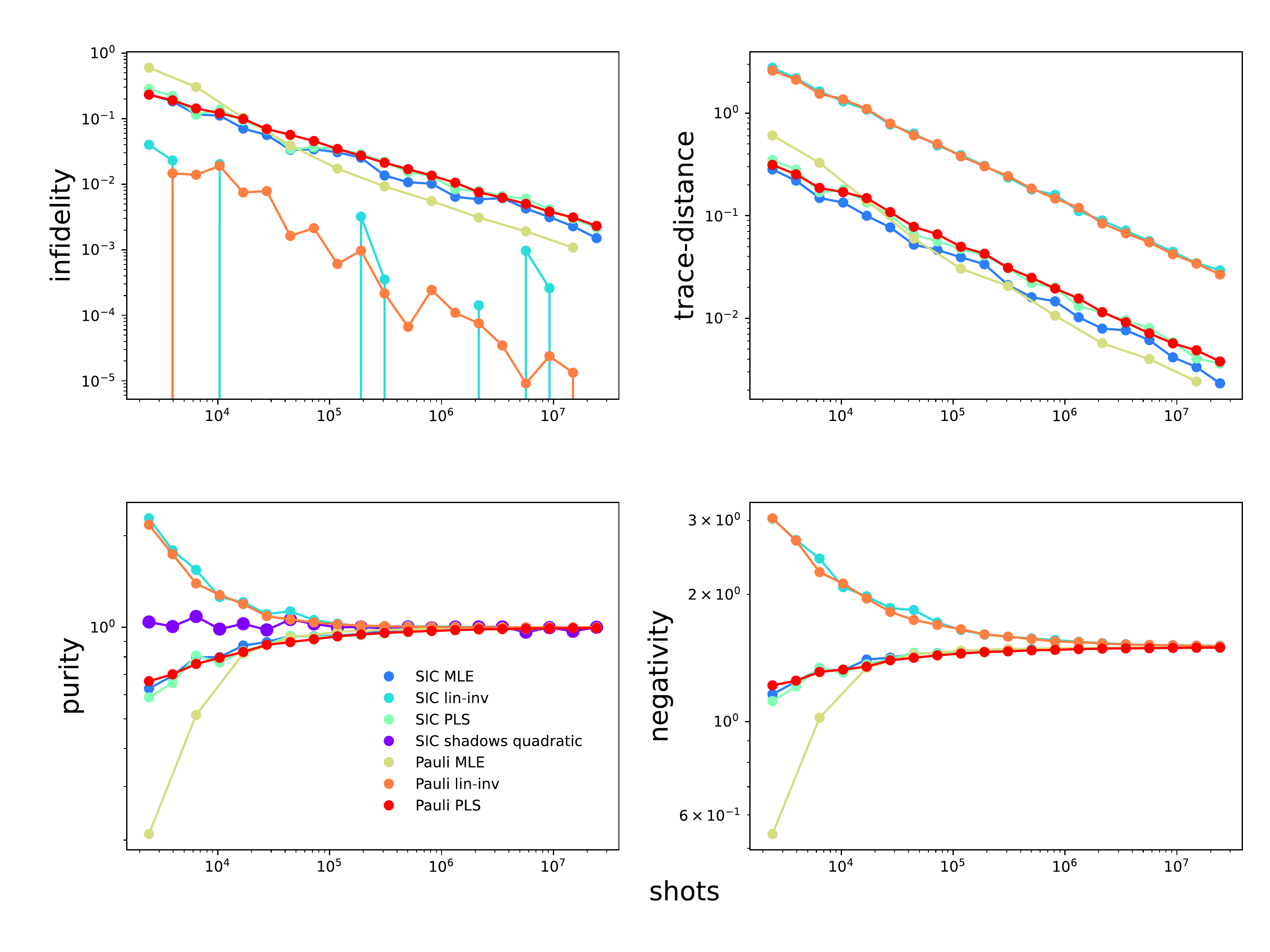}
    \caption{\textbf{Numerical simulations incorporating statistical noise covering all tomography approaches.} To this extend quantum shot noise is incorporated via sampling ideal tomography data from a multinomial distribution, with sample sizes corresponding to number of shots. We additionally present infidelity (1-fidelity) and complementary trace-distance following Eq.~\eqref{eq:trace-distance} and plot in double-logarithmic scale to more clearly visualize several orders of magnitudes. Note that infidelity estimation with lin-inv sometimes delivers unphysical results, indicated by values out of range of the plot).}
    \label{figS6:TomographyComparison}
\end{figure*}

\section{Overlap with tomography basis in experiment \& simulation}
\label{app:overlap}
Our experimental studies covered by Fig.~\ref{fig3:5qAME} and~\ref{fig4:Live8qGHZrotated} focus on maximally entangled states that were differently aligned with respect to Pauli basis and SIC POVM in order to not particularly favor either of them. We resume this discussion in more detail by investigating states of varying orientation, with respect to the measurement states, to study its effect on both experiments and numerical simulations. To this extend, convergence behaviour of first purely local states followed by entangled states is demonstrated. 

\begin{figure*}[ht]
    \centering
    \includegraphics[width=\textwidth]{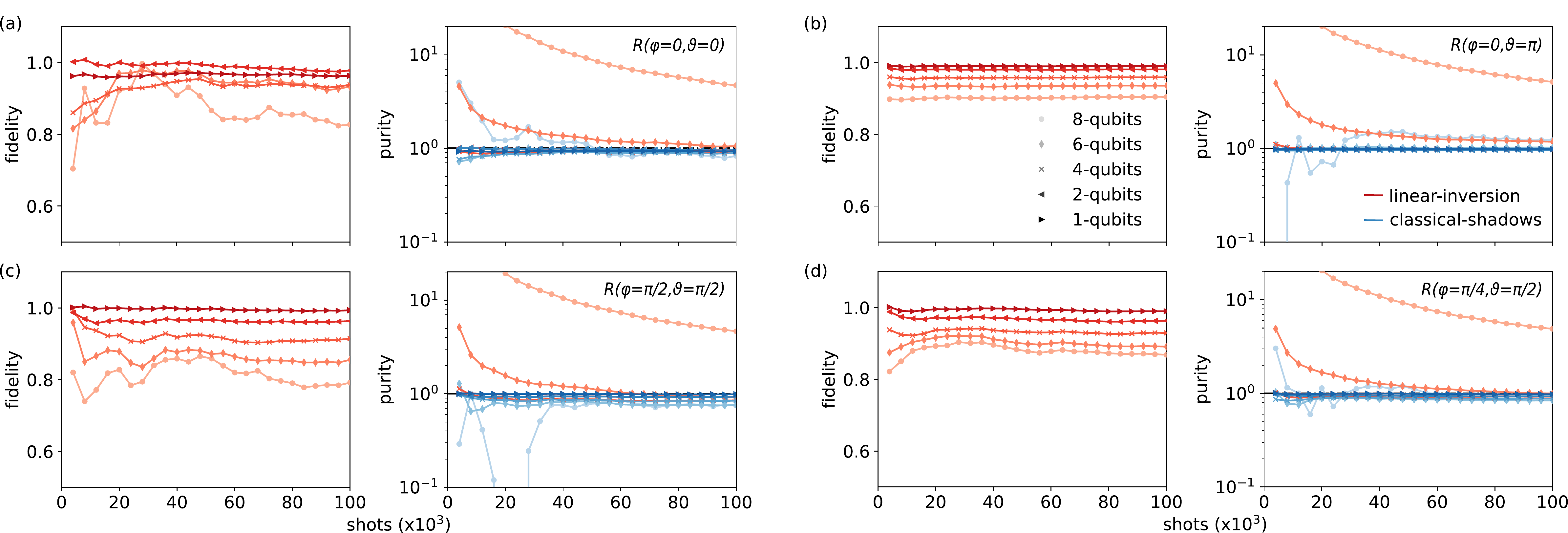}
    \caption{\textbf{Experimental convergence investigation of local states with different orientation, i.e.\ basis overlap.} The given states were collectively prepared on 8 qubits and partially traced to study multiple subsets. \textbf{(a)} $\ket{0}$ state perfectly aligned with the first SIC-vector. \textbf{(b)} $\ket{1}$ state orthogonal to the first SIC-vector, whose component completely vanishes. Convergence in (b) is significantly faster than in (a) as only 3 of the non-orthogonal SIC vectors take part. Boosted convergence is related to unambiguous state discrimination~\cite{Dieks1988} limiting non-orthogonal measurements. \textbf{(c)} Superposition state with maximized overlap with one of the SIC-vectors. \textbf{(d)} Superposition state with minimum overlap with the SIC-vectors. Here, convergence of the state in (d) is better than (a) as the state's information is more regularly distributed over the SIC-vectors, not favouring one, resulting in a higher information gain. Also note that the achievable fidelity generally drops with higher qubit number due to experimental imperfections.}
    \label{figS7:LocalStates}
\end{figure*}

We start off by presenting experimental results on SIC tomography of local states up to 8 qubits having different overlap with the Pauli basis and SIC POVM, depicted in Fig.~\ref{figS7:LocalStates}. Analyzed metrics include fidelity and purity using lin-inv and classical shadows as those represent the scalable approaches. The given states were collectively prepared and subsequently partial traced to study multiple qubit numbers, see Appendix~\ref{app:setup}. Brighter colors denote higher qubit numbers. Among the Pauli basis states $\ket{0}$ and $\ket{1}$, the latter performs significantly better under the SIC POVM. This reflects a general theme, where in using non-orthogonal bases it is preferable if one component vanishes, which is related to the concept of unambiguous state discrimination~\cite{Dieks1988}. In the example of \textbf{$\ket{1}$} this is indeed true for the first SIC-vector aligned with $\ket{0}$ (see Fig.~\ref{figS9:SICrotated}(a)). For superposition states, we find that states that maximize the overlap with one SIC vector (Fig.~\ref{figS7:LocalStates}(c)) perform worse than those that feature more even overlap with all SIC vectors (Fig.~\ref{figS7:LocalStates}(d)). Generally higher qubit number states result in lower fidelities, due to experimental imperfections. We again emphasize the particularly fast convergence of the SIC-based classical shadow purity estimator, being here only moderately slower than fidelity.

\begin{figure*}[ht]
    \centering
    \includegraphics[width=\textwidth]{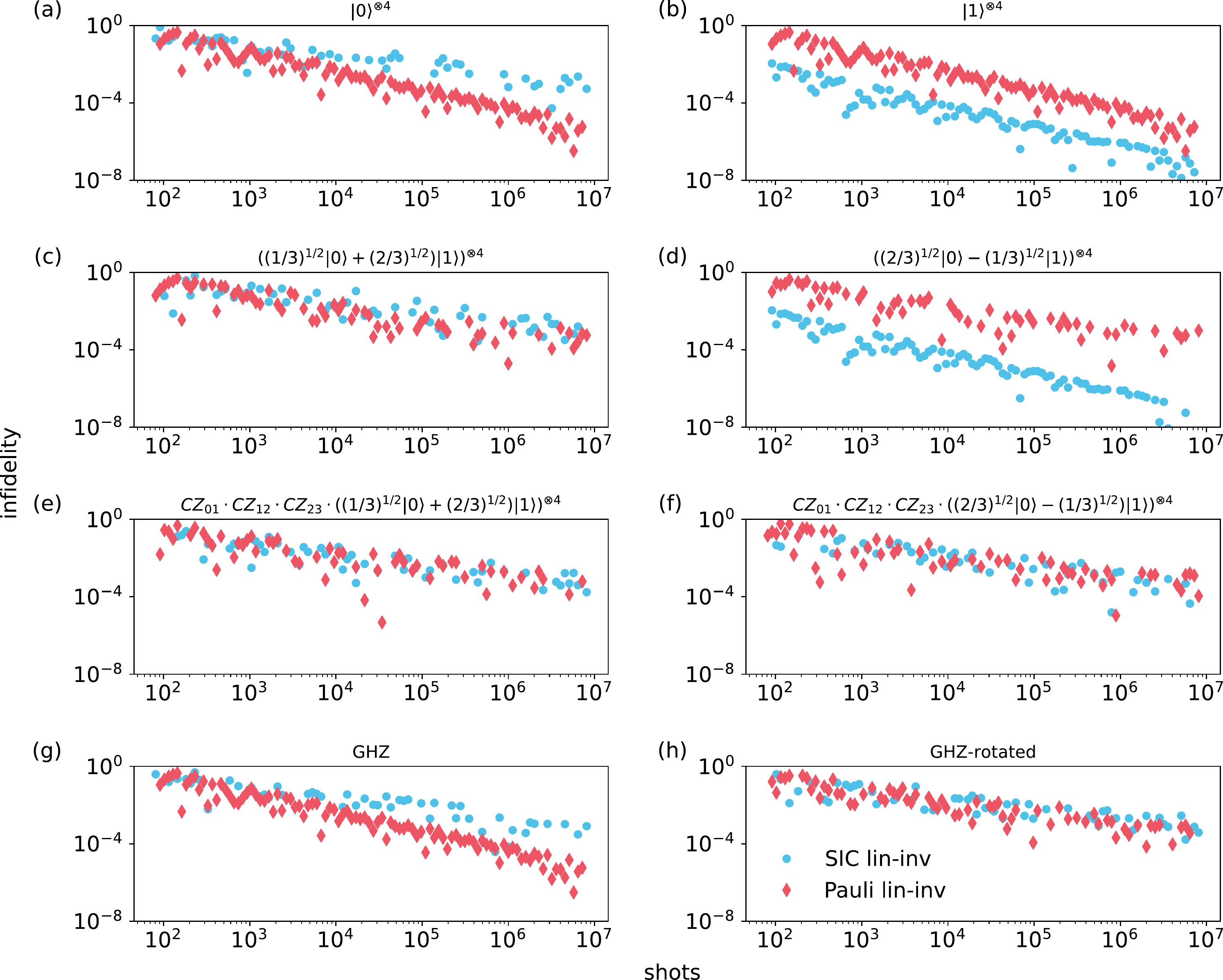}
    \caption{\textbf{Convergence of SIC and Pauli tomography in numerical simulations for states with different basis overlap.} \textbf{(a-d)} Local states that align with a SIC-vector (a,c) show slower convergence for SIC tomography than those orthogonal to a SIC vector (b,d), confirming the experimental observations of Fig.~\ref{figS7:LocalStates}. Pauli tomography, on the other hand, generally performs best for states aligned with the basis (a,b).  \textbf{(e-h)} The differences in convergence performance vanish for SIC tomography when considering entangled states. In case of the Pauli basis, however, some improved convergence can still be observed for specific states, such as the GHZ-state (g), which is aligned with the Pauli basis and thus performs better.}
    \label{figS8:BasisOverlap}
\end{figure*}

To confirm these results for purely local states and to extend the discussion to entangled states of different orientation, we performed numerical simulations on 4 qubit states under quantum shot noise as previously explained within Appendix~\ref{app:simulations}. Figure~\ref{figS8:BasisOverlap} contains results for SIC tomography as well as now for Pauli tomography for predicting infidelity via lin-inv, which is efficient and enables scalability in contrast to MLE, that however, would not change the essence of statements. Figures~\ref{figS8:BasisOverlap}(a-d) cover local states both along a SIC-vector (a,c) as well as orthogonal to it (b,d), i.e.\ in the opposite direction of the Bloch-sphere. These simulations reproduce the effects seen experimentally in Fig.~\ref{figS7:LocalStates}, where states aligned with a SIC vector (a,c) converge more slowly than those orthogonal to it (b,d). In contrast for Pauli tomography, representing an orthogonal basis, best results are obtained for states that are aligned with the basis, as seen in Fig.~\ref{figS8:BasisOverlap}(a-b). 

In case of local states, measurements are uncorrelated and the above state dependent convergence is to be expected. The situation might change when moving to entangled states. In  Figs.~\ref{figS8:BasisOverlap}(e-f) we apply controlled-phase gates states along the SIC vectors from (c,d) to generate an entangled state that is still (in a sense) aligned with the SIC basis. The resulting convergence is similar for both tomography methods. Curiously, SIC tomography performs equally well as for special states like a GHZ state (Fig.~\ref{figS8:BasisOverlap}(g)) or a more generic rotated GHZ state (Fig.~\ref{figS8:BasisOverlap}(h)). In contrast, Pauli tomography, which also shows little dependence on the state once entanglement is involved, does outperform for the GHZ state (g), which is perfectly aligned with the Pauli basis. Importantly, the rotated GHZ state, which is somewhat randomly aligned with both SIC POVM and Pauli basis (h) shows no difference between the tomography methods. The same rotated state, yet on 8 qubits, was utilized for the real-time analysis in main text Fig.~\ref{fig4:Live8qGHZrotated} to make sure the comparison does not favour any approach.

\begin{figure*}[ht]
    \centering
    \includegraphics[width=0.7\textwidth]{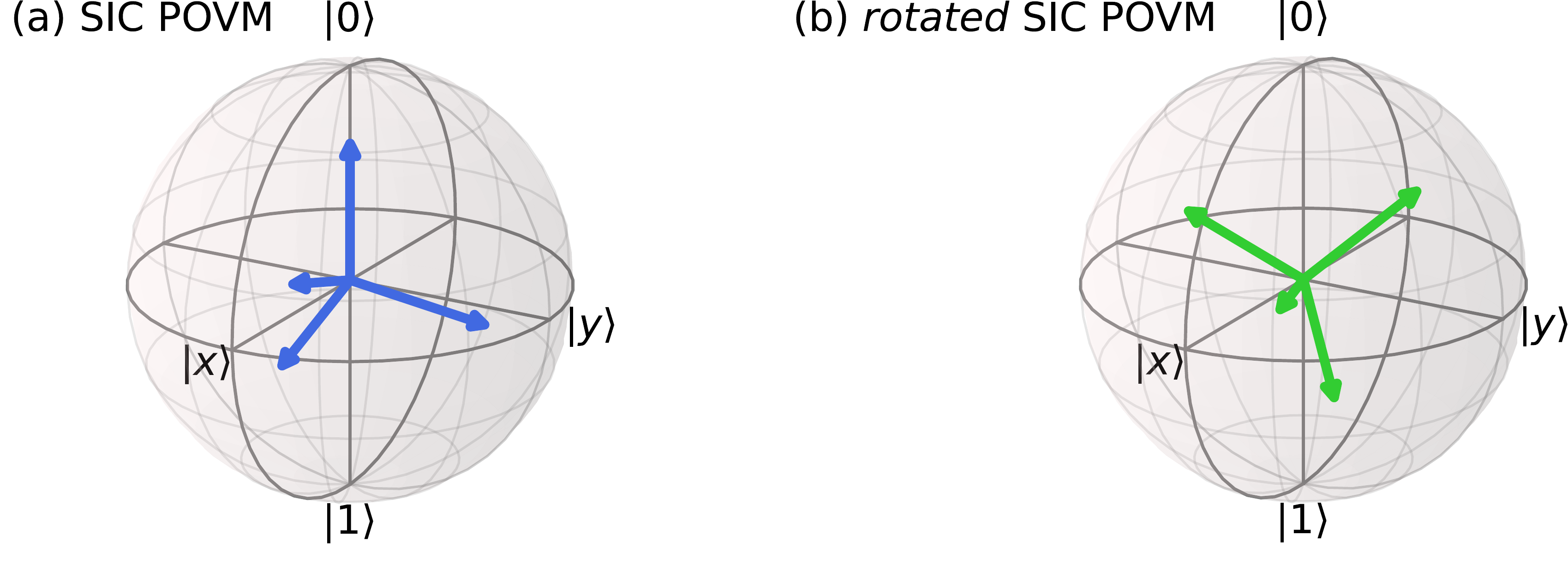}
    \caption{\textbf{Various representations of SIC POVM for maximizing information gain} \textbf{(a)} Standard representation of SIC POVM as used throughout this manuscript on both experiments and numerical simulations. \textbf{(b)} Rotated SIC optimized for predicting Pauli observables.}
    \label{figS9:SICrotated}
\end{figure*}

Inspired by these findings, it can be beneficial to rotate the SIC POVM used throughout this manuscript from Fig.~\ref{figS9:SICrotated}(a) to favour the particular state or application that it is used for. Particularly, the alignment depicted in Fig.~\ref{figS9:SICrotated}(b), which has equal overlap with every Pauli basis vector, leads to improved prediction of Pauli observables. Intuitively, this alignment ensures that each shot contains equal information about every Pauli observable. Geometrically, imagine a triangle spanned by the three Pauli basis vectors in positive direction. Then, the orthogonal state to the first SIC-vector ($\lbrace-1/\sqrt{3},-1/\sqrt{3},-1/\sqrt{3}\rbrace$) perpendicularly intersects this triangle area through its center. The front area of the SIC tetrahedron orientates parallel with the triangle. Thus, favouring Pauli eigenstates or stabilizer states, for which we find plenty of applications across the entire field of quantum computation and quantum information. Particularly, VQE applications, which rely on the efficient estimation of many Pauli observables could benefit from this choice of SIC POVM. 

Note that, experiments on rotated SICs can straightforwardly be realized by changing the local mapping sequence from Fig.~\ref{fig1:TomographySchematic}(b) and do not add further complexity to the implementation.

\section{SIC POVM-based classical shadow framework}
On the contrary, SIC POVMs were originally discovered, because of their exceptional tomographic capabilities~\cite{Renes2004SIC}. Ever since, both (tensor products of $N$) single-qubit SIC POVMs and global $2^N$-dimensional SIC POVMs have served as idealized measurements for state reconstruction tasks. Single-qubit SIC POVMs, also known as tetrahedral POVMs, have also been used to acquire training data for neural network quantum state tomography~\cite{Torlai2018,Carrasquilla2019}. 

Geometrically speaking, SIC POVMs~\cite{Renes2004SIC} and the overcomplete Pauli basis~\cite{Klappenecker2005} both form complex projective 2-designs (the single-qubit Pauli basis is actually a 3-design~\cite{Kueng2015,Zhu2017,Webb2016}).
Roughly speaking, this means that the first two (three) moments exactly reproduce the moments of uniformly (Haar) random states. As detailed below, closed form expression for Haar-random moments can then be used to compute measurement operators and estimators analytically. For Pauli basis measurements, this observation culminated in efficient PLS estimators for full state tomography~\cite{Guta2020}, as well as the classical shadow formalism for directly predicting (non-)linear properties of the underlying state~\cite{Huang2020,elben2020}. Subsequently, some of these ideas have been extended to (single-qubit) SIC POVM measurements. Ref.~\cite{Maniscalco2021}, in particular, highlights that SIC POVM measurements can outperform Pauli basis measurements in VQE-type energy measurements and, more generally, for predicting linear state properties. 

Here, we build on all these ideas and provide a self-contained derivation of classical shadows from (single-qubit) SIC POVM measurements, as well as rigorous sample complexity bounds for general linear and quadratic property estimation. The actual definition of SIC POVM shadows is virtually identical to existing (Pauli basis) classical shadows, because both form complex projective 2-designs. Sample complexity bounds, on the other hand, require novel proof techniques. They require computing variances which correspond to 3rd order polynomials in the measurement ensemble. This is comparatively easy for Pauli basis measurements, which form a 3-design. But for SIC POVMs --- which only form a 2-design --- these existing techniques don't apply. We overcome this drawback with new proof methods that directly use the symmetries within a SIC POVM rather than an abstract 3-design property. To our knowledge, these theoretical arguments are novel and may be of independent interest. 

\subsection{Classical shadows from SIC POVM measurements}

\subsubsection{The single-qubit case}
Single-qubit density matrices live in the (real-valued) vector space of Hermitian $2 \times 2$ matrices which we denote by $\mathbb{H}_2$. On this space,
single-qubit SIC POVMs are known to form so-called projective 2-designs~\cite{Renes2004SIC}. Mathematically, this means that discrete averages over (outer products of) SIC-vectors reproduce uniform averages over all possible pure states up to second moments~\cite{Ambainis2007}. This is captured by the following two averaging formulas:
\begin{align}
\frac{1}{4} \sum_{i=1}^4 |\psi_i \rangle \! \langle \psi_i| =& \int |v \rangle \! \langle v| \mathrm{d}v = \frac{1}{2} \mathbb{I} \in \mathbb{H}_2, \label{eq:1design}\\
\frac{1}{4} \sum_{i=1}^4 \left( |\psi_i \rangle \! \langle \psi_i|\right)^{\otimes 2} =& \int \left( |v \rangle \! \langle v| \right)^{\otimes 2} \mathrm{d} v = \frac{1}{6} \left( \mathbb{I} \otimes \mathbb{I} + \mathbb{F} \right) \label{eq:2design}.
\end{align}
Here, $\mathrm{d}v$ denotes the unique pure state measure (normalized to $\int \mathrm{d}v=1$) that assigns the same infinitesimal weight to each state ($\mathrm{d}v = \mathrm{d}w$). 
The two qubit swap operator operator $\mathbb{F} |v \rangle \otimes |w \rangle = |w \rangle \otimes |v \rangle$ acts by permuting tensor factors. 

The first averaging formula~\eqref{eq:1design} confirms that the collection $\left\{ \frac{1}{2} |\psi_i \rangle \! \langle \psi_i|: i=1,2,3,4\right\} \subset \mathbb{H}_2$ forms a valid quantum measurement for single-qubit systems (POVM). Let $\rho \in \mathbb{H}_2$ be a density matrix, i.e.\ a Hermitian matrix with unit trace whose eigenvalues are non-negative. Then,
\begin{equation*}
\mathrm{Pr} \left[ i| \rho \right] = \mathrm{tr} \left( \frac{1}{2} |\psi_i \rangle \! \langle \psi_i| \rho \right) = \frac{1}{2} \langle \psi_i | \rho |\psi_i \rangle \quad \text{for $i =1,2,3,4$}
\end{equation*}
obeys $\mathrm{Pr} \left[i |\rho \right] \geq 0$, because density matrices don't have negative eigenvalues ($\langle \psi_i |\rho |\psi_i \rangle \geq 0$). Moreover, Eq.~\eqref{eq:1design} ensures proper normalization:
\begin{equation*}
\sum_{i=1}^4 \mathrm{Pr} \left[ i| \rho \right] = \mathrm{tr} \left(2  \left( \frac{1}{4} \sum_{i=1}^4 |\psi_i \rangle \! \langle \psi_i| \right) \rho \right) = \mathrm{tr} \left( \mathbb{I} \rho \right) = \mathrm{tr}(\rho)=1.
\end{equation*} 
The second averaging property~\eqref{eq:2design} is more interesting. It ensures that SIC POVM measurements are informationally complete, i.e. we can reconstruct every density matrix $\rho$ based on outcome probabilities. There are many ways to establish this property. Here, we choose one that is based on the following observation. If we weigh SIC projectors $|\psi_i \rangle \! \langle \psi_i|$ with the probability $\mathrm{Pr} \left[i |\rho \right]$ of observing this outcome, Eq.~\eqref{eq:2design} allows us to compute the resulting average. Let $\mathrm{tr}_1 (\cdot)$ denote the partial trace over the first of two qubits ($\mathrm{tr}_1(A \otimes B) = \mathrm{tr}(A) B$ and linearly extended to all of $\mathbb{H}_2^{\otimes 2}$). Then,
\begin{align}
\sum_{i=1}^4 \mathrm{Pr} \left[ i| \rho \right] |\psi_i \rangle \! \langle \psi_i| =& \sum_{i=1}^4 \frac{1}{2} \langle \psi_i| \rho |\psi_i \rangle |\psi_i \rangle \! \langle \psi_i| 
= 2 \mathrm{tr}_1 \left( \left(\frac{1}{4} \sum_{i=1}^4 \left( |\psi_i \rangle \! \langle \psi_i|\right)^{\otimes 2} \right) \rho \otimes \mathbb{I} \right) \nonumber \\
=& \frac{1}{3} \mathrm{tr}_1 \left( \left( \mathbb{I} \otimes \mathbb{I} + \mathbb{F} \right) \mathbb{I} \otimes \rho \right)
= \frac{1}{3} \left( \mathrm{tr}(\rho) \mathbb{I} + \rho \right), \label{eq:depolarizing-average}
\end{align}
where the last equation follows from the interplay between partial trace and swap operator. The final expression is equivalent to applying a depolarizing channel with parameter $p=1/3$ to the quantum state in question:
\begin{equation*}
\mathcal{D}_{1/3} \left( \rho \right) = \frac{1}{3} \rho + \left( 1- \frac{1}{3} \right) \frac{\mathrm{tr}(\rho)}{2} \mathbb{I} \in \mathbb{H}_2.
\end{equation*}
Viewed as a linear map on $\mathbb{H}_2$, this channel has a uniquely defined inverse:
\begin{equation}
\mathcal{D}_{1/3}^{-1}(A) = 3 A - \mathrm{tr}(A) \mathbb{I} \quad \text{for all $A \in \mathbb{H}_2$.}
\label{eq:depolarizing-channel}
\end{equation}
Although a linear map, this is not a physical operation. We can, however, use it in the classical post-processing stage to counterbalance the effect of averaging over SIC elements.
Indeed, linearity and Eq.~\eqref{eq:depolarizing-average} ensure
\begin{align}
\sum_{i=1}^4 \mathrm{Pr} \left[ i| \rho \right] \left( 3|\psi_i \rangle \! \langle \psi_i| - \mathbb{I} \right)
=& \sum_{i=1}^4 \mathrm{Pr} \left[ i| \rho \right] \mathcal{D}_{1/3}^{-1} \left( |\psi_i \rangle \! \langle \psi_i|\right)
= \mathcal{D}_{1/3}^{-1} \left( \sum_{i=1}^4 \mathrm{Pr} \left[ i|\rho \right] |\psi_i \rangle \! \langle \psi_i| \right) = \mathcal{D}_{1/3}^{-1} \left( \mathcal{D}_{1/3} (\rho) \right) = \rho.
\label{eq:single-qubit-lin-inv}
\end{align}
The left hand side of this display features a linear combination involving SIC outcome probabilities $\mathrm{Pr}\left[i|\rho\right]$, while the right hand side exactly reproduces the underlying state $\rho$. 
This equips us with a concrete state reconstruction formula --- the so-called linear inversion estimator. But, at least at first sight, this formula is only useful if we have precise knowledge of the SIC outcome probabilities $\mathrm{Pr} \left[ i | \rho \right]$. And, with current quantum technology, these probabilities must be estimated from repeatedly performing SIC POVM measurements on independent copies of $\rho$ and approximating these probabilities by frequencies. 

The classical shadow formalism provides an alternative perspective on this estimation process. 
Suppose that we perform a single SIC POVM measurement of an unknown quantum state $\rho$ (single shot). Then, we obtain a random measurement outcome $\hat{i} \in \left\{1,2,3,4\right\}$ with probability $\mathrm{Pr} \left[ \hat{i} |\rho \right]$ each. Inspired by the left hand side of Eq.~\eqref{eq:single-qubit-lin-inv}, we can use this outcome $\hat{i}$ to construct a Monte Carlo estimator of $\rho$:
\begin{align*}
\hat{i} \mapsto \hat{\sigma} = \left( 3 |\psi_{\hat{i}} \rangle \! \langle \hat{\psi}_{\hat{i}}| - \mathbb{I} \right) = \mathcal{D}_{1/3}^{-1} \left( |\psi_{\hat{i}} \rangle \! \langle \psi_{\hat{i}} |\right) \in \mathbb{H}_2.
\end{align*}
This is a \emph{random} $2 \times 2$ matrix that can assume 4 different forms --- one for each possible outcome $\hat{i} \in \left\{1,2,3,4\right\}$. It does exactly reproduce the underlying quantum state $\rho$ in expectation over the observed single-shot outcome:
\begin{equation}
\mathbb{E} \left[ \hat{\sigma} \right] = \sum_{i=1}^4 \mathrm{Pr} \left[ i| \rho \right] \left( 3 |\psi_i \rangle \! \langle \psi_i| - \mathbb{I} \right) = \rho. \label{eq:single-qubit-expectation}
\end{equation}
It is worthwhile to emphasize that each $\hat{\sigma}$ has the same eigenvalue structure: $\lambda_+ =2$ and $\lambda_- =-1$. In turn, these random matrices all have unit trace ($\mathrm{tr}(\hat{\sigma})=2-1=1$), but are unphysical in the sense that one eigenvalue is always negative. Eq.~\eqref{eq:single-qubit-expectation} represents a physical density matrix $\rho$ as the expectation of $4$ unphysical estimators. This desired expectation value can be approximated by empirically averaging $M$ independently generated Monte-Carlo estimators. Suppose that $\hat{\sigma}_1,\ldots,\hat{\sigma}_M$ are $M$ \textit{iid} (independently and identically distributed) Monte Carlo estimators. Then, their empirical average obeys
\begin{equation*}
\hat{\rho} = \frac{1}{M} \sum_{m=1}^M \hat{\sigma}_m \quad \overset{M \to \infty}{\longrightarrow} \quad \mathbb{E} \left[ \hat{\sigma} \right] =  \rho, 
\end{equation*}
and the rate of convergence can be controlled with arguments from probability theory. This will be the content of the next two subsections. 

\subsubsection{Extension to multi-qubit systems}
\label{app:multi-qubit-systems}
The formalism and ideas presented above readily extend to quantum systems comprised multiple qubits. Let $\rho \in \mathbb{H}_2^{\otimes N} \simeq \mathbb{H}^{2^N}$ be a $N$-qubit density matrix.
We can perform a single-qubit SIC POVM measurement on each of the $N$ qubits. As in the single qubit case, each such measurement yields one out of four possible outcomes. 
In total, a single-shot measurement produces a string $(i_1,\ldots,i_N)$ of outcomes. There are $4^N$ such outcomes and the probability of obtaining any one of them is given by
\begin{equation}
\mathrm{Pr} \left[ i_1,\ldots,i_N|\rho \right] = \mathrm{tr} \left( \bigotimes_{n=1}^N \left(\frac{1}{2} |\psi_{i_n} \rangle \! \langle \psi_{i_n}| \right) \rho \right)
= \frac{1}{2^N} \langle \psi_{i_1},\ldots,\psi_{i_N} | \rho |\psi_{i_1},\ldots,\psi_{i_N} \rangle \quad \text{for each $i_1,\ldots,i_N \in \left\{1,2,3,4\right\}$}.
\label{eq:measurement}
\end{equation}
Here, we have introduced the short-hand notation $|\psi_{i_1},\ldots,\psi_{i_N} \rangle = \bigotimes_{n=1}^N |\psi_{i_n} \rangle \in \left(\mathbb{C}^2\right)^{\otimes N} \simeq \mathbb{C}^{2^N}$. 
These expressions are non-negative, because the $N$-qubit density matrix does not have negative eigenvalues. Eq.~\eqref{eq:1design}, applied to each qubit separately, moreover ensures proper normalization:
\begin{align*}
\sum_{i_1,\ldots,i_N=1}^4 \mathrm{Pr} \left[ i_1,\ldots,i_N|\rho \right] = \mathrm{tr} \left( \bigotimes_{n=1}^N \left( \sum_{i_n=1}^4 \frac{1}{2} |\psi_{i_n} \rangle \! \langle \psi_{i_n} | \right) \rho \right)
= \mathrm{tr} \left( \mathbb{I}^{\otimes N} \rho \right) = \mathrm{tr} \left( \rho \right) =1. 
\end{align*}
Again, the second averaging property is more interesting: for single-qubit density matrices $\tilde{\rho} \in \mathbb{H}_2$, we already know that Eq.~\eqref{eq:2design} implies $\sum_{i=1}^4 \frac{1}{2} \langle \psi_i | \rho |\psi_i \rangle = \mathcal{D}_{1/3} (\rho)$, where $\mathcal{D}_{1/3}:\mathbb{H}_2 \to \mathbb{H}_2$ is the depolarizing channel from Eq.~\eqref{eq:depolarizing-channel}. 
It is now easy to check that this equation extends to general Hermitian $2 \times 2$ matrices: 
\begin{equation}
\sum_{i=1}^4 \frac{1}{4}\langle \psi_i | A |\psi_i \rangle |\psi_i \rangle \! \langle \psi_i| = \mathcal{D}_{1/3} (A)\quad \text{for all $A \in \mathbb{H}_2$}. \label{eq:multi-qubit-aux1}
\end{equation}
We can use this observation to show that $N$ single-qubit SIC POVM measurements are tomographically complete. To achieve this it is helpful to first decompose $\rho \in \mathbb{H}_2^{\otimes N}$ into a sum of elementary tensor products:
\begin{equation*}
\rho = \sum_{W_1,\ldots,W_N} r(W_1,\ldots,W_N) \bigotimes_{n=1}^N W_n \quad \text{where} \quad r(W_1,\ldots,W_N) = \mathrm{tr} \left( \bigotimes_{n=1}^N W_n \; \rho \right) \in \left[-1,1\right].
\end{equation*}
and the summation goes over all four single-qubit Pauli matrices $W_n = \mathbb{I},X,Y,Z$. Combine this with Eq.~\eqref{eq:multi-qubit-aux1} to compute
\begin{align}
\sum_{i_1,\ldots,i_N=1}^4 \mathrm{Pr} \left[ i_1,\ldots,i_N| \rho \right] \bigotimes_{n=1}^N |\psi_{i_n} \rangle \! \langle \psi_{i_n} | 
=& \sum_{W_1,\ldots,W_N} r(W_1,\ldots,W_N) \bigotimes_{n=1}^N \left( \frac{1}{2} \sum_{i_n=1}^4 \langle \psi_{i_n} | W_n |\psi_{i_n} \rangle |\psi_{i_n} \rangle \! \langle \psi_{i_n} | \right) \nonumber \\
=& \sum_{W_1,\ldots,W_N} r(W_1,\ldots,W_N) \bigotimes_{n=1}^N \mathcal{D}_{1/3} \left( W_n \right) 
= \mathcal{D}_{1/3}^{\otimes N} \left( \rho \right), \label{eq:multi-qubit-aux2}
\end{align}
where the last equality follows from linearity of depolarizing channels. The final expression is equivalent to applying $N$ independent, single-qubit depolarizing channels to the $N$-qubit quantum state $\rho$. 
Viewed as a linear map on $\mathbb{H}_2^{\otimes N}$, this tensor product channel has a uniquely defined inverse $\mathcal{D}_{1/3}^{-\otimes N}: \mathbb{H}_2^{\otimes N} \to \mathbb{H}_2^{\otimes N}$. 
For elementary tensor products, this tensor product of inverse depolarizing channels factorizes nicely into tensor products. In particular, 
\begin{equation*}
\mathcal{D}_{1/3}^{-\otimes N} \left( \bigotimes_{n=1}^N |\psi_{i_n} \rangle \! \langle \psi_{i_n} | \right) = \bigotimes_{n=1}^N \mathcal{D}_{1/3}^{-1} \left( |\psi_{i_n} \rangle \! \langle \psi_{i_n} | \right) = \bigotimes_{n=1}^N \left( 3 |\psi_{i_n} \rangle \! \langle \psi_{i_n} | - \mathbb{I} \right).
\end{equation*}
Again, this is not a physical operation, because it produces matrices with negative eigenvalues. However, we can nonetheless use it in the classical post-processing stage to counterbalance the $N$-qubit averaging effect encountered in Eq.~\eqref{eq:multi-qubit-aux2}:
\begin{align}
\sum_{i_1,\ldots,i_N=1}^4 \mathrm{Pr} \left[ i_1,\ldots,i_N|\rho \right] \bigotimes_{n=1}^N \left( 3 |\psi_{i_n} \rangle \! \langle \psi_{i_n} | - \mathbb{I} \right)
= \mathcal{D}_{1/3}^{-\otimes N} \left( \sum_{i_1,\ldots,i_N=1}^4 \mathrm{Pr} \left[ i_1,\ldots,i_N|\rho \right] \bigotimes_{n=1}^N |\psi_{i_n} \rangle \! \langle \psi_{i_n} |\right)
= \mathcal{D}_{1/3}^{-\otimes N} \left( \mathcal{D}_{1/3}^{\otimes N} \left( \rho \right) \right) = \rho.
\label{eq:multi-qubit-lin-inv}
\end{align}
The left hand side of this display features a linear combination involving single-qubit SIC outcome probabilities $\mathrm{Pr} \left[ i_1,\ldots,i_N| \rho \right]$, while the right hand side exactly reproduces the underlying $N$-qubit density matrix. This provides us with a concrete reconstruction formula for arbitrary $N$-qubit states. In fact, it is a natural and relatively straightforward extension of the single-qubit linear inversion estimator~\eqref{eq:single-qubit-lin-inv} to $N$ qubits. 

As was the case for single qubits, the classical shadow formalism provides an alternative perspective on such a linear inversion estimation process. 
Suppose that we perform $N$ single-qubit SIC POVM measurements of an unknown $N$-qubit state $\rho$ (single shot). Then, we obtain a random outcome string $(\hat{i}_1,\ldots,\hat{i}_N) \in \left\{1,2,3,4\right\}^{\times N}$ with probability $\mathrm{Pr} \left[ \hat{i}_1,\ldots,\hat{i}_N|\rho \right]$ each. Inspired by the left hand side of Eq.~\eqref{eq:multi-qubit-lin-inv}, we can use this random outcome string to construct a Monte Carlo estimator of $\rho$:
\begin{align}
(\hat{i}_1,\ldots,\hat{i}_N) \mapsto \hat{\sigma} = \bigotimes_{n=1}^N \left( 3 |\psi_{\hat{i}_n} \rangle \! \langle \psi_{\hat{i}_n}| - \mathbb{I} \right) \in \mathbb{H}_2^{\otimes N}. \label{eq:classical-shadow}
\end{align}
This is a random $2^N \times 2^N$ matrix that decomposes nicely into tensor products of single-qubit contributions. Each tensor factor contributes a matrix with eigenvalue $\lambda_{n,+}=+2$ and $\lambda_{n,-}=-1$. The eigenvalues of the tensor product $\hat{\sigma}$ then correspond to $N$-fold products of these two possible numbers. The largest eigenvalue is $+2^N$, the smallest is $-2^{N-1}$, so $\hat{\sigma}$ is a very unphysical random matrix. The randomness stems from an actual quantum measurement and depends on the underlying quantum state. This ensures that $\hat{\sigma}$ reproduces $\rho$ in expectation:
\begin{align}
\mathbb{E} \left[ \hat{\sigma} \right]= \sum_{i_1,\ldots,i_N=1}^4 \mathrm{Pr} \left[ i_1,\ldots,i_N |\rho \right] \bigotimes_{n=1}^N \left( 3|\psi_{i_n} \rangle \! \langle \psi_{i_n} | - \mathbb{I} \right)
= \rho,
\label{eq:shadow-expectation}
\end{align}
courtesy of Eq.~\eqref{eq:multi-qubit-lin-inv}. This expectation value can now be approximated by empirically averaging $M$ independently generated Monte-Carlo estimators, so called \emph{classical shadows}. Let $\hat{\sigma}_1,\ldots,\hat{\sigma}_M \in \mathbb{H}_2^{\otimes N}$ be estimators generated from repeatedly preparing $\rho$ and performing single-qubit SIC POVM measurements (\textit{iid}). Then, their empirical average obeys
\begin{equation}
\hat{\rho} := \frac{1}{M} \sum_{m=1}^M \hat{\sigma}_m \quad \overset{M \to \infty}{\longrightarrow} \rho,
\label{eq:shadow-convergence}
\end{equation}
in full analogy to the single-qubit case. As detailed in the next section, the rate of convergence will depend on the number of qubits $N$. The larger the space, the longer it takes for convergence to kick in, and this scaling can become unfavorable. It is therefore worthwhile to emphasize another distinct advantage of the tensor product structure of the estimators~\eqref{eq:classical-shadow}: marginalization to subsystem density operators is straightforward. Let $\rho$ be an $N$-qubit state, but suppose we are only interested in a subsystem $\mathsf{K} \subset \left[N\right]=\left\{1,\ldots,N\right\}$ comprised of only $|\mathsf{K}| \leq N$ qubits. Such a subsystem is fully described by the reduced $|\mathsf{K}|$-qubit density matrix $\rho_\mathsf{K} = \mathrm{tr}_{\neg \mathsf{K}} (\rho) \in \mathbb{H}_2^{\otimes \mathsf{K}}$ that results from tracing out all qubits not in $\mathsf{K}$. This partial trace is a linear operation that plays nicely with the tensor product structure in Eq.~\eqref{eq:classical-shadow}. Each tensor product factor has unit trace, which ensures that
\begin{align*}
\hat{\sigma}_\mathsf{K} = \mathrm{tr}_{\neg \mathsf{K}} \left( \hat{\sigma}\right) = \bigotimes_{k \in \mathsf{K}} \left( 3|\psi_{\hat{i}_k} \rangle \! \langle \psi_{\hat{i}_k}|-\mathbb{I} \right) \in \mathbb{H}_2^{\otimes |\mathsf{K}|} \quad \text{obeys} \quad \mathbb{E} \left[ \hat{\sigma}_\mathsf{K}\right] = \mathrm{tr}_{\neg \mathsf{K}} \left( \mathbb{E} \left[ \hat{\sigma} \right] \right) = \mathrm{tr}_{\neg \mathsf{K}} (\rho) = \rho_\mathsf{K}.
\end{align*}
The object at the very left is a random $2^{|\mathsf{K}|} \times 2^{|\mathsf{K}|}$ matrix that can be generated from performing a complete $N$-qubit SIC POVM measurement to obtain outcomes $(\hat{i}_1,\ldots,\hat{i}_N )\in \left\{1,\ldots,4\right\}^{\times N}$. Subsequently, we only use outcomes that correspond to qubits in $K$ to directly construct an estimator for the subsystem density matrix $\rho_K$ in question. 
This trick reduces the question of convergence to a problem that only involves $|\mathsf{K}|$ qubits, not $N$. This can be highly advantageous if $|\mathsf{K}| \ll N$.
What is more, we can use the same $N$-qubit measurement outcome $(\hat{i}_1,\ldots,\hat{i}_N)^{\times N}$ to construct estimators for multiple subsystems $\mathsf{K}_1,\ldots,\mathsf{K}_L \subset \left[N\right]$ at once.
This allows us to use the same $N$-qubit measurement statistics to estimate many subsystem properties in parallel.

\subsection{Convergence for predicting linear observables}
\label{app:linearobservables}
Suppose that we have access to $M$ independent Monte Carlo approximations $\hat{\sigma}_1,\ldots,\hat{\sigma}_M$ of an unknown $N$-qubit state $\rho$. Each of them arises from measuring $N$ single-qubit SIC POVMs on an independent copy of $\rho$. We can then use these approximations to estimate observable expectation values $\mathrm{tr}(O \rho)$ with $O \in \mathbb{H}_2^{\otimes N}$:
\begin{align*}
\hat{o} = \mathrm{tr} \left(O \hat{\rho}  \right) = \frac{1}{M} \sum_{m=1}^M \mathrm{tr} \left( O \hat{\sigma}_m \right).
\end{align*}
This is an empirical average of $M$ \textit{iid} random numbers $X_1,\ldots,X_M \overset{\textit{iid}}{\sim} X = \mathrm{tr} \left( O \hat{\sigma} \right)$ that converges to the true expectation $\mathbb{E} \left[ X \right] = \mathrm{tr} \left( O \mathbb{E} \left[ \hat{\sigma}_m \right] \right) = \mathrm{tr}(O \rho)$ as $M$ increases. The rate of convergence is controlled by the variance $\mathrm{Var}[X]$. 

\begin{lemma} \label{lem:linear-variance}
Fix an $N$-qubit observable $O$ and let $\hat{\sigma} \in \mathbb{H}_2^{\otimes N}$ be a (SIC POVM) classical shadow as defined in Eq.~\eqref{eq:classical-shadow}.
Then,
\begin{equation*}
\mathrm{Var} \left[ \mathrm{tr} \left( O \hat{\sigma} \right) \right] \leq 3^N \mathrm{tr}\left(O^2 \right) \quad \text{for any underlying $N$-qubit state $\rho$.}
\end{equation*}
\end{lemma}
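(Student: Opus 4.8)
The plan is to control the variance through the second moment, since $\mathrm{Var}\left[\mathrm{tr}(O\hat\sigma)\right] = \mathbb{E}\left[\mathrm{tr}(O\hat\sigma)^2\right] - \mathrm{tr}(O\rho)^2 \le \mathbb{E}\left[\mathrm{tr}(O\hat\sigma)^2\right]$, so that dropping the (nonnegative) first-moment term costs nothing. The first step I would take is to record a convenient single-shot identity. For an outcome string $\vec{i}=(i_1,\dots,i_N)$ write $|\vec\psi_{\vec i}\rangle = \bigotimes_n |\psi_{i_n}\rangle$, so that the classical shadow of Eq.~\eqref{eq:classical-shadow} reads $\hat\sigma = \mathcal{D}_{1/3}^{-\otimes N}\!\left(|\vec\psi_{\vec i}\rangle\langle\vec\psi_{\vec i}|\right)$. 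Because the inverse depolarizing map of Eq.~\eqref{eq:depolarizing-channel} is self-adjoint with respect to the trace inner product, one gets $\mathrm{tr}(O\hat\sigma) = \langle\vec\psi_{\vec i}|\tilde O|\vec\psi_{\vec i}\rangle$, where $\tilde O := \mathcal{D}_{1/3}^{-\otimes N}(O)$.

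Next I would average over the measurement outcomes. Using the measurement rule of Eq.~\eqref{eq:measurement}, $\mathrm{Pr}[\vec i|\rho] = 2^{-N}\langle\vec\psi_{\vec i}|\rho|\vec\psi_{\vec i}\rangle$, the second moment collapses to $\mathbb{E}\left[\mathrm{tr}(O\hat\sigma)^2\right] = \mathrm{tr}(\rho\,\Xi)$ with the positive-semidefinite operator $\Xi := 2^{-N}\sum_{\vec i}\langle\vec\psi_{\vec i}|\tilde O|\vec\psi_{\vec i}\rangle^2\,|\vec\psi_{\vec i}\rangle\langle\vec\psi_{\vec i}|$. The decisive move is then to discard the state dependence via $\mathrm{tr}(\rho\,\Xi) \le \mathrm{tr}(\Xi)$, which holds because $\mathrm{tr}((\mathbb{I}-\rho)\Xi)\ge 0$ for the two positive-semidefinite operators $\mathbb{I}-\rho$ and $\Xi$. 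This is precisely what keeps the argument elementary: it replaces the apparent cubic dependence on the SIC ensemble by the quadratic scalar $\mathrm{tr}(\Xi)$.

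It then remains to evaluate $\mathrm{tr}(\Xi) = 2^{-N}\sum_{\vec i}\langle\vec\psi_{\vec i}|\tilde O|\vec\psi_{\vec i}\rangle^2$. Expanding $O = \sum_{\vec a} c_{\vec a}\bigotimes_n \sigma_{a_n}$ in the $N$-qubit Pauli basis and using that $\mathcal{D}_{1/3}^{-1}$ fixes $\mathbb{I}$ and rescales each traceless Pauli by $3$, one has $\tilde O = \sum_{\vec a} 3^{|\vec a|} c_{\vec a}\bigotimes_n\sigma_{a_n}$, where $|\vec a|$ counts the nontrivial tensor factors. Squaring and averaging factorizes qubit-by-qubit into single-qubit moments $\tfrac12\sum_{i=1}^4 \langle\psi_i|\sigma_a|\psi_i\rangle\langle\psi_i|\sigma_{a'}|\psi_i\rangle$. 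The $1$-design property, Eq.~\eqref{eq:1design}, makes this vanish whenever exactly one of $a,a'$ is the identity, while the $2$-design property, Eq.~\eqref{eq:2design}, returns $\tfrac23\delta_{a,a'}$ when both are nontrivial; hence every cross term with $\vec a \ne \vec a'$ cancels. Collecting the surviving diagonal contributions gives $\mathrm{tr}(\Xi) = 2^N \sum_{\vec a} 3^{|\vec a|} c_{\vec a}^2$, and bounding $3^{|\vec a|}\le 3^N$ together with $\mathrm{tr}(O^2) = 2^N\sum_{\vec a} c_{\vec a}^2$ closes the chain at $3^N\,\mathrm{tr}(O^2)$.

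The conceptual hurdle worth flagging is this: a head-on evaluation of the variance is genuinely third order in the SIC vectors, and since single-qubit SICs are only projective $2$-designs one cannot finish it with a Haar or $3$-design averaging formula — this is the obstacle the authors allude to. The route above sidesteps it entirely through the positive-semidefinite reduction $\mathrm{tr}(\rho\,\Xi)\le\mathrm{tr}(\Xi)$, after which only the first two SIC moments enter. The price is a bound that is not tight (for a single Pauli observable the true variance is smaller by a constant factor), but it is clean, uniform over all states $\rho$, and reproduces exactly the claimed $3^N\,\mathrm{tr}(O^2)$.
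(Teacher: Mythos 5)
Your proof is correct and follows essentially the same route as the paper's: bound the variance by the second moment, strip out the state dependence (your positive-semidefinite inequality $\mathrm{tr}(\rho\,\Xi)\le\mathrm{tr}(\Xi)$ is exactly the paper's pointwise bound $\mathrm{Pr}[\,\vec{i}\,|\rho]\le 2^{-N}$ summed against nonnegative weights), expand $O$ in the Pauli basis, and evaluate the per-qubit moments with the 1- and 2-design identities so that cross terms vanish and the diagonal contributions give $2^N\sum_{\vec{a}}3^{|\vec{a}|}c_{\vec{a}}^2\le 3^N\,\mathrm{tr}(O^2)$ by Parseval. The only cosmetic difference is that you dualize $\mathcal{D}_{1/3}^{-\otimes N}$ onto $O$ via self-adjointness to form $\tilde{O}$, whereas the paper keeps the factors $\left(3|\psi_{i_n}\rangle\!\langle\psi_{i_n}|-\mathbb{I}\right)$ inside the traces and computes the identical single-qubit function $f(V_n,W_n)=3\,\mathrm{tr}(V_nW_n)-\mathrm{tr}(V_n)\,\mathrm{tr}(W_n)$.
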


This bound is reminiscent of existing variance bounds for randomized single-qubit Pauli measurements~\cite{Huang2020}, but slightly weaker (random Pauli basis measurements achieve $\mathrm{Var} \left[ \mathrm{tr}(O \hat{\sigma})\right] \leq 2^N \mathrm{tr}(O^2)$). The proof exploits the 2-design property of SIC POVM measurements and will be supplied in Sec.~\ref{sub:linear-technical} below. 
For now, we use this variance bound to conclude strong convergence guarantees for observable estimation with SIC POVM measurements. 
The key ingredient is a strong tail bound for sums of \textit{iid} random variables with known variance and bounded magnitude. 
The \emph{Bernstein inequality} is a stronger version of the better known Hoeffding inequality, see e.g.\ \cite[Corollary~7.31]{Foucart2013} or \cite[Theorem~2.8.4]{Vershynin2018}: \\ 
Let $X_1,\ldots,X_M \overset{\textit{iid}}{\sim} X \in \mathbb{R}$ be \textit{iid} random variables with expectation $\mu = \mathbb{E} \left[ X \right]$ and variance $\sigma^2 = \mathbb{E} \left[(X-\mu)^2 \right]$ that also obey $\left|X_m \right| \leq R$ (almost surely).
Then, for $\epsilon >0$
\begin{equation}
\mathrm{Pr} \left[ \left| \frac{1}{M} \sum_{m=1}^M X_m - \mu \right| \geq \epsilon \right] \leq 2 \exp \left( - \frac{M\epsilon^2/2}{\sigma^2+R\epsilon/3} \right) \leq
\begin{cases}
2 \exp \left( - \tfrac{3}{8} M \epsilon^2/\sigma^2 \right) & \text{if $\epsilon \leq \sigma^2/R$}, \\
2 \exp \left( - \tfrac{3}{8} M \epsilon/R \right) & \text{if $\epsilon \geq \sigma^2/R$}.
\end{cases}
\label{eq:bernstein}
\end{equation}
For the task at hand, we write $\hat{o} = \frac{1}{M} \sum_{m=1}^M \mathrm{tr} \left( O \hat{\sigma}_m\right) = \frac{1}{M} \sum_{m=1}^M X_m$, where $X_m = \mathrm{tr} \left( O \hat{\sigma}_m \right) \overset{\textit{iid}}{\sim}X= \mathrm{tr} \left( O \hat{\sigma} \right)$. For technical reasons, we also assume $\mathrm{tr}(O^2) \geq (5/9)^N > 2^{-N}$. Note that this is achieved by (i) physical observables ($\mathrm{tr}(O^2) \geq \|O\|_\infty^2 =1$), as well as quantum states ($O=\rho$ obeys $\mathrm{tr}(\rho^2)\geq \mathrm{tr}((\mathbb{I}/2^N)^2)=2^{-N}$).
The random variable $X$ obeys
\begin{align*}
\mu = \mathrm{tr} \left( O \mathbb{E} \left[ \hat{\sigma} \right] \right) = \mathrm{tr} \left( O \rho \right), \quad
\mathrm{Var} \left[ \mathrm{tr} \left( O \hat{\sigma} \right) \right] \leq 3^N \mathrm{tr}(O^2)=:\sigma^2 \quad \text{and} \quad  \left| \mathrm{tr} \left( O \hat{\sigma} \right) \right| \leq 5^{N/2} \sqrt{\mathrm{tr}(O^2)} \leq 3^N \mathrm{tr}(O^2)=:R.
\end{align*}
The first equality is Eq.~\eqref{eq:shadow-expectation}, the second bound is Lemma~\ref{lem:linear-variance} and the last bound is a consequence of the Cauchy-Schwarz inequality: $\left| \mathrm{tr}(O \hat{\sigma}) \right| \leq \sqrt{\mathrm{tr}(O^2)} \sqrt{\mathrm{tr}(\hat{\sigma}^2)}$.
Since classical shadows are tensor products of single-qubit blocks with eigenvalues $\lambda_+=2,\lambda_-=-1$, we can readily conclude $\mathrm{tr}\left( \hat{\sigma}^2 \right) = \left( \lambda_+^2 + \lambda_-^2 \right)^N = 5^N$. The final bound is rather loose and follows from the assumption $\mathrm{tr}(O^2) \geq (5/9)^N$.
Inserting these bounds into Eq.~\eqref{eq:bernstein} now ensures
\begin{equation}
\mathrm{Pr} \left[ \left| \hat{o}-\mathrm{tr}(O \rho ) \right| \geq \epsilon \right] =\mathrm{Pr} \left[ \left| \hat{o} - \mathrm{tr}(O \rho) \right| \geq \epsilon \right] \leq 2 \exp \left( - \frac{3M \epsilon^2}{8 \times 3^N \mathrm{tr}(O^2)}\right) \quad \text{for all $0 <\epsilon \leq 1$.}
\end{equation}
This is a bound on the probability of an $\epsilon$-deviation (or more) that diminishes exponentially in the number of Monte Carlo samples (measurements) $M$. For a fixed confidence $\delta \in (0,1)$, setting
\begin{equation*}
M \geq \tfrac{8}{3} 3^N \mathrm{tr}(O^2) \log(1/\delta)/\epsilon^2 \quad \text{ensures} \quad \left| \hat{o}-\mathrm{tr}(O \rho) \right| \leq \epsilon \quad \text{with probability (at least) $1-\delta$.}
\end{equation*}

Note that the required measurement budget $M$ scales exponentially in the number $N$ of involved qubits. At this point it is helpful to remember the marginalization property of classical shadows. 
Suppose that $O$ is localized in the sense that it only affects a subsystem $\mathsf{K} \subset \left[N\right]$ comprised of $|\mathsf{K}| \leq N$ qubits. 
Then, $\mathrm{tr} \left( O \rho \right) = \mathrm{tr} \left( O_{\mathsf{K}} \rho_{\mathsf{K}} \right)$, where $\rho_{\mathsf{K}}=\mathrm{tr}_{\neg \mathsf{K}}(\rho)$ is the reduced $|\mathsf{K}|$-qubit density matrix and $O_{\mathsf{K}}$ is the nontrivial part of $O$. In turn, we can use appropriately marginalized classical shadows $\hat{\sigma}_{m,\mathsf{K}}=\mathrm{tr}_{\neg \mathsf{K}} (\rho)$ to directly approximate this subsystem property: 
\begin{equation*}
\hat{o} = \frac{1}{M} \sum_{m=1}^M \mathrm{tr} \left( O \hat{\sigma}_m\right)= \frac{1}{M} \sum_{m=1}^M \mathrm{tr} \left( O_{\mathsf{K}} \hat{\sigma}_{m,\mathsf{K}}\right).
\end{equation*}
The reformulation on the right hand side now only involves the $|\mathsf{K}| < N$ relevant qubits. We can now can re-do the argument from above to obtain a measurement budget that 
only scales exponentially in $|\mathsf{K}|$:
\begin{equation}
\mathrm{Pr} \left[ \left| \hat{o}-\mathrm{tr}(O \rho ) \right| \geq \epsilon \right]
\leq 2 \exp \left( - \frac{3 M \epsilon^2}{8 \times 3^{|\mathsf{K}|} \mathrm{tr} \left( O_{\mathsf{K}}^2 \right)}\right) \quad \text{for $\epsilon \in (0,1)$}.
\label{eq:tail-bound}
\end{equation}
This refinement asserts that the probability of an $\epsilon$-deviation for a single observable estimation diminishes exponentially in the number of measurements. We can use this exponential concentration to bound the probability of a single deviation among many. This allows us to use the same measurement data to predict many observables $\mathrm{tr} \left( O_1 \rho \right),\ldots,\mathrm{tr} \left( O_L \rho \right)$ in parallel.

\begin{theorem} \label{thm:linear-estimation}
Let $O_1,\ldots,O_L$ be $N$-qubit observables that are all localized to (at most) $K$ qubits and fix $\epsilon,\delta \in (0,1)$. Then,
\begin{equation}
M \geq \tfrac{8}{3} 3^K \max_{1 \leq l \leq L} \mathrm{tr} \left( O_{l,\mathsf{K}_l}^2 \right) \log (2L/\delta)/\epsilon^2 \label{eq:observable-measurement-budget}
\end{equation}
$N$-qubit SIC POVM measurements of an unknown state $\rho$ are very likely to $\epsilon$-approximate all observables simultaneously. More precisely, the resulting classical shadows $\hat{\sigma}_1,\ldots,\hat{\sigma}_M$ obey
\begin{equation*}
\max_{1 \leq l \leq L} \left| \frac{1}{M} \sum_{m=1}^M \mathrm{tr} \left( O_l \hat{\sigma}_m \right) - \mathrm{tr} \left( O_l \rho \right) \right| \leq \epsilon
\quad \text{with probability (at least) $1-\delta$.}
\end{equation*}
\end{theorem}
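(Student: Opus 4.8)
The plan is to obtain the simultaneous guarantee by layering a union bound on top of the single-observable concentration inequality that has already been established in Eq.~\eqref{eq:tail-bound}. All of the analytic work---the variance bound of Lemma~\ref{lem:linear-variance} and the Bernstein argument that converted it into an exponential tail bound---is already in place, so what remains is a short probabilistic argument whose only delicate point is uniform (worst-case) bookkeeping across the $L$ heterogeneous observables. Throughout I write $\hat{o}_l = \tfrac{1}{M} \sum_{m=1}^M \mathrm{tr}(O_l \hat{\sigma}_m)$ for the estimator of $\mathrm{tr}(O_l \rho)$ built from the \textit{same} collection of \textit{iid} classical shadows $\hat{\sigma}_1,\ldots,\hat{\sigma}_M$.

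First I would apply Eq.~\eqref{eq:tail-bound} separately to each observable. Since $O_l$ is localized to a subset $\mathsf{K}_l$ with $|\mathsf{K}_l| = K_l \leq K$, this yields for every fixed $l$
\begin{equation*}
\mathrm{Pr} \left[ \left| \hat{o}_l - \mathrm{tr}(O_l \rho) \right| \geq \epsilon \right] \leq 2 \exp \left( - \frac{3 M \epsilon^2}{8 \times 3^{K_l} \mathrm{tr} \left( O_{l,\mathsf{K}_l}^2 \right)} \right).
\end{equation*}
The next step replaces the observable-dependent quantities in each denominator by their worst-case counterparts: because $K_l \leq K$ and $\mathrm{tr}(O_{l,\mathsf{K}_l}^2) \leq \max_{1 \leq l' \leq L} \mathrm{tr}(O_{l',\mathsf{K}_{l'}}^2)$, enlarging the denominator only enlarges each exponential, so every one of the $L$ terms is bounded by a single uniform expression built from $3^K$ and $\max_l \mathrm{tr}(O_{l,\mathsf{K}_l}^2)$.

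I would then invoke Boole's inequality: the probability that the largest deviation exceeds $\epsilon$ is at most the sum of the $L$ individual deviation probabilities. Crucially, the union bound accommodates the fact that the estimators $\hat{o}_1,\ldots,\hat{o}_L$ are built from the same shadows and hence are not independent---only the individual marginal probabilities are needed. This gives
\begin{equation*}
\mathrm{Pr} \left[ \max_{1 \leq l \leq L} \left| \hat{o}_l - \mathrm{tr}(O_l \rho) \right| \geq \epsilon \right] \leq 2L \exp \left( - \frac{3 M \epsilon^2}{8 \times 3^K \max_{l} \mathrm{tr} \left( O_{l,\mathsf{K}_l}^2 \right)} \right).
\end{equation*}
Finally I would demand that this failure probability be at most $\delta$, take logarithms of $2L \exp(-\cdots) \leq \delta$, and solve for $M$; rearranging reproduces exactly the measurement budget of Eq.~\eqref{eq:observable-measurement-budget}. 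The closest thing to an obstacle is purely organizational: the observables differ in both their locality $K_l$ and their Hilbert--Schmidt norm $\mathrm{tr}(O_{l,\mathsf{K}_l}^2)$, and it is the worst-case replacement in the first step that lets a single sample size $M$ control all $L$ estimates at once---at the unavoidable cost of the maxima appearing in the final bound.
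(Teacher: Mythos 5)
Your proposal is correct and follows essentially the same route as the paper's proof: a union bound (Boole's inequality) over the $L$ observables, each controlled by the Bernstein-derived tail bound of Eq.~\eqref{eq:tail-bound}, and then solving the resulting condition for $M$. The only cosmetic difference is that you pass to the worst-case $3^K$ and $\max_l \mathrm{tr}(O_{l,\mathsf{K}_l}^2)$ before summing, whereas the paper sums the heterogeneous terms and then notes that the stated budget makes each one at most $\delta/L$ --- the two orderings are equivalent.
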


The convergence bound advertised in Eq.~\eqref{eq:main-text-observable-measurement-budget} of the main text is a simplified consequence of this result. Note that, by and large, physical observables are normalized in operator norm: $\|O_l \|_\infty = \|O_{l,\mathsf{K}} \|_\infty =1$. Eq.~\eqref{eq:observable-measurement-budget} features squared Hilbert-Schmidt norms $\|O_{l,\mathsf{K}} \|_2^2 = \mathrm{tr} \left( O_{l\mathsf{K}}^2\right)$ on the $|\mathsf{K}|$-qubit subsystems in question. These Hilbert-Schmidt norms can be related to the operator norm, which is bounded:
\begin{equation*}
\mathrm{tr} \left( O_{l,\mathsf{K}}^2\right) = \| O_{l,\mathsf{K}}\|_2^2 \leq 2^{|\mathsf{K}|} \|O_{l,\mathsf{K}}\|_\infty^2 = 2^{|\mathsf{K}|} \quad \text{for all $1 \leq l \leq L$}.
\end{equation*}
Here, we have used the fact that each $O_{l,\mathsf{K}}$ is a matrix of size (at most) $2^{|\mathsf{K}|} \cdot 2^{|\mathsf{K}|}$. This implies the bound $\max_{1\leq l \leq L}\mathrm{tr} \left( O_{l,\mathsf{K}}^2\right) \leq 2^{|\mathsf{K}|}$, which can be very pessimistic. Inserting it into Eq.~\eqref{eq:observable-measurement-budget} yields Eq.~\eqref{eq:main-text-observable-measurement-budget} in the main text.

\begin{proof}[Proof of Theorem~\ref{thm:linear-estimation}]
A maximum deviation larger than $\epsilon$ occurs if at least one individual prediction $\hat{o}_l$ is further than $\epsilon$ off from the actual target $\mathrm{tr}(O_l \rho)$. 
The union bound, also known as Boole's inequality, tells us that the probability of such a maximum deviation is upper bounded by the sum of individual deviation probabilities. These, in turn, can be controlled via the tail bound from Eq.~\eqref{eq:tail-bound}:
\begin{align*}
\mathrm{Pr} \left[\max_{1 \leq l \leq L} \left| \frac{1}{M} \sum_{m=1}^M \mathrm{tr} \left( O_l \hat{\sigma}_m \right) - \mathrm{tr} \left( O_l \rho \right) \right| \geq \epsilon \right]
\leq & \sum_{l=1}^L \mathrm{Pr} \left[  \left| \frac{1}{M} \sum_{m=1}^M \mathrm{tr} \left( O_l \hat{\sigma}_m \right) - \mathrm{tr} \left( O_l \rho \right) \right| \geq \epsilon \right] \\
\leq & \sum_{l=1}^L 2 \exp \left( - \frac{3M \epsilon^2}{8 \times 3^{|\mathsf{K}_l|} \mathrm{tr} \left( O_{l,\mathsf{K}_l}^2\right)} \right).
\end{align*}
We see that each of these summands diminishes exponentially in the measurement budget $M$. The right-hand side of  Eq.~\eqref{eq:observable-measurement-budget} ensures that each term contributes at most $\delta/L$ to this sum. Since there are $L$ summands in total, we conclude
$\mathrm{Pr} \left[\max_{1 \leq l \leq L} \left| \frac{1}{M} \sum_{m=1}^M \mathrm{tr} \left( O_l \hat{\sigma}_m \right) - \mathrm{tr} \left( O \rho \right) \right| \geq \epsilon \right]
\leq  \delta
$.
This is equivalent to the advertised display.
\end{proof}

\subsection{Convergence for predicting (subsystem) purities}
\label{app:nonlinearobservables}
Classical shadows can also be used to predict non-linear quantum state properties, see e.g.\ \cite{Huang2020,elben2020}. 
A prototypical example is the purity of an $N$-qubit density matrix $\rho$:
\begin{equation*}
p(\rho) = \mathrm{tr} \left( \rho^2 \right) = \mathrm{tr} \left( \rho \; \rho \right) \in (0,1].
\end{equation*}
The purity equals one if and only if $\rho$ describes a pure quantum state $|\phi \rangle \! \langle \phi|$. Conversely, it achieves its minimum value for the maximally mixed state: $\rho = \left(\frac{1}{2}\mathbb{I} \right)^{\otimes N}$ achieves $p(\rho) = 1/2^N \ll 1$. We now describe how to obtain a purity estimator based on classical shadows $ \hat{\sigma}_1,\ldots, \hat{\sigma}_M$ that arise from measuring $N$ single-qubit SIC POVMs on (independent copies of) $\rho$. By construction, each $\hat{\sigma}_m$ is a Monte Carlo estimator of $\rho$. Indeed, Eq.~\eqref{eq:shadow-expectation} asserts $\mathbb{E} \left[ \hat{\sigma}_m\right]=\rho$ for all $1 \leq m \leq M$. What is more, distinct Monte Carlo estimators $\hat{\sigma}_m$ and $\hat{\sigma}_{m'}$ with $m \neq m'$ are statistically independent. The expectation over statistically independent random matrices factorizes. This ensures that the trace of the product of any two distinct classical shadows reproduces the purity in expectation:
\begin{equation*}
\mathrm{tr} \left( \hat{\sigma}_m \hat{\sigma}_{m'} \right) \quad \text{obeys} \quad \mathbb{E} \left[ \mathrm{tr} \left( \hat{\sigma}_m \hat{\sigma}_{m'} \right) \right]
= \mathrm{tr} \left( \mathbb{E} \left[ \hat{\sigma}_m \right] \mathbb{E} \left[ \hat{\sigma}_{m'} \right] \right) = \mathrm{tr} \left( \rho \; \rho \right) = p(\rho),
\end{equation*}
whenever $m \neq m'$. To boost convergence to this desired expectation, we can form the empirical average over all distinct pairs:
\begin{equation}
\hat{p} = \frac{1}{M(M-1)} \sum_{m \neq m'} \mathrm{tr} \left( \hat{\sigma}_m \hat{\sigma}_{m'}\right) = \binom{M}{2}^{-1} \sum_{m < m'} \mathrm{tr} \left( \hat{\sigma}_m \hat{\sigma}_{m'}\right).
\label{eq:purity-estimator}
\end{equation}
This formula describes an empirical average of $\binom{M}{2}$ random variables with the correct expectation value $p(\rho)$. This, in turn, ensures $\mathbb{E} \left[ \hat{p}\right] = p_2 (\rho)$. 
However, in contrast to before, the individual random variables are not necessarily statistically independent. The first two terms $\mathrm{tr} \left( \hat{\sigma}_1 \hat{\sigma}_2 \right)$ and $\mathrm{tr} \left( \hat{\sigma}_1 \hat{\sigma}_3\right)$, for instance, both depend on $\hat{\sigma}_1$. This prevents us from re-using exponential concentration inequalities, like the Bernstein inequality, to establish rapid convergence to this desired expectation value.
More general, albeit weaker, concentration arguments still apply. Chebyshev's inequality, for instance, implies
\begin{equation}
\mathrm{Pr} \left[ \left| \hat{p} - p (\rho) \right| \geq \epsilon \right] = \mathrm{Pr} \left[ \left| \hat{p} - \mathbb{E} \left[ \hat{p} \right] \right| \geq \epsilon \right] \leq \frac{\mathrm{Var} \left[ \hat{p} \right]}{\epsilon^2} \quad \text{for any $\epsilon >0$.} \label{eq:chebyshev}
\end{equation}
In words: the probability of an $\epsilon$-deviation (or larger) is bounded by the variance $\mathrm{Var} \left[ \hat{p}\right]$ of our estimator divided by $\epsilon^2$. This variance can be decomposed into individual contributions:
\begin{align*}
\mathrm{Var} \left[ \hat{p} \right]
=& \mathbb{E} \left[ \hat{p}^2 \right] - \mathbb{E} \left[ \hat{p}\right]^2 = \mathbb{E} \left[ \hat{p}^2 \right] - \mathrm{tr} \left( \rho^2 \right)^2 \\
=& \binom{M}{2}^{-2} \sum_{m_1 <m_1'} \sum_{m_2 <m_2'} \left( \mathrm{tr} \left( \hat{\sigma}_{m_1} \hat{\sigma}_{m_1'} \right) \mathrm{tr} \left( \hat{\sigma}_{m_2} \hat{\sigma}_{m'_2}\right) - \mathrm{tr} \left( \rho^2 \right) \mathrm{tr} \left( \rho^2 \right) \right)
\end{align*}
Now, note that $\mathbb{E} \left[ \hat{\sigma}_{m_1} \right] = \mathbb{E} \left[ \hat{\sigma}_{m_1'} \right] = \mathbb{E} \left[ \hat{\sigma}_{m_2}\right] = \mathbb{E} \left[ \hat{\sigma}_{m_2'} \right] = \rho$ implies that these summands vanish unless either two or all four summation indices coincide. A careful case-by-case analysis yields
\begin{align}
\mathrm{Var} \left[ \hat{p}\right]
=& \binom{M}{2}^{-1} 2 (M-2) \mathrm{Var} \left[ \mathrm{tr} \left( \rho \hat{\sigma} \right) \right] + \binom{M}{2}^{-1} \mathrm{Var} \left[ \mathrm{tr} \left( \hat{\sigma} \hat{\sigma}'\right) \right] \nonumber \\
=& \frac{4(M-2)}{M(M-1)}  \mathrm{Var} \left[ \mathrm{tr} \left( \rho \hat{\sigma} \right) \right]  + \frac{2}{M(M-1)} \mathrm{Var} \left[ \mathrm{tr} \left( \hat{\sigma} \hat{\sigma}'\right) \right]
\label{eq:variance-aux1}
\end{align}
and we refer to \cite[Supplemental material]{elben2020} for details. Here, $\rho$ is the underlying state and $\hat{\sigma},\hat{\sigma}'$ denote independent instances of a classical shadow approximation. This reformulation contains two variance terms that depend on one (linear contribution) and two independent classical shadows (quadratic contribution), respectively. We can use Lemma~\ref{lem:linear-variance} to control the first term. Set $O=\rho$ to conclude
\begin{equation}
\mathrm{Var} \left[ \mathrm{tr} \left( \rho \hat{\rho} \right) \right] \leq 3^N \mathrm{tr} \left( \rho^2 \right) \leq 3^N , \label{eq:linear-aux}
\end{equation}
because $\mathrm{tr}(\rho^2) \leq 1$ for any underlying quantum state. Bounding the quadratic variance term requires more work. The following statement is a consequence of the geometric structure of SIC POVM measurements and substitutes existing arguments which rely on 3-design properties which do not apply here.

\begin{lemma}
Let $\hat{\sigma},\hat{\sigma}'$ be independent classical shadows of an underlying $N$-qubit state $\rho$. Then,
\begin{equation*}
\mathrm{Var} \left[ \mathrm{tr} \left( \hat{\sigma} \hat{\sigma}'\right) \right] = \mathbb{E} \left[ \mathrm{tr} \left( \hat{\sigma} \hat{\sigma}'\right)^2 \right] - \mathbb{E} \left[ \mathrm{tr} \left( \hat{\sigma} \hat{\sigma}'\right) \right]^2 \leq 9^N.
\end{equation*}
\end{lemma}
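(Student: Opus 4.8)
The plan is to bound the second moment directly, since $\mathrm{Var}[\mathrm{tr}(\hat{\sigma}\hat{\sigma}')] = \mathbb{E}[\mathrm{tr}(\hat{\sigma}\hat{\sigma}')^2] - p(\rho)^2 \leq \mathbb{E}[\mathrm{tr}(\hat{\sigma}\hat{\sigma}')^2]$; it therefore suffices to show $\mathbb{E}[\mathrm{tr}(\hat{\sigma}\hat{\sigma}')^2] \leq 9^N$. First I would exploit the tensor-product structure of the shadows from Eq.~\eqref{eq:classical-shadow}. Writing the two independent outcome strings as $(\hat{i}_1,\ldots,\hat{i}_N)$ and $(\hat{j}_1,\ldots,\hat{j}_N)$, the trace factorizes across qubits as $\mathrm{tr}(\hat{\sigma}\hat{\sigma}') = \prod_{n=1}^N G(\hat{i}_n,\hat{j}_n)$, where $G(i,j) = \mathrm{tr}\bigl((3|\psi_i\rangle\!\langle\psi_i|-\mathbb{I})(3|\psi_j\rangle\!\langle\psi_j|-\mathbb{I})\bigr)$ is the single-qubit overlap kernel. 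Using the defining SIC overlaps $|\langle\psi_i|\psi_j\rangle|^2 = 1/3$ for $i \neq j$, a one-line computation gives $G(i,j) = 6\delta_{ij} - 1$, hence $G(i,j)^2 = 24\delta_{ij}+1$.

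Next I would square and expand the product over qubits, $\mathrm{tr}(\hat{\sigma}\hat{\sigma}')^2 = \prod_{n=1}^N(24\delta_{\hat{i}_n\hat{j}_n}+1) = \sum_{S\subseteq[N]}24^{|S|}\prod_{n\in S}\delta_{\hat{i}_n\hat{j}_n}$. Taking the expectation over the two independent measurements reduces everything to the \emph{collision probabilities} $q_S := \mathbb{E}[\prod_{n\in S}\delta_{\hat{i}_n\hat{j}_n}]$, i.e.\ the probability that two independent SIC measurements of $\rho$ agree on every qubit in $S$. Since the two runs are i.i.d., $q_S = \sum_{\mathbf{a}}\mathrm{Pr}_S[\mathbf{a}|\rho]^2$, where $\mathrm{Pr}_S[\cdot|\rho]$ is the outcome distribution marginalized onto the reduced state $\rho_S = \mathrm{tr}_{\neg S}(\rho)$.

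The crux---and the step I expect to be the main obstacle---is evaluating $q_S$ in closed form, because this is precisely where the 2-design structure of the SIC POVM must be invoked (the 3-design shortcut available for Pauli measurements is unavailable here). I would write $\mathrm{Pr}_S[\mathbf{a}|\rho]^2 = 4^{-|S|}\,\mathrm{tr}[(P_{\mathbf{a}}\otimes P_{\mathbf{a}})(\rho_S\otimes\rho_S)]$ with $P_{\mathbf{a}} = \bigotimes_{n\in S}|\psi_{a_n}\rangle\!\langle\psi_{a_n}|$, and then apply the second-moment identity~\eqref{eq:2design} qubitwise, $\sum_{a=1}^4 (|\psi_a\rangle\!\langle\psi_a|)^{\otimes 2} = \tfrac{2}{3}(\mathbb{I}\otimes\mathbb{I}+\mathbb{F})$. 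After reordering the replica tensor factors so that the two copies of each qubit sit together, the summed projectors factorize into $\bigotimes_{n\in S}\tfrac{2}{3}(\mathbb{I}\otimes\mathbb{I}+\mathbb{F}_n)$. Expanding this product over subsets $T\subseteq S$ and using the partial-swap trace identity $\mathrm{tr}[(\mathbb{F}_T\otimes\mathbb{I}_{S\setminus T})(\rho_S\otimes\rho_S)] = \mathrm{tr}(\rho_T^2)$ then yields the clean formula $q_S = 6^{-|S|}\sum_{T\subseteq S}\mathrm{tr}(\rho_T^2)$. Establishing this partial-swap identity (most transparently via the swap-trick picture, checking the $T=\emptyset$ and $T=S$ boundary cases against $\mathrm{tr}(\rho_S)^2=1$ and $\mathrm{tr}(\rho_S^2)$) is the technical heart of the argument.

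Finally I would assemble the pieces. Substituting $q_S$ gives $\mathbb{E}[\mathrm{tr}(\hat{\sigma}\hat{\sigma}')^2] = \sum_{S\subseteq[N]}24^{|S|}6^{-|S|}\sum_{T\subseteq S}\mathrm{tr}(\rho_T^2) = \sum_{S}4^{|S|}\sum_{T\subseteq S}\mathrm{tr}(\rho_T^2)$. Exchanging the order of summation and summing over all $S\supseteq T$ (a geometric series giving $\sum_{S\supseteq T}4^{|S|} = 4^{|T|}5^{N-|T|}$) collapses this to $\sum_{T\subseteq[N]}4^{|T|}5^{N-|T|}\mathrm{tr}(\rho_T^2)$. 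Bounding every subsystem purity by $\mathrm{tr}(\rho_T^2)\leq 1$ and applying the binomial theorem then delivers $\sum_{T}4^{|T|}5^{N-|T|} = (4+5)^N = 9^N$, which is the claimed bound. I would close by remarking that keeping the purities explicit shows the bound is typically far from tight, since highly mixed subsystems contribute much less than $1$.
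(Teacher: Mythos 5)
Your proposal is correct and follows essentially the same route as the paper's proof of Proposition~\ref{prop:quadratic-variance}: the same expansion of $\mathrm{tr}(\hat{\sigma}\hat{\sigma}')^2$ over subsets of coincidental outcomes with weights $24^{|S|}$, the same collision-probability lemma proved via the 2-design property (your replica/swap-operator computation of $q_S = 6^{-|S|}\sum_{T\subseteq S}\mathrm{tr}(\rho_T^2)$ is exactly the paper's Lemma~\ref{lem:coincidence}, phrased with $\tfrac{2}{3}(\mathbb{I}\otimes\mathbb{I}+\mathbb{F})$ instead of the equivalent depolarizing channel $\mathcal{D}_{1/3}$), and the same binomial summation to $9^N$. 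The only differences are cosmetic --- your one-line identity $G(i,j)^2 = 24\delta_{ij}+1$ replaces the paper's indicator-function bookkeeping, and you exchange summations to keep the exact purity-weighted expression $\sum_{T}4^{|T|}5^{N-|T|}\mathrm{tr}(\rho_T^2)$ before bounding, which is a slightly sharper intermediate statement but yields the identical final bound.
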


We provide a detailed argument in Sec.~\ref{sub:quadratic-variance} below. For now, we insert both bounds into Eq.~\eqref{eq:variance-aux1} to obtain
\begin{align*}
\mathrm{Var} \left[ \hat{p} \right] \leq & \frac{4(M-2)}{M(M-1)} 3^N + \frac{2}{M(M-1)} 9^N
\leq  \frac{4 \times 3^N}{M-1} + \left(\frac{\sqrt{2} \times 3^N}{M-1}\right)^2.
\end{align*}
This variance bound diminishes as $M$ increases. For fixed $\epsilon \in (0,1)$ and $\delta  \in (0,1)$, a measurement budget of $M \geq \left( 5 \times 3^N+1 \right)/(\delta \epsilon^2)$ ensures $\mathrm{Var} \left[ \hat{p} \right] \leq \frac{4}{5} \epsilon^2 \delta + \frac{2}{25} \epsilon^4 \delta^2 < \epsilon^2 \delta$.
We can insert this implication into the Chebyshev bound~\eqref{eq:chebyshev} to obtain a rigorous convergence guarantee for purity estimation:
\begin{equation*}
M \geq \left( 5 \times 3^N+1 \right)/(\delta \epsilon^2) \quad \text{ensures} \quad \mathrm{Pr} \left[ \left| \hat{p} - \mathrm{tr}(\rho^2) \right| \geq \epsilon \right] \leq \delta.
\end{equation*}
In words: with probability (at least) $1-\delta$, the purity estimator $\hat{p}$ is $\epsilon$-close to the true purity. 
Again, the required measurement budget $M$ scales exponentially in the number of qubits involved. For global purities, this exponentially increasing measurement demand quickly becomes prohibitively expensive --- a situation that cannot be avoided due to recent fundamental lower bounds~\cite{Chen2021}. 
However, once more, the situation changes if we consider \emph{subsystem purities} instead. Let $\mathsf{K} \subseteq \left[N \right]$ be a subsystem comprised of $|\mathsf{K}|$ qubits. 
The associated density matrix is $\rho_{\mathsf{K}} = \mathrm{tr}_{\neg \mathsf{K}} (\rho)$ and we can estimate it by averaging appropriately marginalized classical shadows:
\begin{align}
\hat{p}_{\mathsf{K}} = \binom{M}{2} \sum_{m \neq m'} \mathrm{tr} \left( \mathrm{tr}_{\neg \mathsf{K}}\left( \hat{\sigma}_m \right) \mathrm{tr}_{\neg \mathsf{K}} \left( \hat{\sigma}_m'\right) \right)
\quad \text{obeys} \quad \mathbb{E} \left[ \hat{p}_{\mathsf{K}}\right] = \mathrm{tr} \left( \rho_{\mathsf{K}}^2 \right) = p \left( \rho_{\mathsf{K}}\right).
\label{eq:reduced-purity-estimator}
\end{align}
Importantly, this estimation process now only depends on the $|\mathsf{K}| < N$ qubits involved, such that
\begin{equation*}
M \geq \left( 5 \times 3^{|\mathsf{K}|}+1 \right)/(\delta \epsilon^2) \quad \text{ensures} \quad \mathrm{Pr} \left[ \left| \hat{p}_{\mathsf{K}} - p \left( \rho_{\mathsf{K}}\right) \right| \geq \epsilon \right] \leq \delta.
\end{equation*}
This scaling is much more favorable, especially for small subsystems ($|\mathsf{K}| \ll N$). Similar to linear observable estimation, we can use this assertion to predict many subsystem purities based on the same classical shadows. A union bound argument, similar to the proof of Theorem~\ref{thm:linear-estimation} above, readily implies the following statement.

\begin{theorem} \label{thm:purity-estimation}
Suppose we are interested in predicting $L$ subsystem purities $p \left( \rho_{\mathsf{K}_l}\right)$ of an unknown $N$-qubit state $\rho$. Let $K = \max_{1 \leq l \leq L} |\mathsf{K}_l|$ be the largest subsystem size involved and set $\epsilon, \delta \in (0,1)$. Then,
\begin{equation}
M \geq 6 L3^{K} /(\epsilon^2 \delta)
\label{eq:reduced-purity-measurement-budget}
\end{equation}
$N$-qubit SIC POVM measurements on (independent copies of) $\rho$ are likely to $\epsilon$-approximate all subsystem purities simultaneously. More precisely, the resulting subsystem purity estimators $\hat{p}_{\mathsf{K}}$ defined in Eq.~\eqref{eq:reduced-purity-estimator} obey
\begin{equation*}
\max_{1 \leq l \leq L} \left| \hat{p}_{\mathsf{K}_l} - \mathrm{tr} \left( \rho_{\mathsf{K}_l}^2 \right) \right| \leq \epsilon \quad \text{with probability (at least) $1-\delta$}
\end{equation*}
\end{theorem}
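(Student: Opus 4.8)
The plan is to combine the single-subsystem convergence guarantee established immediately above with a union bound over the $L$ target subsystems, exactly mirroring the argument used for Theorem~\ref{thm:linear-estimation}. The essential analytic input is already in hand: the variance decomposition of Eq.~\eqref{eq:variance-aux1}, together with the linear variance bound (Lemma~\ref{lem:linear-variance}) and the quadratic variance lemma, applied to the marginalized classical shadows $\mathrm{tr}_{\neg \mathsf{K}}(\hat{\sigma}_m)$, already yields $\mathrm{Pr} \left[ | \hat{p}_{\mathsf{K}} - p(\rho_{\mathsf{K}}) | \geq \epsilon \right] \leq \delta$ as soon as $M \geq (5 \times 3^{|\mathsf{K}|}+1)/(\delta \epsilon^2)$. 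So the remaining work is purely at the level of elementary probability.

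First I would fix a single target subsystem $\mathsf{K}_l$ and invoke Chebyshev's inequality (Eq.~\eqref{eq:chebyshev}) for its purity estimator $\hat{p}_{\mathsf{K}_l}$, reading off that a deviation of $\epsilon$ or more occurs with probability at most $\mathrm{Var} [ \hat{p}_{\mathsf{K}_l} ]/\epsilon^2$. Rerunning the variance bookkeeping on the $|\mathsf{K}_l|$-qubit marginals replaces every $3^N$ and $9^N$ appearing in the global analysis by $3^{|\mathsf{K}_l|}$ and $9^{|\mathsf{K}_l|}$, so the individual failure probability is controlled once $M$ exceeds $(5 \times 3^{|\mathsf{K}_l|}+1)/(\delta' \epsilon^2)$ for the desired per-subsystem confidence level $\delta'$.

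The second step is the union bound (Boole's inequality): the probability that at least one of the $L$ estimators strays by $\epsilon$ or more is at most the sum of the individual failure probabilities. Allocating a budget of $\delta/L$ to each subsystem and bounding every subsystem size by $K = \max_{l} |\mathsf{K}_l|$, I would demand
\begin{equation*}
M \geq \frac{L \left( 5 \times 3^{K} + 1 \right)}{\delta \epsilon^2} .
\end{equation*}
Since $3^K \geq 1$ implies $5 \times 3^K + 1 \leq 6 \times 3^K$, the cleaner sufficient condition $M \geq 6 L 3^K/(\epsilon^2 \delta)$ advertised in Eq.~\eqref{eq:reduced-purity-measurement-budget} follows at once, and summing $L$ terms each bounded by $\delta/L$ gives a total failure probability of at most $\delta$.

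I do not expect a genuine obstacle here, since all the heavy lifting resides in the two variance lemmas proved earlier. The one conceptual point worth flagging is that, because purity is a \emph{quadratic} functional, we only have Chebyshev (polynomial) concentration rather than an exponential Bernstein-type tail; consequently the union bound costs a full multiplicative factor of $L$ rather than the $\log L$ that appears in Theorem~\ref{thm:linear-estimation}. The only place requiring a little care is keeping the arithmetic honest when passing from the per-subsystem budget to the stated global budget—in particular the $5 \times 3^K + 1 \leq 6 \times 3^K$ rounding that produces the clean constant.
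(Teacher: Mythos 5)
Your proposal is correct and takes essentially the same route as the paper: the paper obtains Theorem~\ref{thm:purity-estimation} exactly by combining the per-subsystem Chebyshev guarantee $M \geq \left(5 \times 3^{|\mathsf{K}|}+1\right)/(\delta \epsilon^2)$ (derived from Eq.~\eqref{eq:variance-aux1} together with the two variance lemmas applied to marginalized shadows) with a union bound allocating $\delta/L$ to each of the $L$ subsystems, and your rounding $5 \times 3^{K}+1 \leq 6 \times 3^{K}$ reproduces the advertised constant. Your closing remark about the multiplicative (rather than logarithmic) cost in $L$, forced by having only Chebyshev-type concentration for the correlated pair estimator, matches the paper's own discussion immediately following the theorem.
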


The dependence on subsystem size $K$ and  accuracy $\epsilon$ is virtually identical to convergence guarantees for linear observable prediction, see Theorem~\ref{thm:linear-estimation}. However, the dependence on the number of subsystem purities $L$and the inverse confidence $1\delta$, now enter linearly, not logarithmically. This is a consequence of the fact that the individual contributions to $\hat{p}_{\mathsf{K}_l}$ are not statistically independent. In turn, we had to resort to Chebyshev's inequality instead of stronger exponential tail bounds like the Bernstein inequality. 

It is possible to obtain a scaling proportional to $\log (2L/\delta)$ by using a more sophisticated estimation procedure known as \emph{median of means estimation}, see e.g.\ \cite{Huang2020} for details. Practical tests with real data do, however, suggest that median of means estimation actually reduces the approximation quality overall \cite{elben2020}. This is not a contradiction, because statements like Theorem~\ref{thm:purity-estimation} are conservative mathematical statements about the worst-case rate of convergence. In practical applications, convergence can --- and usually does --- set in much earlier.

\subsection{Convergence for higher order polynomials and entanglement detection (outlook)} \label{app:entanglement-probing}

Quadratic estimation with classical shadows readily extends to higher-order polynomials. Such higher-order polynomials can be used, for instance, to probe entanglement in mixed states~\cite{elben2020,Neven2021}. This is important, because quadratic entanglement conditions --- like subsystem R\'enyi entropies (purities) --- only apply to global states which are reasonably pure ($\mathrm{tr}(\rho^2) \approx 1$). To see this, consider the maximally mixed state $\tau = (1/2 \;\mathbb{I})^{\otimes N}$ on $N$ qubits. This state is certainly not entangled, but nonetheless
\begin{align*}
R_2 (\tau) = -\log_2 \left(\mathrm{tr} \left( \rho_{\mathsf{K}}^2 \right)\right) =-\log_2 \left( \mathrm{tr} \left( (1/2\;\mathbb{I})^{\otimes |\mathsf{K}|}\right) \right)=-\log_2 \left( 2^{-|\mathsf{K}|} \right) = |\mathsf{K}| \quad \text{for all subsystems $\mathsf{K} \subseteq N$}.
\end{align*}
In words, second R\'enyi entropy is maximal for all subsystems simultaneously. This, however, is not a consequence of entanglement, but a trivial consequence of the fact that the state is very (maximally) mixed. 

Fortunately, there exist entanglement criteria that extend to (very) mixed states. Chief among them is the PPT-criterion~\cite{Peres1996,Horodecki1996,Horodecki2009}. Let $\rho$ be an $N$-qubit quantum state and let $(\mathsf{A},\bar{\mathsf{A}})$ be a bipartition of the qubits into two disjoint sets. Then, $\rho$ is entangled (across the bipartition) if the partial transpose density matrix is not positive semidefinite (i.e.\ it has negative eigenvalues):
\begin{equation}
\rho^{T_{\mathsf{A}}} \not \geq 0 \quad \text{implies $\rho$ is entangled across the bipartition}. \label{eq:PPT}
\end{equation}
The partial transpose is defined by transposing tensor factors belonging to subsystem $\mathsf{A}$, i.e. $(\rho_1 \otimes \cdots \otimes \rho_N)^{T_{\mathsf{A}}}=\bigotimes_{a \in \mathsf{A}} \rho_a^T \bigotimes_{\bar{a} \in \bar{\mathsf{A}}} \rho_{\bar{a}}$ and linearly extended to all $N$-qubit density matrices. A quick sanity check confirms that the maximally mixed state doesn't pass the PPT condition ($\mathbb{I}^T=\mathbb{I}$):
\begin{align*}
\tau^{T_{\mathsf{A}}}= \bigotimes_{a \in \mathsf{A}} (1/2 \; \mathbb{I})^T \bigotimes_{\bar{a} \in \bar{\mathsf{A}}} (1/2 \; \mathbb{I})^T = \bigotimes_{n \in [N]}(1/2\; \mathbb{I}) = \tau \geq 0.
\end{align*}
Very entangled states, like the 2-qubit Bell state $|\Omega \rangle = 1/\sqrt{2}(|00 \rangle + |11 \rangle )$ do, in contrast, have partial transposes with negative eigenvalues:
\begin{equation*}
\left( |\Omega \rangle \! \langle \Omega|\right)^{T_1}= \left( |\Omega \rangle \! \langle \Omega | \right)^{T_2}=1/\sqrt{2}\;\mathbb{F} \in \mathbb{H}_2^{\otimes 2},
\end{equation*}
where $\mathbb{F}|x \rangle \otimes |y \rangle=|y \rangle \otimes |x \rangle$ denotes the swap operator which has one negative eigenvalue ($\lambda_{\min}(\mathbb{F})=-1$). We call a state $\rho$ with $\rho^{T_{\mathsf{A}}}\not\geq0$ a \emph{PPT-entangled state} (with respect to the bipartition $(\mathsf{A},\bar{\mathsf{A}}$).
The PPT condition is a sufficient, but not necessary, condition for entanglement. It is known that there exist states which are entangled, but nonetheless obey $\rho^{T_{\mathsf{A}}} \geq 0$~\cite{Horodecki1997}. So, it is fruitful to view the PPT criterion as a one-sided test for entanglement: if $\rho^{T_{\mathsf{A}}}\not \geq 0$, we can be sure that the state is entangled. But, $\rho^{T_A} \geq 0$ doesn't necessarily imply that the state is not entangled (i.e.\ separable).

The PPT criterion~\eqref{eq:PPT} is conceptually appealing, but it does require full and accurate knowledge of the density matrix $\rho$. This, in turn, typically requires full state tomography which quickly becomes prohibitively expensive. It is, however, possible to test consequences of $\rho^{T_{\mathsf{A}}} \geq 0$ by comparing moments of the partially transposed density matrix. The simplest consistency check is the so-called $p_3$-criterion~\cite{elben2020}:
\begin{equation}
\rho^{T_{\mathsf{A}}}\geq 0 \quad \Rightarrow \quad \mathrm{tr} \left( \left(\rho^{T_{\mathsf{A}}}\right)^3 \right) \geq \mathrm{tr}\left( \left( \rho^{T_{\mathsf{A}}}\right)^2 \right)^2 = \mathrm{tr} \left( \rho^2 \right)^2. \label{eq:p3}
\end{equation}
The final simplification follows from the fact that partial transposition preserves the purity. The contrapositive of this implication serves as a (one-sided) test for entanglement: if $\mathrm{tr} \left( \left( \rho^{T_{\mathsf{A}}}\right)^3 \right) < \mathrm{tr}(\rho^2)^2$ (for some bipartition ($\mathsf{A},\bar{\mathsf{A}}$), then the underlying state must be PPT-entangled (across this bipartition). 

Classical shadows can be used to directly estimate the trace moments involved in this test. Indeed, $\mathrm{tr}(\rho^2)$ is just the purity, while $\mathrm{tr} \left( \left( \rho^{T_{\mathsf{A}}}\right)^3 \right)$ can be rewritten as a linear function on three copies of the underlying state: 
\begin{equation*}
\mathrm{tr} \left( \left(\rho^{T_{\mathsf{A}}}\right)^3 \right) = \mathrm{tr} \left( O \rho \otimes \rho \otimes \rho \right).
\end{equation*}
We refer to \cite[Eq.~(4)]{elben2020} for a precise reformulation. Subsequently, we can approximate this function by averaging over triples of distinct (and therefore independent) classical shadows:
\begin{equation*}
\mathrm{tr} \left(O \rho \otimes \rho \otimes \rho \right) \approx \frac{1}{6}\binom{M}{3}^{-1}\sum_{m \neq m' \neq m''} \mathrm{tr} \left( O \hat{\sigma}_m \otimes \hat{\sigma}_{m'} \otimes \hat{\sigma}_{m''}\right).
\end{equation*}
The convergence analysis from above can, in principle, be extended to this form of cubic approximation. For randomized Pauli basis measurements, this has been done in the supplemental material of Ref.~\cite{elben2020}. 
We leave a parallel treatment of cubic estimation with SIC POVM shadows for future work. 
The experimental and numerical results from the present work indicate that a SIC POVM-based approach is expected to be both cheaper and easier than existing approaches based on Pauli basis measurements. 

Finally, we point out that Ref.~\cite{Neven2021} extended the intuition behind the $p_3$-criterion~\eqref{eq:p3} to a complete family of polynomial consistency checks that compare polynomials of degree $d$ with polynomials of degree $(d-1)$ and lower. This produces a hierarchy of in total $d_{\mathrm{max}}=2^{N}$ consistency checks that is complete in the sense that a state $\rho$ passes all of them if and only if $\rho^{T_{\mathsf{A}}} \geq 0$. Although polynomial estimation with classical shadows becomes more and more challenging as the degree $d$ increases, the lower levels of this hierarchy may still be attainable with (comparatively) modest experimental and postprocessing effort.

\section{Technical auxiliary results}

\subsection{Variance bounds for observable estimation} \label{sub:linear-technical}

Here, we supply the proof of Lemma~\ref{lem:linear-variance}.

\begin{lemma}[Restatement of Lemma~\ref{lem:linear-variance}]
Fix a $N$-qubit observable $O$ and let $\hat{\sigma} \in \mathbb{H}_2^{\otimes N}$ be a (SIC POVM) classical shadow as defined in Eq.~\eqref{eq:classical-shadow}.
Then,
\begin{equation*}
\mathrm{Var} \left[ \mathrm{tr} \left( O \hat{\sigma} \right) \right] \leq 3^N \mathrm{tr}\left(O^2 \right) \quad \text{for any underlying $N$-qubit state $\rho$.}
\end{equation*}

\end{lemma}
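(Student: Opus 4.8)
The plan is to reduce the variance to a single-shot second moment, rewrite that second moment as an expectation value against the unknown state, and then eliminate the state dependence by a positive-semidefiniteness argument. Since $\mathrm{Var}[X]\le \mathbb{E}[X^2]$ for $X=\mathrm{tr}(O\hat\sigma)$, and since each single-shot outcome $\vec i=(i_1,\dots,i_N)$ occurs with probability $\mathrm{Pr}[\vec i|\rho]=\mathrm{tr}\!\big(\rho\,\bigotimes_n \tfrac12|\psi_{i_n}\rangle\!\langle\psi_{i_n}|\big)$ (Eq.~\eqref{eq:measurement}), I would write
\[
\mathbb{E}\big[\mathrm{tr}(O\hat\sigma)^2\big]=\sum_{\vec i}\mathrm{Pr}[\vec i|\rho]\,o_{\vec i}^{\,2}=\mathrm{tr}(\rho M),\qquad M=\sum_{\vec i} o_{\vec i}^{\,2}\bigotimes_{n=1}^N \tfrac12|\psi_{i_n}\rangle\!\langle\psi_{i_n}|,
\]
where $o_{\vec i}=\mathrm{tr}\!\big(O\bigotimes_n(3|\psi_{i_n}\rangle\!\langle\psi_{i_n}|-\mathbb{I})\big)$ is the classical-shadow functional from Eq.~\eqref{eq:classical-shadow}. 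The operator $M$ is manifestly positive semidefinite (a nonnegative combination of projectors), so $\mathrm{tr}(\rho M)\le\lambda_{\max}(M)\le\mathrm{tr}(M)$ for every state $\rho$. This is the step that removes all dependence on $\rho$, and it produces the clean target $\mathrm{tr}(M)=2^{-N}\sum_{\vec i} o_{\vec i}^{\,2}$, since each local factor $\tfrac12|\psi_{i_n}\rangle\!\langle\psi_{i_n}|$ has trace $1/2$.

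Next I would evaluate $\mathrm{tr}(M)$ exactly using only the first two moment formulas, Eq.~\eqref{eq:1design} and Eq.~\eqref{eq:2design}. Expand $O$ in the $N$-qubit Pauli basis, $O=\sum_{\vec w}\alpha_{\vec w}\bigotimes_n w_n$ with $w_n\in\{\mathbb{I},X,Y,Z\}$, $\alpha_{\vec w}=2^{-N}\mathrm{tr}(O\bigotimes_n w_n)$, and let $|\vec w|$ denote the number of non-identity factors. Because $o_{\vec i}$ factorizes over qubits, so does $\tfrac12\sum_{i}o_{\vec i}^{\,2}$ after squaring, and the whole computation collapses to a single-qubit correlator $D(w,w')=\tfrac12\sum_{i=1}^4 c(w,i)\,c(w',i)$ with $c(w,i)=\mathrm{tr}\!\big(w(3|\psi_i\rangle\!\langle\psi_i|-\mathbb{I})\big)$. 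Using $\sum_i|\psi_i\rangle\!\langle\psi_i|=2\mathbb{I}$ and $\sum_i(|\psi_i\rangle\!\langle\psi_i|)^{\otimes2}=\tfrac23(\mathbb{I}\otimes\mathbb{I}+\mathbb{F})$, I expect the off-diagonal correlators to vanish ($D(w,w')=0$ for $w\ne w'$, including the identity/Pauli cross terms because $\mathrm{tr}(P)=0$) and the diagonal ones to be $D(\mathbb{I},\mathbb{I})=2$ and $D(P,P)=6$ for $P\in\{X,Y,Z\}$ (the latter from $\mathrm{tr}((P\otimes P)\mathbb{F})=\mathrm{tr}(P^2)=2$). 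Multiplying over qubits then gives
\[
\mathrm{tr}(M)=\sum_{\vec w}\alpha_{\vec w}^{\,2}\,2^{\,N-|\vec w|}6^{\,|\vec w|}=2^N\sum_{\vec w}3^{\,|\vec w|}\alpha_{\vec w}^{\,2}.
\]

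Finally, bounding $3^{|\vec w|}\le 3^N$ and recalling $\mathrm{tr}(O^2)=2^N\sum_{\vec w}\alpha_{\vec w}^{\,2}$ yields $\mathrm{tr}(M)\le 3^N\mathrm{tr}(O^2)$, which chains back to $\mathrm{Var}[\mathrm{tr}(O\hat\sigma)]\le 3^N\mathrm{tr}(O^2)$ and proves the lemma. I expect the main obstacle to be conceptual rather than computational: one must be confident that replacing $\lambda_{\max}(M)$ by $\mathrm{tr}(M)$ is not too lossy for the advertised constant. The exact evaluation above reassures us on this point, since the final weight-resolved identity $\mathrm{tr}(M)=2^N\sum_{\vec w}3^{|\vec w|}\alpha_{\vec w}^{\,2}$ saturates $3^N\mathrm{tr}(O^2)$ exactly for a full-weight product observable such as $O=\bigotimes_n Z_n$. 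The only genuine slack is the uniform bound $3^{|\vec w|}\le 3^N$, which is pessimistic precisely for low-weight (e.g.\ small Hilbert--Schmidt norm) observables and is consistent with the $2^{|\mathsf{K}|}$ refinement noted after Theorem~\ref{thm:linear-estimation}. I would stress that this argument invokes only the 2-design identities, with the swap operator $\mathbb{F}$ entering exclusively through second moments, and never requires a 3-design property.
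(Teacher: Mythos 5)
Your proof is correct and follows essentially the same route as the paper: both arguments bound the variance by the single-shot second moment, strip the state dependence (you via $\mathrm{tr}(\rho M)\leq\mathrm{tr}(M)$ for the positive-semidefinite operator $M$, the paper via the equivalent termwise bound $\mathrm{Pr}\left[i_1\cdots i_N|\rho\right]\leq 2^{-N}$, both yielding the identical quantity $2^{-N}\sum_{\vec{i}}o_{\vec{i}}^{\,2}$), and then evaluate that uniform sum exactly through the Pauli expansion, per-qubit factorization, and the 1-/2-design identities, obtaining the same single-qubit correlator values ($0$ off-diagonal, $2$ for the identity, $6$ for a traceless Pauli). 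Your closing steps --- the weight-resolved identity $2^N\sum_{\vec{w}}3^{|\vec{w}|}\alpha_{\vec{w}}^{\,2}$, the relaxation $3^{|\vec{w}|}\leq 3^N$, and Parseval --- coincide line for line with the paper's final display.
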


\begin{proof}
The classical shadow $\hat{\sigma}$ is constructed from performing single-qubit SIC POVM measurements on an underlying $N$-qubit quantum state $\rho$. Recall from Eq.~\eqref{eq:measurement} that each of the $4^N$ possible outcome strings $i_1,\ldots,i_N \in \left\{1,2,3,4\right\}$ occurs with probability
\begin{equation*}
\mathrm{Pr} \left[i_1 \cdots i_N | \rho \right] = 2^{-N} \langle \psi_{i_1},\ldots, \psi_{i_N}| \rho |\psi_{i_1}, \ldots,|\psi_{i_N} \rangle \leq 2^N.
\end{equation*}
In words: the probability of any particular outcome occurring is bounded by $2^{-N}$.
This allows us to bound the dominating part of the variance by
\begin{align*}
 \mathbb{E} \left[ \mathrm{tr} \left( O \hat{\sigma} \right)^2 \right]
=& \sum_{i_1,\ldots,i_N=1}^4 \mathrm{Pr} \left[ i_1 \cdots i_N| \rho \right] \mathrm{tr} \left( O \left( 3|\psi_{i_1} \rangle \! \langle \psi_{i_1}|-\mathbb{I} \right) \otimes \cdots \otimes \left( 3 |\psi_{i_N} \rangle \! \langle \psi_{i_N}| - \mathbb{I} \right) \right)^2 \\
\leq & 2^{-N} \sum_{i_1,\ldots,i_N=1}^4 \mathrm{tr} \left( O \left( 3|\psi_{i_1} \rangle \! \langle \psi_{i_1}|-\mathbb{I} \right) \otimes \cdots \otimes \left( 3 |\psi_{i_N} \rangle \! \langle \psi_{i_N}| - \mathbb{I} \right) \right)^2
\end{align*}
Next, we expand the $N$-qubit observable $O$ in terms of tensor products of single-qubit Pauli matrices $W_1,\ldots,W_N \in \left\{\mathbb{I},X,Y,Z \right\}$:
\begin{align*}
O = \sum_{W_1,\ldots,W_N} o \left(W_1,\ldots,W_N\right) W_1 \otimes \cdots \otimes W_N \quad \text{with} \quad o (W_1,\ldots,W_N) = 2^{-N} \mathrm{tr} \left( W_1 \otimes \cdots \otimes W_N\; O \right) \in \mathbb{R}.
\end{align*}
Such a decomposition into tensor products allows us to factorize the above bound into a product of single qubit contributions:
\begin{align*}
 \mathbb{E} \left[ \mathrm{tr} \left( O \hat{\sigma} \right)^2 \right]
\leq & 2^{-N} \sum_{i_1,\ldots,i_N=1}^4 \left( \sum_{W_1,\ldots,W_N} o(W_1,\ldots,W_N) \mathrm{tr} \left( (3 |\psi_{i_1} \rangle \! \langle \psi_{i_1}| - \mathbb{I}) W_1 \right) \cdots \mathrm{tr} \left( (3|\psi_{i_N} \rangle \! \langle \psi_{i_N}|-\mathbb{I}) W_N \right) \right)^2 \nonumber \\
=& \sum_{V_1,\ldots,V_N} \sum_{W_1,\ldots,W_N} o(V_1,\ldots,V_N) o(W_1,\ldots,W_N) \prod_{n=1}^N \underset{f(V_n,W_n)}{\underbrace{\left( \frac{1}{2} \sum_{i_n=1}^4 \mathrm{tr} \left( (3|\psi_{i_n} \rangle \! \langle \psi_{i_n} | - \mathbb{I} ) V_n \right) \mathrm{tr} \left( (3|\psi_{i_n} \rangle \! \langle \psi_{i_n} | - \mathbb{I} ) W_n \right) \right)}}. 
\end{align*}
These single qubit averages can be computed individually. Use the 1-design property~\ref{eq:1design} (i.e.\ $\frac{1}{2}\sum_{i_n=1}^4 \frac{1}{2} \langle \psi_{i_n} |A| \psi_{i_n} \rangle = \mathrm{tr}(A)$) to obtain
\begin{align*}
 f(V_n, W_n) =& \frac{1}{2} \sum_{i_n=1}^4 \mathrm{tr} \left( (3|\psi_{i_n} \rangle \! \langle \psi_{i_n} | - \mathbb{I} ) V_n \right) \mathrm{tr} \left( (3|\psi_{i_n} \rangle \! \langle \psi_{i_n} | - \mathbb{I} ) W_n \right) \\
 =& \frac{9}{2} \sum_{i_n=1}^4  \langle \psi_{i_n} | V_n |\psi_{i_n}\rangle \langle \psi_{i_n} | W_n |\psi_{i_n} \rangle - 3 \mathrm{tr}(V_n) \frac{1}{2} \sum_{i_n} \langle \psi_{i_n} |W_n| \psi_{i_n} \rangle
 - 3 \mathrm{tr}(W_n)\frac{1}{2} \sum_{i_n=1}^4 \langle \psi_{i_n} | V_n| \psi_{i_n} \rangle + \frac{1}{2} \sum_{i_n=1}^4 \mathrm{tr}(V_n) \mathrm{tr}(W_n) \\
 =& \frac{9}{2} \sum_{i_n=1}^4 \langle \psi_{i_n} |V_k| \psi_{i_n} \rangle \langle \psi_{i_n} | W_n |\psi_{i_n} \rangle - 4 \mathrm{tr}(V_n) \mathrm{tr}(W_n).
\end{align*}
Next, we use the 2-design property~\eqref{eq:2design} of single-qubit SIC POVMs to obtain
\begin{align*}
 f(V_n, W_n)
 =&  \frac{9}{2} \sum_{i_n=1}^4 \langle \psi_{i_n} |V_n| \psi_{i_n} \rangle \langle \psi_{i_n} | W_n |\psi_{i_k} \rangle - 4 \mathrm{tr}(V_n) \mathrm{tr}(W_n) \\
 =& 3 \left( \mathrm{tr}(V_n W_n) + \mathrm{tr}(V_n) \mathrm{tr}(W_n) \right) - 4 \mathrm{tr}(V_n) \mathrm{tr}(W_n) \\
 =& 3 \mathrm{tr} \left(V_n W_n \right) - \mathrm{tr}(V_n) \mathrm{tr}(W_n).
\end{align*}
For Pauli matrices $V_n,W_n$, this expression vanishes whenever $V_n \neq W_n$. It equals $2$ if $V_n=W_n = \mathbb{I}$ and $6$ if $V_n = W_n \neq \mathbb{I}$. In formulas,
\begin{equation*}
f(V_n, W_n) = 2 \delta (V_n, W_n) 3^{1-\delta (W_n,\mathbb{I})}.
\end{equation*}
Inserting this closed-form expression into the original expression yields
\begin{align*}
 \mathbb{E} \left[ \mathrm{tr} \left( O \hat{\sigma} \right)^2 \right] \leq & \sum_{V_1,\ldots,V_N} \sum_{W_1,\ldots,W_N} o (V_1,\ldots,V_N) o(W_1,\ldots,W_N) \prod_{n=1}^N 2 \delta (V_n,W_n) 3^{1-\delta(V_n,\mathbb{I})} \\
\leq & 3^N 2^N \sum_{W_1,\ldots,W_N} o(W_1,\ldots,W_N)^2
= 3^N \sum_{W_1,\ldots,W_N} 2^{-N} \mathrm{tr} \left( W_1 \otimes \cdots \otimes W_N\; O \right)^2 
= 3^N \mathrm{tr}(O^2),
\end{align*}
where the last equation follows from the fact that normalized $N$-qubit Pauli matrices form an orthonormal basis of $\mathbb{H}_2^{\otimes N}$ with respect to the Hilbert-Schmidt inner product (Parseval's identity). This is the advertised result.
\end{proof}

\subsection{Variance bounds for purity estimation} \label{sub:quadratic-variance}

\begin{proposition}[purity variance bound] \label{prop:quadratic-variance}
Let $\hat{\sigma},\hat{\sigma}' \in \mathbb{H}_2^{\otimes N}$ be two independent classical shadows that arise from performing single-qubit SIC POVM measurements on a $K$-qubit state $\rho$. Then,
\begin{align*}
\mathrm{Var} \left[ \mathrm{tr} \left( \hat{\sigma} \hat{\sigma}'\right) \right] = \mathbb{E} \left[ \mathrm{tr} \left( \hat{\sigma} \hat{\sigma}'\right)^2 \right] - \mathbb{E} \left[ \mathrm{tr} \left( \hat{\sigma} \hat{\sigma}'\right) \right]^2 \leq 9^N.
\end{align*}
\end{proposition}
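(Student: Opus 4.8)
The plan is to exploit the complete tensor-product structure of both classical shadows and reduce the entire computation to single-qubit collision statistics, so that only the 2-design property (Eq.~\eqref{eq:2design}) --- and never a 3-design identity --- is ever invoked. Writing $A_i = 3|\psi_i\rangle\!\langle\psi_i| - \mathbb{I}$, both shadows are products $\hat\sigma = \bigotimes_n A_{\hat i_n}$ and $\hat\sigma' = \bigotimes_n A_{\hat j_n}$, so $\mathrm{tr}(\hat\sigma\hat\sigma') = \prod_{n=1}^N \mathrm{tr}(A_{\hat i_n} A_{\hat j_n})$ factorizes over qubits. First I would compute the single-qubit trace $\mathrm{tr}(A_i A_j) = 9|\langle\psi_i|\psi_j\rangle|^2 - 4$ and use the defining SIC symmetry $|\langle\psi_i|\psi_j\rangle|^2 = 1/3$ for $i\neq j$ to conclude that it equals $5$ when $i=j$ and $-1$ otherwise. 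The key consequence is that its square depends only on whether the two outcomes collide: $\mathrm{tr}(A_iA_j)^2 = 1 + 24\,\delta_{ij}$.

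Next I would expand the product over qubits, $\prod_{n}(1+24\delta_{\hat i_n \hat j_n}) = \sum_{S\subseteq[N]} 24^{|S|}\prod_{n\in S}\delta_{\hat i_n \hat j_n}$, and take the expectation. Because $\hat\sigma$ and $\hat\sigma'$ are independent, the expectation of $\prod_{n\in S}\delta_{\hat i_n\hat j_n}$ is exactly the probability $\mathrm{Pr}[\hat i_S = \hat j_S]$ that two independent SIC runs agree on the subsystem $S$, giving $\mathbb{E}[\mathrm{tr}(\hat\sigma\hat\sigma')^2] = \sum_{S\subseteq[N]} 24^{|S|}\,\mathrm{Pr}[\hat i_S=\hat j_S]$. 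Writing this collision probability as $\sum_{\vec k_S}\mathrm{Pr}[\vec k_S|\rho]^2$ and recalling from Eq.~\eqref{eq:measurement} (after marginalizing the complementary qubits via Eq.~\eqref{eq:1design}) that $\mathrm{Pr}[\vec k_S|\rho] = 2^{-|S|}\langle\psi_{\vec k_S}|\rho_S|\psi_{\vec k_S}\rangle$, with $\rho_S$ the reduced state on $S$, turns it into $4^{-|S|}\sum_{\vec k_S}\langle\psi_{\vec k_S}|\rho_S|\psi_{\vec k_S}\rangle^2$ --- a quantity built purely from \emph{second} moments of the SIC projectors.

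The crucial step is to evaluate this squared-overlap sum with the 2-design formula. Applying Eq.~\eqref{eq:2design} qubit by qubit replaces each single-qubit sum $\sum_i (|\psi_i\rangle\!\langle\psi_i|)^{\otimes2}$ by $\tfrac23(\mathbb{I}\otimes\mathbb{I}+\mathbb{F})$, so that $\sum_{\vec k_S}\langle\psi_{\vec k_S}|\rho_S|\psi_{\vec k_S}\rangle^2 = (\tfrac23)^{|S|}\,\mathrm{tr}\big[\bigotimes_{n\in S}(\mathbb{I}+\mathbb{F})\,\rho_S^{\otimes2}\big]$. Expanding the product of swaps and identities over all $T\subseteq S$ and using the standard swap-trace identity $\mathrm{tr}[\mathbb{F}_T(\rho_S\otimes\rho_S)] = \mathrm{tr}(\rho_T^2)$, where $\rho_T$ is the reduction to $T$, yields $\mathrm{Pr}[\hat i_S=\hat j_S] = 6^{-|S|}\sum_{T\subseteq S}\mathrm{tr}(\rho_T^2)$, hence $\mathbb{E}[\mathrm{tr}(\hat\sigma\hat\sigma')^2] = \sum_{S}4^{|S|}\sum_{T\subseteq S}\mathrm{tr}(\rho_T^2)$.

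Finally I would swap the order of summation and sum the geometric series: $\sum_{S\supseteq T}4^{|S|} = 4^{|T|}5^{N-|T|}$ gives $\mathbb{E}[\mathrm{tr}(\hat\sigma\hat\sigma')^2] = 5^N\sum_{T\subseteq[N]}(4/5)^{|T|}\mathrm{tr}(\rho_T^2)$. Bounding every subsystem purity by $\mathrm{tr}(\rho_T^2)\le1$ and discarding the nonnegative term $-\mathbb{E}[\mathrm{tr}(\hat\sigma\hat\sigma')]^2 = -\mathrm{tr}(\rho^2)^2$ then gives $\mathrm{Var}[\mathrm{tr}(\hat\sigma\hat\sigma')] \le 5^N\prod_{n=1}^N(1+4/5) = 5^N(9/5)^N = 9^N$, as claimed. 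The main obstacle is the third step: with only a 2-design one cannot average a triple product of SIC projectors directly, which is what a naive variance expansion of $\mathrm{tr}(\hat\sigma\hat\sigma')^2$ seems to demand. The trick that resolves it is to route everything through collision probabilities, so that the $\rho$-dependence enters only through $\langle\psi|\rho_S|\psi\rangle^2$ --- a second moment in the projectors --- keeping the whole calculation strictly within the reach of Eq.~\eqref{eq:2design}.
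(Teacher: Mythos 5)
Your proposal is correct and follows essentially the same route as the paper's proof: you exploit that $\mathrm{tr}(\hat{\sigma}\hat{\sigma}')$ takes only the discrete values $\pm 5^{c}$ determined by outcome coincidences, expand the square as a sum over coincidence subsets with weight $24^{|S|}$, and evaluate the collision probabilities via the 2-design property to get $6^{-|S|}\sum_{T \subseteq S}\mathrm{tr}(\rho_T^2) \leq 3^{-|S|}$, exactly the content of the paper's Lemma~\ref{lem:coincidence}, before summing the binomial series to $9^N$. The only differences are cosmetic streamlining --- your direct $\mathrm{tr}(A_iA_j)^2 = 1 + 24\,\delta_{ij}$ expansion replaces the paper's inclusion--exclusion over indicator functions, and you defer the purity bound $\mathrm{tr}(\rho_T^2)\leq 1$ to the final step rather than bounding each collision probability first.
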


The proof strategy behind this statement differs from existing arguments in the literature, most notably Refs.~\cite{Huang2020,elben2020,Neven2021}. These use the fact that Pauli basis measurements form a 3-design, a structural property that doesn't apply to SIC POVMs. The key idea behind this new proof technique is to notice that trace inner products of SIC POVM shadows can only assume very discrete values. Recall that
\begin{equation*}
\hat{\sigma} = \bigotimes_{n=1}^N(3 |\psi_{i_n} \rangle \! \langle \psi_{i_n}| - \mathbb{I}) \quad \text{and} \quad \hat{\sigma}'=\bigotimes_{n=1}^N(3 |\psi_{j_n} \rangle \! \langle \psi_{j_n}| - \mathbb{I}),
\end{equation*}
where $i_1,\ldots,i_N \in \left\{1,2,3,4\right\}$ and $j_1,\ldots,j_N \in \left\{1,2,3,4\right\}$ record the outcomes of each single qubit SIC POVM measurement. 
This tensor product structure then implies
\begin{align*}
\mathrm{tr} \left( \hat{\sigma} \hat{\sigma}' \right)
=& \prod_{n=1}^N\mathrm{tr} \left( (3 |\psi_{i_n} \rangle \! \langle \psi_{i_n}| - \mathbb{I}) (3 |\psi_{j_n} \rangle \! \langle \psi_{j_n}| - \mathbb{I}) \right)
= \prod_{n=1}^N \left( 9 \left| \langle \psi_{i_n} |\psi_{j_n} \rangle \right|^2 -4 \right)
\end{align*}
and, because $|\psi_{i_n}\rangle,|\psi_{j_n} \rangle \in \mathbb{C}^2$ are SIC vectors, each contribution can only assume one of two discrete values:
\begin{align}
9 \left| \langle \psi_{i_n} | \psi_{j_n} \rangle\right|^2-4 =
\begin{cases}
+5 & \text{if $i_n = j_n$}, \\
-1 & \text{else if $i_n \neq j_n$}.
\end{cases}
\label{eq:discrete-values}
\end{align}
So, the magnitude of $\mathrm{tr}(\hat{\sigma} \hat{\sigma}')$ scales exponentially in the number of coincidental measurement outcomes ($i_n = j_n$). 
This observation can be used to control the variance of this trace inner product. We first illustrate this for $N=2$ qubits, which is enough to convey the main gist. The proof of Proposition~\ref{prop:quadratic-variance} is then a straightforward, yet somewhat technical, generalization to an arbitrary number of qubits. 

In the 2-qubit case, $\mathrm{tr} \left( \hat{\sigma} \hat{\sigma}'\right)^2$ can only assume 3 values: $25^2$ if all single-qubit outcomes coincide, $25$ if exactly one single-qubit outcome coincides and $1$ if no outcomes coincide. In formulas,
\begin{align}
\mathrm{tr}\left( \hat{\sigma} \hat{\sigma}'\right)^2 = 25^2 \mathbf{1} \left\{ i_1 = j_1 \wedge i_2 = j_2 \right\} +25 \mathbf{1} \left\{i_1 = j_1 \wedge i_2 \neq j_2 \right\} +25 \mathbf{1} \left\{ i_1 \neq j_1 \wedge i_2 = j_2 \right\}  \mathbf{1} \left\{ i_1 \neq j_1 \wedge i_2 \neq j_2 \right\}, \label{eq:quadratic-variance-aux1}
\end{align}
where $\mathbf{1} \left\{ E \right\}$ denotes the indicator function of the event $E$. Next, we re-express these indicator functions in terms of simpler ones:
\begin{align*}
\mathbf{1} \left\{ i_1 = j_1 \wedge i_2 \neq j_2 \right\} =& \mathbf{1} \left\{i_1 = j_1 \right\} - \mathbf{1} \left\{ i_1 = j_1 \wedge i_2 = j_2 \right\}, \\
\mathbf{1} \left\{ i_1 \neq j_2 \wedge i_2 = j_2 \right\} =& \mathbf{1} \left\{i_2 = j_2 \right\} - \mathbf{1} \left\{ i_1 = j_1 \wedge i_2 = j_2 \right\}, \\
\mathbf{1} \left\{ i_1 \neq j_2 \wedge i_2 \neq j_2 \right\} =& 1 - \mathbf{1} \left\{ i_1 = j_1 \right\} - \mathbf{1} \left\{i_2 = j_2 \right\} + \mathbf{1} \left\{i_1 = j_1 \wedge i_2 = j_2 \right\}.
\end{align*}
Inserting these reformulations into Eq.~\eqref{eq:quadratic-variance-aux1} and rearranging terms yields
\begin{align*}
\mathrm{tr}\left( \hat{\sigma} \hat{\sigma}'\right)^2 
=& \left( 25^2 - 2 \times 25 +1 \right) \mathbf{1} \left\{ i_1 = j_1 \wedge i_2 = j_2 \right\} + \left( 25 - 1 \right) \mathbf{1} \left\{ i_1 = j_1 \right\} + \left( 25-1 \right) \mathbf{1} \left\{i_2=j_2 \right\} + 1 \\
=& \left(25-1 \right)^2 \mathbf{1} \left\{ i_1 = j_1 \wedge i_2 = j_2 \right\} + (25-1) \mathbf{1} \left\{ i_1 = j_1 \right\} + (25-1)^1 \mathbf{1} \left\{ i_2 = j_2 \right\} + 1\\
=& 8^2\times  3^2 \mathbf{1} \left\{ i_1 = j_1 \wedge i_2 = j_2 \right\} + 8 \times 3  \mathbf{1} \left\{ i_1 = j_1 \right\} + 8 \times 3 \mathbf{1} \left\{ i_2 = j_2 \right\} +1,
\end{align*}
where we have used $(25-1)=24=8 \times 3$. Now, we are ready to take expectation values. Recall that taking the expectation of an indicator function produces the probability of the associated event:
\begin{align}
\mathbb{E} \left[ \mathrm{tr}\left( \hat{\sigma} \hat{\sigma}'\right)^2 \right]
=& 8^2 \times 3^2 \mathrm{Pr} \left[ i_1 = j_1 \wedge i_2 = j_2 \right] + 8 \times 3 \mathrm{Pr} \left[ i_1 = j_1 \right] + 8 \times 3 \mathrm{Pr} \left[i_2=j_2 \right] + 1. \label{eq:quadratic-variance-aux2}
\end{align}
These probabilities for coincidental measurement outcomes can be computed explicitly. 
This is the content of the following auxiliary result.

\begin{lemma} \label{lem:coincidence}
Suppose that we perform two $N$-qubit SIC POVM measurements on (distinct copies of) a quantum state $\rho$ and let $\mathsf{K} \subseteq \left[N\right] = \left\{1,\ldots,N\right\}$ be a subset of $K=|\mathsf{K}|$ qubits. 
Then, the probability that the obtained measurement outcomes are equal ($i_k=j_k$) for all $k \in \mathsf{K}$ obeys
\begin{equation*}
\mathrm{Pr} \left[ \bigwedge_{k \in \mathsf{K}} \left\{i_k = j_k \right\}  \right] = \mathrm{tr} \left( \rho_{\mathsf{K}} \mathcal{D}_{1/3}^{\otimes K} (\rho_{\mathsf{K}}) \right) \leq 3^{-K},
\end{equation*}
where $\rho_{\mathsf{K}} = \mathrm{tr}_{\neg \mathsf{K}}(\rho)$ is the reduced density matrix supported on the relevant qubit subset and each $\mathcal{D}_{1/3}$ is a single-qubit depolarizing channel.
\end{lemma}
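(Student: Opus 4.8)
The plan is to reduce the $N$-qubit coincidence probability to a $K$-qubit quantity and then evaluate that quantity with the $2$-design identity~\eqref{eq:2design}, applied one qubit at a time. Because the two SIC POVM measurements act on independent copies of $\rho$, the joint outcome distribution factorizes, $\mathrm{Pr}[i,j]=\mathrm{Pr}[i|\rho]\,\mathrm{Pr}[j|\rho]$. The first step is to sum out all qubits $n\notin\mathsf{K}$. For those qubits the outcomes are unconstrained, and summing the single-qubit POVM elements via the normalization (1-design) identity~\eqref{eq:1design}, $\sum_{i_n}\tfrac12|\psi_{i_n}\rangle\!\langle\psi_{i_n}|=\mathbb{I}$, traces them out and replaces $\rho$ by the reduced state $\rho_{\mathsf{K}}=\mathrm{tr}_{\neg\mathsf{K}}(\rho)$. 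This collapses the problem onto the $K$-qubit marginal distribution $p_I=\mathrm{tr}(M_I\rho_{\mathsf{K}})$, where $I=(i_k)_{k\in\mathsf{K}}$ and $M_I=\bigotimes_{k\in\mathsf{K}}\tfrac12|\psi_{i_k}\rangle\!\langle\psi_{i_k}|$ is the $K$-qubit SIC element. After this marginalization the coincidence event merely forces the two (now $K$-qubit) outcome strings to agree, so the target probability becomes $\sum_I p_I^2$.

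Next I would evaluate $\sum_I p_I^2$. Writing $p_I^2=\mathrm{tr}\big((M_I\otimes M_I)(\rho_{\mathsf{K}}\otimes\rho_{\mathsf{K}})\big)$ and reorganizing the two copies qubit-by-qubit lets me apply~\eqref{eq:2design} on each factor separately, $\sum_{i_k}(|\psi_{i_k}\rangle\!\langle\psi_{i_k}|)^{\otimes2}=\tfrac{2}{3}(\mathbb{I}\otimes\mathbb{I}+\mathbb{F}_k)$, where $\mathbb{F}_k$ swaps the two copies of qubit $k$. Collecting the prefactors (each $M_I$ carries $2^{-K}$ and each qubit sum contributes $2/3$) yields the closed form
\begin{equation*}
\mathrm{Pr}\Big[\bigwedge_{k\in\mathsf{K}}\{i_k=j_k\}\Big]=\sum_I p_I^2=\frac{1}{6^{K}}\,\mathrm{tr}\Big(\bigotimes_{k\in\mathsf{K}}(\mathbb{I}\otimes\mathbb{I}+\mathbb{F}_k)\;\rho_{\mathsf{K}}^{\otimes2}\Big).
\end{equation*}
Using the single-qubit partial-trace/swap manipulations behind~\eqref{eq:depolarizing-average}--\eqref{eq:multi-qubit-aux2}, this swap-operator expression equals $\tfrac{1}{2^{K}}\mathrm{tr}(\rho_{\mathsf{K}}\,\mathcal{D}_{1/3}^{\otimes K}(\rho_{\mathsf{K}}))$, i.e.\ the depolarizing-channel form advertised in the statement (I note that matching the $\leq 3^{-K}$ bound fixes the overall normalization convention for $\mathcal{D}_{1/3}^{\otimes K}$ in the displayed equality).

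Finally, the bound $\leq 3^{-K}$ follows cleanly from the swap form. Writing $\mathbb{I}\otimes\mathbb{I}+\mathbb{F}_k=2\,\Pi^{\mathrm{sym}}_k$ with $\Pi^{\mathrm{sym}}_k=\tfrac12(\mathbb{I}\otimes\mathbb{I}+\mathbb{F}_k)$ the projector onto the two-copy symmetric subspace of qubit $k$, the product $\bigotimes_{k}(\mathbb{I}\otimes\mathbb{I}+\mathbb{F}_k)=2^{K}\Pi$ is $2^{K}$ times the orthogonal projector $\Pi=\bigotimes_k\Pi^{\mathrm{sym}}_k$. Since $\rho_{\mathsf{K}}^{\otimes2}$ is a density matrix and $\Pi$ a projector, $\mathrm{tr}(\Pi\,\rho_{\mathsf{K}}^{\otimes2})\leq\mathrm{tr}(\rho_{\mathsf{K}}^{\otimes2})=1$, whence $\mathrm{Pr}[\cdots]\leq 2^{K}/6^{K}=3^{-K}$, with equality approached for pure $\rho_{\mathsf{K}}$. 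I expect the only delicate step to be the tensor-factor bookkeeping in the second paragraph: reorganizing $(\bigotimes_k P_{i_k})^{\otimes2}$ into $\bigotimes_k P_{i_k}^{\otimes2}$ and tracking that permutation consistently on both the operators and on $\rho_{\mathsf{K}}^{\otimes2}$, so that~\eqref{eq:2design} genuinely decouples across qubits. Once that factorization is set up correctly, both the closed form and the symmetric-subspace bound are immediate.
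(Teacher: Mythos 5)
Your proof is correct, and its skeleton matches the paper's: use independence of the two measurements to reduce the coincidence probability to $\sum_I p_I^2$ with $p_I=\mathrm{tr}(M_I\rho_{\mathsf{K}})$ (your explicit marginalization via the 1-design identity is what the paper asserts implicitly when it says the probability only depends on $\rho_{\mathsf{K}}$), then evaluate that sum with the 2-design property applied qubit by qubit. Where you diverge is in how the 2-design is deployed and how the final bound is extracted. The paper expands a single copy of $\rho_{\mathsf{K}}$ in the Pauli basis, applies the weighted-average identity $\sum_i\tfrac12|\psi_i\rangle\!\langle\psi_i|\langle\psi_i|A|\psi_i\rangle=\mathcal{D}_{1/3}(A)$ factor by factor to reach $2^{-K}\mathrm{tr}\big(\rho_{\mathsf{K}}\mathcal{D}_{1/3}^{\otimes K}(\rho_{\mathsf{K}})\big)$, and then bounds this by writing $\mathcal{D}_{1/3}=\tfrac13(\mathcal{I}+\mathcal{T})$ and expanding into $2^K$ subsystem purities, each at most $1$. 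You instead double the state, obtain the swap form $6^{-K}\mathrm{tr}\big(\bigotimes_k(\mathbb{I}\otimes\mathbb{I}+\mathbb{F}_k)\,\rho_{\mathsf{K}}^{\otimes2}\big)$ directly from Eq.~\eqref{eq:2design}, and bound it in one line by recognizing $2^{-K}\bigotimes_k(\mathbb{I}\otimes\mathbb{I}+\mathbb{F}_k)$ as an orthogonal projector, giving $2^K/6^K=3^{-K}$. The two closed forms are identical (expanding the swaps over subsets reproduces $3^{-K}2^{-K}\sum_{\mathsf{T}\subseteq\mathsf{K}}\mathrm{tr}(\rho_{\mathsf{T}}^2)$, the paper's intermediate expression), so the difference is organizational rather than substantive, but your symmetric-projector bound is arguably cleaner and avoids the subset bookkeeping. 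Two remarks. First, you were right to flag the normalization: the equality displayed in the lemma statement is missing the prefactor $2^{-K}$ --- the paper's own proof derives $2^{-K}\mathrm{tr}\big(\rho_{\mathsf{K}}\mathcal{D}_{1/3}^{\otimes K}(\rho_{\mathsf{K}})\big)$, and without that factor the claimed bound would fail (for a maximally mixed single qubit, $\mathrm{tr}\big(\rho\,\mathcal{D}_{1/3}(\rho)\big)=1/2$ while the coincidence probability is $1/4$). Second, a small correction to your closing aside: equality in $\leq 3^{-K}$ holds precisely when all subsystem purities $\mathrm{tr}(\rho_{\mathsf{T}}^2)$, $\mathsf{T}\subseteq\mathsf{K}$, equal one, i.e.\ for pure \emph{product} states on $\mathsf{K}$. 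An entangled pure $\rho_{\mathsf{K}}$ does not saturate the bound (a Bell pair gives $1/12<1/9$), consistent with the fact that $\rho_{\mathsf{K}}^{\otimes2}$ is then not invariant under the individual single-qubit swaps $\mathbb{F}_k$, only under the global swap.
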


The proof follows from exploiting the fact that the two SIC POVM measurements are statistically independent, as well as the 2-design property of SIC POVMs. We defer it to the end of this section.
For the task at hand, Lemma~\ref{lem:coincidence} bounds all remaining probabilities in Eq.~\eqref{eq:quadratic-variance-aux2}. Doing so, conveniently cancels the existing powers of $3$ and produces the bound advertised in Proposition~\ref{prop:quadratic-variance} for $K=2$ qubits:
\begin{align*}
\mathbb{E} \left[ \mathrm{tr}\left( \hat{\sigma} \hat{\sigma}'\right)^2 \right]
=&  8^2 \times 3^2 \mathrm{Pr} \left[ i_1 = j_1 \wedge i_2 = j_2 \right] + 8 \times 3 \mathrm{Pr} \left[ i_1 = j_1 \right] + 8 \times 3 \mathrm{Pr} \left[i_2=j_2 \right] + 1 \\
\leq & 8^2 + 2 \times 8 + 1 = \left( 8+1\right)^2 = 9^2
\end{align*}
This argument can be readily extended to an arbitrary number of qubits. 

\begin{proof}[Proof of Proposition~\ref{prop:quadratic-variance}]
The trace inner product between two $N$-qubit SIC POVM shadows can only assume discrete values. Indeed, Eq.~\eqref{eq:discrete-values} states that $\mathrm{tr} \left( \hat{\sigma} \hat{\sigma}'\right) = \pm 5^{c}$, where $c$ is the number of coincidental measurement outcomes. We can use indicator functions to single out all possibilities for coincidences and multiplying them with the correct scaling factor provides a closed-form expression of the trace inner product in terms of measurement outcomes alone. In the following, we will use $\mathsf{K} \subseteq \left[N\right]$ to denote a subset of coincidental indices. The complementary set (where indices mustn't coincide) will be denoted by $\bar{\mathsf{K}}=\left[N\right] \setminus \mathsf{K}$. For the squared trace inner product we then obtain
\begin{align*}
\mathrm{tr} \left( \hat{\sigma} \hat{\sigma}'\right)^2
=& \sum_{\mathsf{K} \subseteq \left[N\right]} 25^{|\mathsf{K}|}  \mathbf{1} \Big( \bigwedge_{k \in \mathsf{K}} \left\{ i_k = j_k \right\} \bigwedge_{\bar{k} \in \bar{\mathsf{K}}} \left\{i_{\bar{k}}\neq j_{\bar{k}}\right\}\Big) \\
=& \sum_{\mathsf{K} \subseteq \left[N \right]} 25^{|\mathsf{K}|} \left( \sum_{\mathsf{T} \subseteq \bar{\mathsf{K}}} (-1)^{|\mathsf{T}|} \mathbf{1} \Big( \bigwedge_{u \in \mathsf{K} \cup \mathsf{T}} \left\{i_u = j_u \right\} \Big) \right) \\
=& \sum_{\mathsf{U} \subseteq \left[N\right]} \left( \sum_{\mathsf{T} \subseteq \mathsf{U}} (-1)^{|\mathsf{T}|} 25^{|\mathsf{U}|-|\mathsf{T}|} \right) \mathbf{1} \Big( \bigwedge_{u \in \mathsf{U}} \left\{ i_u = j_u \right\} \Big).
\end{align*}
In the second line, we have re-expressed conditions for non-coincidence ($\{i_{\bar{k}} \neq j_{\bar{k}}$) as linear combinations of coincidences on larger subsystems ($\mathsf{K} \cup \mathsf{T}$ with $\mathsf{T} \subseteq \bar{\mathsf{K}}$). The last line follows from introducing the union $\mathsf{U}=\mathsf{K} \cup \mathsf{T}$ and rewriting $\mathsf{K}$ as $\mathsf{K} \setminus \mathsf{T}$. The inner sum over subsets $\mathsf{T} \subseteq \mathsf{U}$ has no effect on the indicator function. The size of such sets ranges from $T=|\mathsf{T}|=0$ up to $T=|\mathsf{T}|=|\mathsf{U}|$ and for each $T$, there are $\binom{|\mathsf{U}|}{T}$ subsets of that size. For a fixed set $\mathsf{U}$, we therefore obtain
\begin{equation*}
\sum_{\mathsf{T} \subseteq \mathsf{U}} (-1)^{|\mathsf{T}|} 25^{|\mathsf{U}|-|\mathsf{T}|} = \sum_{T=0}^{|\mathsf{U}|} \binom{|\mathsf{U}|}{T} (-1)^t 25^{|\mathsf{U}|-T} = \left(25-1\right)^{|\mathsf{U}|} = 8^{|\mathsf{U}|} \times 3^{|\mathsf{U}|},
\end{equation*}
which considerably simplifies the entire function. Taking the expectation now produces
\begin{align*}
\mathbb{E} \left[ \mathrm{tr} \left( \hat{\sigma} \hat{\sigma}'\right) \right]
=& \sum_{\mathsf{U} \subseteq \left[N\right]} 8^{|\mathsf{U}|} \times 3^{|\mathsf{U}|} \mathbb{E} \left[ \mathbf{1} \Big( \bigwedge_{u \in \mathsf{U}} \left\{ i_u = j_u \right\} \Big) \right] 
=  \sum_{\mathsf{U} \subseteq \left[N\right]} 8^{|\mathsf{U}|} \times 3^{|\mathsf{U}|} \mathrm{Pr} \left[ \bigwedge_{u \in \mathsf{U}} \left\{ i_u = j_u \right\} \right]
\end{align*}
and we can use Lemma~\ref{lem:coincidence} to complete the argument. Indeed, $\mathrm{Pr} \left[ \bigwedge_{u \in \mathsf{U}} \left\{i_u = j_u \right\} \right] \leq 3^{-|\mathsf{U}|}$ ensures
\begin{align*}
\mathbb{E} \left[ \mathrm{tr} \left( \hat{\sigma} \hat{\sigma}'\right) \right] \leq \sum_{\mathsf{U} \subseteq [N]} 8^{|\mathsf{U}|}
= \sum_{U=0}^N \binom{N}{U} 8^U = (8+1)^N = 9^N.
\end{align*}
This is the advertised bound for the variance of $N$-qubit purity estimators.
\end{proof}

Finally, we provide the proof for the bound on coincidental SIC POVM measurement outcomes.

\begin{proof}[Proof of Lemma~\ref{lem:coincidence}]
For simplicity, we assume that the subset $\mathsf{K}=\left\{1,\ldots,K\right\} \subseteq \left[N\right]$ encompasses the first $K=|\mathsf{K}|$ qubits. The general case works analogously, but notation becomes somewhat cumbersome.
We perform two independent single-qubit SIC POVM measurements on (two copies of) a $N$-qubit quantum state $\rho$. The probability of getting $K=|\mathsf{K}|$ particular outcomes only depends on the reduced density matrix $\rho_{\mathsf{K}}=\mathrm{tr}_{\neg \mathsf{K}}(\rho)$ of the relevant qubit subset:
\begin{equation*}
\mathrm{Pr} \left[ i_{1} \cdots \ldots,i_{K} | \rho_{\mathsf{K}} \right] 
= 2^{-K} \langle \psi_{i_1},\ldots,\psi_{i_K} | \rho_{\mathsf{K}} |\psi_{i_1},\ldots,\psi_{i_K} \rangle \quad \text{for $i_1,\ldots,i_K \in \left\{1,2,3,4\right\}$.}
\end{equation*}
This observation allows us to rewrite the probability for $K$ coincidental measurement outcomes as
\begin{align*}
\mathrm{Pr} \left[ \bigwedge_{k=1}^K \left\{i_k = j_k \right\}  \right]
=& \sum_{i_1=1}^4 \cdots \sum_{i_K=1}^4 \mathrm{Pr} \left[ i_1 \cdots i_K | \rho_{\mathsf{K}} \right]^2
= \frac{1}{4^K} \sum_{i_1=1}^4 \cdots \sum_{i_K=1}^4 \langle \psi_{i_1},\ldots,\psi_{i_K} | \rho_{\mathsf{K}} |\psi_{i_1},\ldots,\psi_{i_K} \rangle^2
\end{align*}
At this point it is helpful to decompose (one) $\rho_{\mathsf{K}}$ into a linear combination of tensor products, e.g. $\rho_{\mathsf{K}} = \sum_{W_1,\ldots,W_K} r (W_1,\ldots,W_K) W_1 \otimes \cdots \otimes W_K$. 
Doing so allows us to rewrite 
\begin{align*}
\mathrm{Pr} \left[ \bigwedge_{k=1}^K \left\{i_k = j_k \right\}  \right]
=& 2^{-K} \mathrm{tr} \left( \rho_{\mathsf{K}} \sum_{W_1,\ldots,W_K} r(W_1,\ldots,W_K) \left( \frac{1}{2} \sum_{i_1=1}^4 |\psi_{i_1} \rangle \! \langle \psi_{i_1}| \langle \psi_{i_1} | W_1 |\psi_{i_1} \rangle \right) \otimes \cdots \otimes \left( \frac{1}{2} \sum_{i_K=1}^4 | \psi_{i_K} \rangle \! \langle \psi_{i_K}| \langle \psi_{i_K} | W_K |\psi_{i_K} \rangle \right) \right) \\
=& 2^{-K} \mathrm{tr} \left( \rho_{\mathsf{K}} \sum_{W_1,\ldots,W_K} r(W_1,\ldots,W_K) \mathcal{D}_{1/3} (W_1) \otimes \cdots \otimes \mathcal{D}_{1/3}(W_K) \right), \\
=& 2^{-K} \mathrm{tr} \left( \rho_{\mathsf{K}} \mathcal{D}_{1/3}^{\otimes K} (\rho_{\mathsf{K}}) \right),
\end{align*}
as advertised.
Here, we have used the 2-design property~\eqref{eq:2design} of SIC POVMs, more precisely:$\sum_{i=1}^4\frac{1}{2} |\psi_i \rangle \! \langle \psi_i| \langle \psi_i| A |\psi_i \rangle = \mathcal{D}_{1/3}(A)$. 
To get the state-independent upper bound, we note that each depolarizing channel is a linear combination between the identity channel ($\mathcal{I}(A)=A$) and the projection onto the identity matrix ($\mathcal{T}(A)=\mathrm{tr}(A) \mathbb{I})$: $\mathcal{D}_{1/3}= \frac{1}{3} \left( \mathcal{I} + \mathcal{T} \right)$. We also drop the subscript $\mathsf{K}$ in $\rho$ to declutter notation somewhat: $\rho_{\mathsf{K}} \mapsto \rho$. Then,
\begin{align*}
2^{-K} \mathrm{tr} \left( \rho \mathcal{D}_{1/3}^{\otimes K}(\rho) \right)
=& 2^{-K} \mathrm{tr} \left( \rho 3^{-K} \left( \mathcal{I} + \mathcal{T}\right)^{\otimes K} (\rho) \right) 
= \frac{1}{3^K} 2^{-K}\sum_{T \subseteq \left\{1,\ldots,s\right\}} \mathrm{tr} \left( \rho \mathcal{T}_T \otimes \mathcal{I}_{\bar{T}} (\rho) \right) \\
=& 3^{-K} \left(2^{-K} \sum_{\mathsf{T} \subseteq \left\{1,\ldots,K\right\}} \mathrm{tr}\left( \rho \; \rho_{\mathsf{T}} \otimes \mathbb{I}_{\bar{\mathsf{T}}} \right) \right) = 3^{-K} \left( 2^{-K} \sum_{\mathsf{T} \subseteq \left\{1,\ldots,s\right\}} \mathrm{tr} \left( \rho_{\mathsf{T}}^2 \right) \right).
\end{align*}
The remaining bracket averages over all subsystem purities $\mathrm{tr}(\rho_T^2)$. Each of them obeys $\mathrm{tr}(\rho_{\mathsf{T}}^2) \leq 1$ and there are a total of $2^K$ of them (a finite set of size $K$ has $2^K$ subsets). Upper bounding each of them by $1$ produces the advertised display:
\begin{align*}
2^{-K} \mathrm{tr} \left( \rho \mathcal{D}_{1/3}^{\otimes K}(\rho) \right)=  3^{-K}\left( 2^{-K} \sum_{\mathsf{T}\subseteq \left\{1,\ldots,s\right\}} \mathrm{tr} \left( \rho_{\mathsf{T}}^2 \right) \right) \leq 3^{-K}.
\end{align*}
\end{proof}

\end{document}